\theoremstyle{plain}
\newtheorem{theorem}{Theorem}
\newenvironment{customlem}[1]
{\innercustomlem}
{\endinnercustomlem}
\crefname{innercustomlem}{Lemma}{Lemmas}
\crefname{innercustomthm}{Theorem}{Theorems}
\crefname{innercustomprop}{Proposition}{Propositions}
\newenvironment{customobs}[1]
{\innercustomobs}
{\endinnercustomobs}
\crefname{innercustomobs}{Observation}{Observations}
\crefname{innercustomobs}{Corrolary}{Corrolaries}
\newtheorem{corollary}[theorem]{Corollary}
\newtheorem{lemma}[theorem]{Lemma}
\newtheorem{assumption}[theorem]{Assumption}
\newtheorem{conjecture}[theorem]{Conjecture}
\newtheorem{observation}[theorem]{Observation}
\crefname{observation}{Observation}{Observations}
\theoremstyle{definition}
\newtheorem{definition}[theorem]{Definition}
\newtheorem{example}[theorem]{Example}
\theoremstyle{remark}
\declaretheoremstyle[%
  spaceabove=-3pt,%
  spacebelow=4pt,%
  headfont=\normalfont\itshape,%
  postheadspace=1em,%
  qed=\qedsymbol%
]{betterstyle} 
\declaretheoremstyle[%
  spaceabove=-6pt,%
  spacebelow=4pt,%
  headfont=\normalfont\itshape,%
  postheadspace=1em,%
  qed=\qedsymbol%
]{noskipstyle}
\declaretheorem[name={Proof},style=proof,unnumbered,
]{prf}
\declaretheorem[name={Proof},style=noskipstyle,unnumbered,
]{pr}
\renewcommand{\vec}[1]{\bar{#1}}
\newcommand{\thinexists}[1]{\exists{\scalebox{0.9}{$#1$}}\,}
\newcommand{\Exists}[1]{\exists{\scalebox{0.8}{$#1$}}.\,}
\newcommand{\Forall}[1]{\forall{\scalebox{0.8}{$#1$}}.\,}
\newcommand{\function}[1]{\mathit{#1}}
\newcommand{\predicate}[1]{\mathtt{#1}}
\newcommand{\apred}{\predicate{A}}
\newcommand{\arity}[1]{\function{ar}(#1)}
\newcommand{\structure}[1]{\mathbb{#1}}
\newcommand{\instance}{\structure{I}}
\newcommand{\inst}{\instance}
\newcommand{\adom}[1]{\function{adom}(#1)}
\newcommand{\singleton}[1]{\{#1\}}
\newcommand{\pair}[1]{(#1)}
\newcommand{\vk}{\smash{\vec{k}}}
\newcommand{\vx}{\vec{x}}
\newcommand{\vy}{\vec{y}}
\newcommand{\vz}{\vec{z}}
\newcommand{\vt}{\vec{t}}
\newcounter{rscounter}
\newcounter{dbcounter}
\newcommand{\rs}{\ruleset}
\newcommand{\iffi}{\textit{iff} }
\newcommand{\ruleapply}{\function{ruleApply}}
\newcommand{\apply}{\function{apply}}
\newcommand{\head}[1]{\function{head}(#1)}
\newcommand{\body}[1]{\function{body}(#1)}
\newcommand{\nats}{\mathbb{N}}
\renewcommand{\rs}{\ensuremath{\mathcal{T}}\xspace}
\definecolor{salmon}{HTML}{b35470}
\newcommand{\allterm}{\mathrm{CT}_{\forall}}
\newcommand{\allallterm}{\mathrm{CT}_{\forall\forall}}
\newcommand{\abstractatoms}{\mathfrak{A}}
\newcommand{\aamaxint}{k_{\max}}
\newcommand{\treesymbol}{\theforest}
\newcommand{\msotreesymbol}{\theforest}
\newcommand{\forestsymbol}{\mathbb{F}}
\newcommand{\addresses}{\mathbb{A}}
\newcommand{\theforestrun}[1]{\theforest_{#1}}
\newcommand{\deriv}{\mathcal{M}}
\newcommand{\derivp}{\mathcal{N}}
\newcommand{\derriv}{\mathcal{R}}
\newcommand{\demriv}{\mathcal{M}}
\newcommand{\depriv}{\mathcal{P}}
\newcommand{\derivstep}[1]{\mathbb{M}_{#1}}
\newcommand{\derivpstep}[1]{\mathbb{N}_{#1}}
\newcommand{\derrivstep}[1]{\mathbb{R}_{#1}}
\newcommand{\demrivstep}[1]{\mathbb{M}_{#1}}
\newcommand{\derivtrig}[1]{\pi^\deriv_{#1}}
\newcommand{\derivforest}{\derivstep{}}
\newcommand{\derrivforest}{\derrivstep{}}
\newcommand{\deprivforest}{\mathbb{P}}
\newcommand{\agraph}{\mathbb{G}}
\newcommand{\adr}[1]{\langle#1\rangle}
\newcommand{\absadr}[1]{[#1]}
\newcommand{\theabstree}{\msotreesymbol}
\newcommand{\thetree}{\treesymbol}
\newcommand{\thechase}{\adr{\theforest}}
\newcommand{\theforest}{\forestsymbol}
\newcommand{\presuperior}{\textcolor{gray}{\text{\raisebox{-0.1pt}{\resizebox{0.95em}{0.6em}{$\blacktriangledown$}}}}}
\newcommand{\absapply}{\function{abs\_apply}}
\newcommand{\flower}{P}
\newcommand{\defriv}{\mathcal{P}}
\newcommand{\infinite}{\mathsf{infinite}}
\newcommand{\isFlower}{\mathsf{infPath}}
\newcommand{\earliereq}{\sqsubseteq}
\newcommand{\earlier}{\sqsubset}
\newcommand*{\blocks}{\mathsf{blocks}}
\newcommand*{\pblocks}{\mathsf{\depriv\text{-}blocks}}
\newcommand{\childrenamount}{2}
\newcommand{\childrendots}{
	\ifnum \childrenamount=2
	,
	\else
	,\dots,
	\fi
}
\newcommand{\blocker}{\function{blockers}}
\newcommand{\adresy}{\inst{\mathbb A}^*}
\newcommand{\nN}{{n\in {\mathbb N}}}
\newcommand{\lesspretty}{\preccurlyeq}
\newcommand{\child}[1]{#1\text{-}\mathsf{child}}
\newcommand{\ppath}{\deprivforest^{\textnormal{\tiny 0}}}
\newcommand{\btr}{btr}
\newcommand{\ancestralPath}{\mathsf{ancPath}}
\newcommand{\preSet}{\mathsf{bfr}}
\newcommand{\freeSet}{\mathsf{any}}
\newcommand{\areTwins}{\mathsf{twins}}
\newcommand{\bt}[1]{[\![#1]\!])}
\renewcommand{\bt}[1]{(#1)}
\newcommand{\fragileSet}{\mathsf{fragile}}
\newcommand{\teamtriple}{\mathsf{team}}
\newcommand{\safe}{\mathsf{safe}}
\newcommand{\harma}{\mathsf{harm}_1}
\newcommand{\harmb}{\mathsf{harm}_2}
\title{About the Multi-Head Linear Restricted Chase Termination}
\author{%
Lukas Gerlach$^1$\and
Lucas Larroque$^2$\and
Jerzy Marcinkowski$^3$\and
Piotr Ostropolski-Nalewaja$^3$ \\
\affiliations
$^1$Knowledge-Based Systems Group, TU Dresden, Dresden, Germany\\
$^2$Inria, DI ENS, ENS, CNRS, PSL University, Paris, France\\
$^3$University of Wrocław, Poland\\
\emails
lukas.gerlach@tu-dresden.de,
lucas.larroque@inria.fr,
\{jma, postropolski\}@cs.uni.wroc.pl
}
\begin{document}

\maketitle

\begin{abstract}
The chase is a ubiquitous algorithm in database theory. However, for existential rules (aka tuple-generating dependencies), its termination is not guaranteed, and even undecidable in general. The problem of termination becomes particularly difficult for the restricted (or standard) chase, for which the order of rule application matters. Thus, decidability of restricted chase termination is still open for many well-behaved classes such as linear or guarded multi-headed rules. We make a step forward by showing that all-instances restricted chase termination is decidable in the linear multi-headed case.
\end{abstract}

\section{Introduction}

The \emph{chase} procedure was defined in late 1970s, and has been considered one of
the most fundamental database theory algorithms since then. It has been applied
to a wide spectrum of problems, including
query answering and
containment under constraints (\cite{AhSU79} and
\cite{CaGK13} among other examples)
computing solutions to data exchange problems \cite{FKMP05}
 and many others (see \cite{DeNR08} for details).
 And this is not just about theory: in several contexts
 the chase procedure has recently been
 efficiently implemented, see for example \cite{BKMM17}, \cite{Urban18}, {\cite{IGMSK2024}}.

 The reason for this ubiquity, as elegantly explained in \cite{DeNR08}, is that, for an
  input consisting of a database instance $\inst$
and  a set $\rs$ of existential rules (also known as tuple-generating dependencies, or TGDs),  \emph{chase} returns a finite instance  $Chase(\inst,\rs)$
being a universal model of $\inst$ and $\rs$, i.e., a model that can be homomorphically
embedded into every other model of $\inst$ and $\rs$. Intuitively,  $Chase(\inst,\rs)$  is a model
of $\inst$ and $\rs$ which  only contains information
already implicitly present in $\inst$ and $\rs$.

\noindent
But it only returns such a finite instance when it {\bf terminates}.

 \subsection{Chase in Five or Six Paragraphs}\label{sec:5-paragraps}

As we said, the input of  \emph{chase} comprises a database instance $\inst$ and a set $\rs$  of rules, which are formulas of the form:
 \begin{equation} \Forall{\vx \vy} \Phi(\vx, \vy) \to \Exists{\vz} \Psi(\vy, \vz) \label{eq:tgd} \end{equation}
where $\Phi$ and $\Psi$ are conjunctions
of atoms, known as the \emph{body} and the \emph{head} of this rule.

{At each execution step, the \emph{chase} extends the instance $\inst$ by adding new atoms. Specifically, \emph{whenever} there is a rule $\rho$ in $\rs$, as in (\ref{eq:tgd}), and tuples of terms $\bar{a}$ and $\bar{b}$ such that $\Phi(\bar{a}, \bar{b})$ holds in the current instance, a tuple $\bar{c}$ of fresh terms (called ``nulls'') is created, and new relational atoms are added to ensure that $\Psi(\bar{a}, \bar{c})$ also holds.\footnote{The tuple $\bar{z}$ can be empty if $\rho$ is a Datalog rule.} This process continues until a fixpoint is reached, yielding a model of both $\inst$ and $\rs$.}

There are many different ways to formalize this idea. In particular, there are {two} ways  the above \emph{whenever} can be interpreted, leading to two {fundamental variants} of the \emph{chase}.

 One variant is the
 \emph{restricted chase}, which is  lazy  - it only adds new atoms if  $\Phi(\bar a,\bar b)$  is true in the current structure, but $\exists \bar z \; \Psi(\bar a,\bar z)$ is {not}. \emph{Oblivious chase} is an eager version: it
always adds a new {witness $\Psi(\bar a, \bar c)$ - where $\bar c$ is a tuple of fresh terms} - for each tuple
$\bar a, \bar b$ such that  $\Phi(\bar a, \bar b)$ is true.

The order of execution does not matter for \emph{oblivious chase}. {No matter the order in which the witnesses are added,} we
eventually get the same structure.
But \emph{restricted chase} is non-deterministic -- different execution orderings in which tuples and rules are picked can lead
to different fixpoints.

Clearly,  \emph{restricted chase}, in general, builds
smaller instances than the oblivious one. It is very easy to
devise an example where, according to the rules of \emph{restricted chase}, no new atoms would be ever created,
 while  \emph{oblivious chase} builds an
infinite instance. Following \cite{GMP23}
consider  for example   $\inst = \{R(a,b)\}$ with   $\forall x,y (R(x,y) \to \exists z  R(x, z))$ as  the only
rule in $\rs$. \emph{Restricted chase} will detect
that $\inst$ already satisfies the rule, while \emph{oblivious chase} will not terminate and will build an
 instance $\{R(a,b), R(a, v_1), R(a, v_2), \ldots \}$.

\subsection{The Issue of Termination}

As we said above, for \emph{chase} to be useful, one needs to be sure that it terminates.
The termination normally considered in this context is  all-instance termination\footnote{Which may be seen as  a ``static analysis'' problem, since we analyze
$\rs$ before we know the data.}:
we want to know, for given $\rs$, if {\em chase} terminates on all instances $\inst$. This question again comes in two versions: all-instance termination of \emph{oblivious chase} or  all-instance termination of \emph{restricted chase}, which means that
\emph{restricted chase terminates} regardless of the $\inst$ and regardless of the execution order.

Surprisingly, only as late as in 2014 \cite{GM14} it was proven that{, in general, all-instance termination - both in the oblivious and restricted variants - is undecidable}. A stronger version of this result is presented in \cite{BFN}, where undecidability is shown even for relational schemas only containing binary relations.

On the other hand, really a lot of work was done to identify sufficient conditions on the rules 
which  guarantee  all-instance termination. The most well-known example of such condition is
\emph{weak-acyclicity}  \cite{FKMP05}. But there are numerous other examples too  ~\cite{DeTa03,FKMP05,DeNR08,GHKK+13,GrST11,Marn09,MeSL09,GC2023}.


The undecidability proofs given in~\cite{GM14,BFN,CGLT2025} construct  sets of rather complicated TGDs which go very far beyond the most popular
well-behaved classes of rules, like linear, guarded or sticky. This leads to a natural question, whether while undecidable in general, all-instance termination could be decidable
for sets of rules from such  classes\footnote{Notice that, unlike the ``sufficient conditions'' mentioned above, the well-behaved classes do not imply all-instance termination. They just, hopefuly, make all-instance termination decidable.}.

This question turned out to be relatively easy to answer for the case of \emph{oblivious chase}. For guarded TGDs, the problem is 2EXPTIME-complete, and becomes PSPACE-complete for linear TGDs~\cite{CaGP15}. The sticky case was addressed in~\cite{CaPi19}, where it is shown that the problem is PSPACE-complete.
Things are much more subtle in the case of \emph{restricted chase}, which is also more interesting in this context, because, as we explained,
\emph{restricted chase} is more likely to terminate than \emph{oblivious chase}.
Things are much more subtle because, as we explained before, in this case we need to deal
with all possible execution orderings, each of them leading to different result. But, despite of the difficulties, some progress was made. First it was shown that the problem is decidable for single-head linear\footnote{A rule like in (\ref{eq:tgd}) is linear if  $\Phi$ consists of a single atom and it is
single-head if $\Psi$ only comprises one atom.} TGDs~\cite{LMTU19}. Then decidability was proven, in \cite{GMP23}, for
guarded and sticky classes of TGDs, but also only in the single-head case.

 \subsection{Fairness. Why Is the Single-Head Case Easier?}
 One important nuance that we skipped in our brief presentation of \emph{chase} in  Section \ref{sec:5-paragraps} is fairness.
For \emph{restricted chase} to really create a model of $\inst$ and $\rs$, it must at some point apply each possible rule
  to every eligible $\Phi(\bar a,\bar b)$ (unless, of course, $\exists \bar z \; \Psi(\bar a,\bar z)$ gets satisfied earlier). Such execution orderings, when no
  rule application is starved, are called \emph{fair}. Let us illustrate it with two examples:

  \begin{example}
  Let $\rs$ consist of two rules:
  \begin{align*}
 \forall x, y \; R(x,y)&\;\;\to\;\; \exists z \;R(y,z)  \tag{$\rho_1$}\\
 \forall x, y \; R(x,y)&\;\;\to\;\;  R(y,y)  \tag{$\rho_2$}
  \end{align*}
 and let $\inst$ consist of a single atom $R(a_0,a_1)$.

 Then we can execute $\rho_1$ creating a new atom $R(a_1,a_2)$, then apply $\rho_1$ to this atom, creating
 $R(a_2,a_3)$ and so on, building an infinite instance. Such execution ordering  would not be fair: application
 of $\rho_2$ to  $R(a_0,a_1)$, and also to each of the created atoms, would starve.

 On the other hand, we could start from applying $\rho_2$ to $R(a_0,a_1)$, creating $R(a_1,a_1)$. In an instance comprising these
 two atoms both rules would be already satisfied, and \emph{restricted chase} would terminate at this point.

 But there is also another execution ordering here, which is fair and which leads to an infinite chase: each time, after you use $\rho_1$ to
 create $R(a_n,a_{n+1})$, apply $\rho_2$ to $R(a_{n-1},a_{n})$ and produce $R(a_{n},a_{n})$. \qed
\end{example}

 If the above example gives you the impression that analyzing fairness is complicated, it is the right impression. And,
 in  decidability proofs, like in \cite{LMTU19,GMP23} at some point you need to show that, if your termination criterion fails,
then there exists an instance leading to an infinite fair derivation.

For this reason, a  very important tool in both \cite{LMTU19} and \cite{GMP23} are Fairness Lemmas. In \cite{LMTU19} it is shown that, for $\rs$ being
a set of
linear single-head rules, if there exists an infinite \emph{restricted chase} then there exists also a fair one. In \cite{GMP23} the authors show
that it is actually true for {\bf any} set of {\bf single-head rules}. This allows them to just forget about the fairness condition.

But, as also noticed in both  cited papers,  this lemma does not hold for rules, even linear, which are multi-head:
{
 \begin{example}\label{ex:drugi}
  Let $\rs$ consist of two rules:
  \begin{align*}
      \forall x,z,u\; R(x,u,u) \;\;&\to\;\; \exists y\; R(x,y,u) \wedge R(y,u,u) \tag{$\gamma$}\\
      \forall y,z,u\; R(y,u,u) \;\;&\to\;\;  R(u,u,u) \tag{$\delta$}
  \end{align*}
 and let $\inst$ consist of a single atom $R(a,b,b)$. It is easy to see that there exists an infinite restricted derivation: rule $\gamma$ can be applied indefinitely to atoms of the form $R(t_n, b, b)$, where $t_0 = a$. However, such a derivation is not fair with respect to rule $\delta$, and it suffices to apply $\delta$ once to prevent $\gamma$ from ever being applied again.\qed
\end{example}}

For this reason, both the cited papers state decidability of restricted chase termination for multi-head rules as an open problem, whose solution would require new techniques. 

\subsection{Our Contribution}

Our contribution in this paper is
Theorem \ref{th:main}, saying that
all-instances restricted chase termination is decidable for rule sets consisting of linear multi-head rules.

To keep  the notations as light as possible,
our proof of Theorem \ref{th:main} is presented using:

\begin{assumption}\label{assumption:twoAtoms}
We will assume that all rule heads we consider contain exactly two atoms.
\end{assumption}

While this assumption is not free, as we do not have a procedure that normalizes rule sets into rules with exactly two atoms without impacting termination, the proof we present remains correct without this assumption, and no new arguments are needed.

\subsubsection*{Organization} In Sections \ref{sec:prelims}-\ref{sec:triggers} we introduce 
the basic notions and notations, which are in principle standard, but tailored to be useful in 
Sections \ref{sec:main-th}-\ref{sec:mso}. In Section \ref{sec:main-th} we formally state our main result, Theorem \ref{th:main}, and present an 
overview of the proof of this theorem. The proof can be in a natural way split into three parts, and the three parts are covered in Sections \ref{sec:mixed}-\ref{sec:mso}. Some proofs are deferred to the appendix.

\section{Preliminaries}\label{sec:prelims}


\smallskip
\noindent
Let $\mathfrak{P}$ be some finite set of \emph{predicates}, and let $\mathfrak{C}$, $\mathfrak{V}$, and $\mathfrak{N}$ be mutually disjoint, countably infinite sets of \emph{constants}, \emph{variables}, and \emph{labeled nulls} respectively.
The function $\function{ar}: \mathfrak{P} \to \mathbb{N}$ maps each predicate to its \emph{arity}. The \emph{arity} of $\mathfrak{P}$, denoted with $\arity{\mathfrak{P}}$, is the maximal arity of its predicates.
Elements from $\mathfrak{C} \cup \mathfrak{V} \cup \mathfrak{N}$ are called \emph{terms}. A term $t$ is \emph{ground} if $t \in \mathfrak{C} \cup \mathfrak{N}$.
An \emph{atom} is an expression of the form $P(\vt)$, where $P \in \mathfrak{P}$, $\vt$ is a list of terms and $\vert \vt \vert = \arity{P}$. Given an atom $\alpha$ we denote its  $i$-th term with $\alpha_{i}$.
A \emph{fact} is an atom with only ground terms. An \emph{instance} is a set of facts. A singleton instance is called \emph{atomic}.
The \emph{active domain} of a set (or a conjunction) $A$ of atoms is the set (denoted as $\adom{A}$) of terms appearing in atoms of $A$.

\smallskip
\noindent
\textbf{Existential Rules\;}
An FO formula of the form
$$\Forall{\vx \vy} \Phi(\vx, \vy) \to \Exists{\vz} \Psi(\vy, \vz)$$
is called an {\em existential rule},
where $\vx, \vy$ and $\vz$ are pairwise disjoint tuples of variables, and $\Phi$ and $\Psi$ are  conjunctions of atoms.
For a rule $\sigma$ as above, we denote $\Phi(\vx, \vy)$ as $\body{\sigma}$ and $\Psi(\vy, \vz)$ as $\head{\sigma}$.
In this paper we only  consider  {\em linear} rule sets, which means that the body of each rule contains at most one  atom.
The position in one of the atoms of a rule (both in the body or the head)  is called a {\em frontier position} if it
{contains} one of the variables from the tuple $\vy$.

Almost
\textbf{all the objects we define in Sections \ref{sec:prelims} - \ref{sec:triggers} depend on parameters: an instance $\inst$ and a rule set $\rs$.
We always assume that $\rs$ is a set of linear rules adhering to \cref{assumption:twoAtoms}.} This dependence is not reflected in the notation, as we do not believe this leads to misunderstandings.
%

\medskip
\noindent
\textbf{Homomorphisms.\;}
A homomorphism $h$ from a set of atoms $A$ to a set of atoms $B$ is a function from $\adom{A}$ to $\adom{B}$ such that for every atom $\alpha(\vx) \in A$ we have $\alpha(h(\vx)) \in B$. The notion of homomorphism naturally extends to queries, bodies, and heads of existential rules.

\smallskip
\noindent
\textbf{Adresses and Forests.\;} Now we introduce the language we will later use to address atoms in the (real) oblivious chase. Recall that
our rules are linear, so (intuitively) every atom $\alpha$ produced by \emph{chase} has a unique ``parent atom'', from whom $\alpha$ was created by some
rule of $\rs$, and the genealogy of such $\alpha$ can always be traced back to $\inst$.
%

\begin{definition}[Addresses]\label{def:address}
\begin{itemize}
\item Define the set ${\mathbb A}$ of {\em address symbols} as $\{ \rho_\iota: \rho\in \rs \text{ and } \iota\in \singleton{1,2}\}$. We always reserve the letter $\iota$ for an index ranging over the set $\singleton{1,2}$, and this is related to \cref{assumption:twoAtoms}.
\item  \emph{Address} is a word $\omega u\in \adresy$, i.e. a word whose first element is an atom from $\inst$, followed by a sequence of  address symbols. {We will usually use letters $u,w,v$, and sometimes also $s,r$, to denote words from $\inst{\mathbb A}^*$ or ${\mathbb A}^*$.}
\item For $w\in \adresy  $ and $a,b\in {\mathbb A}$ we say that $wa$ is an 
$a$-\emph{child} of $w$ (or just \emph{child of} $w$), and that $wa$ and $wb$ are \emph{siblings}.
\item By an \emph{infinite path} we mean an infinite sequence of adresses $w_0,w_1,w_2,\ldots$ such that $w_0\in \inst$ and $w_{n+1}$ is a child of $w_n$, for each $n$.

\end{itemize}
\end{definition}

\begin{definition}[Forests]\label{def:forest}
{A \emph{forest} is a set $\agraph$ of addresses such that}
{$\inst \subseteq \agraph \subseteq \inst{\mathbb A}^*$, and which is \emph{prefix-closed}, meaning that if $\omega uv\in \agraph$ for some $\omega\in \inst$ and $uv\in {\mathbb A}^*$ then also $\omega u\in \agraph$}.

If $\inst$ is a singleton we use the term \emph{tree} instead of \emph{forest}.
For a forest $\agraph$ and $u\in \agraph$ we use the notation $\agraph_u$ for the set $\singleton{uv \mid uv\in \agraph}$ of all the descendants of $u$ in $\agraph$. 
\end{definition}

\section{The Real Oblivious Chase}\label{sec:real-chase}

In this section, following \cite{GMP23}, we define the (real) \emph{oblivious chase}, a static object, that serves as an arena where restricted chase takes place.

\begin{definition}[Rule application]\label{def:rule-app}
Let $\alpha$ be an atom, let $\rho\in \rs$ be a rule, such that there exists a homomorphism $h$ from $\body{\rho}$ to $\alpha$, and let $u$ be an address. Then $\ruleapply(\rho, \alpha, u)$ is a pair of atoms obtained from $\head{\rho}$ by replacing each universally quantified variable $x$ of $\rho$ with $h(x)$ and each existentially quantified variable $z$ with a
labeled null of the form $z_u^\rho$. We denote the $\iota$-th atom of $\ruleapply(\rho, \alpha, u)$ with $\ruleapply_\iota(\rho, \alpha, u)$.
\end{definition}

The above definition allows us to precisely identify when each labeled null was created. It will be useful in \cref{sec:mixed}.

The following definition is one of the most important:

\begin{definition}
The \emph{representation of the oblivious chase} (or simply \emph{representation}) is a pair consisting of a forest $\theforest$ and a labeling function $\adr{\_}$  from  $\theforest$ to ground atoms. We define $\theforest$ and $\adr{\_}$ by a simultaneous induction:

\begin{itemize}
  \item if   $\omega \in \inst$ then   $\omega\in \theforest$, and $\adr{\omega} = \omega$;
  \item If $u\in\theforest$, $\rho$ is a rule of $\rs$, and there is homomorphism $h$ from $\body{\rho}$ to $\adr{u}$ then for
  each  $\iota$ it holds that $u\rho_\iota\in \theforest$  and $\adr{u\rho_\iota} = \ruleapply_\iota(\rho, \adr{u}, u)$.
\end{itemize}
\end{definition}

Note that this definition deliberately avoids discussing the notion of triggers and their application order. This is for a good reason: as we said before, the oblivious chase is defined statically, and serves as a scene for the restricted chase.


We will often need to talk about the set of atoms represented by a forest. Therefore, by slightly abusing notation:

\begin{definition}[Materialization]
Given a forest ${\mathbb{G}}\subseteq \theforest$ we define $\adr{\mathbb{G}}$ as $\{\adr{u}: u\in \mathbb{G}\}$.
\end{definition}

This allows us to cleanly define the oblivious chase, and later, the restricted chases.

\begin{definition}
The \emph{oblivious chase} is defined as $\adr{\theforest}$.
\end{definition}

Chase, and restricted chase in particular, is all about atoms. But each atom of interest is somewhere in $\adr{\theforest}$ and {\bf the only way we will ever refer to atoms will be by their addresses} in $\theforest$. 

Notice  that  the labeling function  $\adr{\_}$ does not need to be one-to-one. {To visualize this, consider Example 2 and addresses of the form $(R(a, b, b))(\gamma_2)^*\delta_1$. Each such address is labeled with $R(b,b,b)$}. For this reason  \cite{GMP23} think of  $\adr{\theforest}$ as a multiset and call it \emph{the real oblivious chase}.

\section{Triggers and Chase Derivations}\label{sec:triggers}

A trigger is an address where an atom lives, and a rule which can be applied to this atom:

\begin{definition}[Triggers]
\begin{itemize}
\item A trigger is a pair $\pair{\rho, u}$ consisting of a rule $\rho \in \rs$ and of a $u\in \theforest$ such that
there exists a (unique) homomorphism from $\body{\rho}$ to $\adr{u}$.

\item
For a trigger $\pi$, we denote as $\apply(\pi)$ the set of addresses $\singleton{u\rho_1, u\rho_2}$ and with $\apply_\iota(\pi)$ the address $u\rho_\iota$.

\item
For a forest $\agraph\!\subseteq\!\theforest$, and a trigger $\pi \!=\! \pair{\rho, u}$, 
we say that $\pi$:
\begin{itemize}
  \item \emph{appears} (or \emph{is}) \emph{in} $\agraph$
    \;\iffi $u \in \agraph$. We also write $\pi \in \agraph$.
  \item is \emph{active} in $\agraph$ \;\iffi $\pi \in \agraph$ and ${\apply(\pi)} \not\subseteq \agraph$.
  \end{itemize}
\end{itemize}
\end{definition}

{Note that every trigger (by definition) appears in $\theforest$. However, not every pair $\pair{\rho, u}$ qualifies as a trigger. For such a pair to be a trigger, there must exist a homomorphism from the body of $\rho$ to $\adr{u}$. Since the bodies of rules consist of single atoms, at most one such homomorphism can exist.}


The very idea of restricted chase is that blocked triggers are never applied:

\begin{definition}[{Trigger blocking}]\label{def:blocking}{
  The pair $\bt{v_1,v_2}\in\agraph^2$ is a \emph{blocking team} for a trigger $\pair{\rho, u}$ in $\agraph\subseteq\theforest$ \iffi there is a homomorphism $h$ such that $h(\adr{u})=\adr{u}$ and $h(\adr{u\rho_\iota})=\adr{v_\iota}$ for all $\iota$. A trigger that has a blocking team is \emph{blocked}.}
\end{definition}

In other words, $\bt{v_1,v_2}$ is a blocking team for $\pi$ if atoms $\adr{v_1},\adr{v_2}$ can witness that the existential rule
$\rho$ is  satisfied, when applied to the atom  $\adr{u}$. The following may be trivial to see, but is important to realize:

\begin{observation}\label{obs:trivial}
Let $\rho$ be a rule of the form
$$\Forall{\vx \vy} \phi(\vx, \vy) \to \Exists{\vz} \psi_1(\vy, \vz)\wedge \psi_2(\vy, \vz)$$ for some
atoms $\phi, \psi_1, \psi_2$.
Then a pair $\bt{v_1,v_2}$ is a blocking team for trigger $\pi = \pair{\rho, u}$ if and only if, for each $\iota,\iota'$:
\begin{enumerate}[(i)]
    \item the predicate symbols of  $\adr{v_\iota}$ and of $\adr{u\rho_\iota}$ are equal;
    \item if $j$ is a frontier position in $\psi_\iota$ then $\adr{u\rho_\iota}_j= \adr{v_\iota}_j$;
    \item if $j$ is any  position in $\psi_\iota$ and $j'$ is any  position in $\psi_{\iota'}$ and $\adr{u\rho_\iota}_j=\adr{u\rho_{\iota'}}_{j'}$ then also $\adr{v_\iota}_j=\adr{v_{\iota'}}_{j'}$.
\end{enumerate}
\end{observation}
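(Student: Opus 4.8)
The statement is a direct unfolding of \cref{def:blocking}, so the plan is to translate the single condition ``there is a homomorphism $h$ with $h(\adr{u})=\adr{u}$ and $h(\adr{u\rho_\iota})=\adr{v_\iota}$'' into constraints on individual positions. The key structural observation to establish first is that, because $\rho$ is linear with head $\psi_1\wedge\psi_2$, every term of $\adr{u\rho_\iota}$ is either the image of a frontier variable of $\vy$ --- in which case it lies in $\adom{\adr{u}}$ --- or one of the fresh nulls $z_u^\rho$ introduced by $\ruleapply$. Thus $\adom{\adr{u}\cup\set{\adr{u\rho_1},\adr{u\rho_2}}}$ is the disjoint union of $\adom{\adr{u}}$ and the set of these nulls; any candidate $h$ is forced to be the identity on the former (by $h(\adr{u})=\adr{u}$), while its values on the nulls are exactly what we are free to choose. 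I would set up this dictionary and then prove the two implications.

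For the forward direction, assume such an $h$ exists. Since a homomorphism preserves the predicate symbol of each atom and $h(\adr{u\rho_\iota})=\adr{v_\iota}$, condition (i) is immediate. For (ii), if $j$ is a frontier position in $\psi_\iota$ then $\adr{u\rho_\iota}_j\in\adom{\adr{u}}$, so $h$ fixes it, whence $\adr{v_\iota}_j=h(\adr{u\rho_\iota}_j)=\adr{u\rho_\iota}_j$. For (iii), if $\adr{u\rho_\iota}_j=\adr{u\rho_{\iota'}}_{j'}$ then applying the function $h$ to both sides gives $\adr{v_\iota}_j=\adr{v_{\iota'}}_{j'}$; so (iii) is just the statement that $h$ is well defined as a function.

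For the backward direction, assume (i)--(iii) and construct $h$ explicitly: let $h$ be the identity on $\adom{\adr{u}}$, and for each null $z_u^\rho$ pick some occurrence of it, say at position $j$ of $\psi_\iota$, and set $h(z_u^\rho)=\adr{v_\iota}_j$. One then checks $h(\adr{u})=\adr{u}$ (identity) and $h(\adr{u\rho_\iota})=\adr{v_\iota}$ position by position, using (i) for the predicate, (ii) at frontier positions, and the definition of $h$ at the existential positions. The main --- indeed the only nontrivial --- point is that $h$ is well defined on the nulls: a single null $z_u^\rho$ may occur at several positions across $\psi_1$ and $\psi_2$, and the chosen value must not depend on the occurrence. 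This is exactly condition (iii), since any two occurrences at positions $j,j'$ satisfy $\adr{u\rho_\iota}_j=z_u^\rho=\adr{u\rho_{\iota'}}_{j'}$ and hence $\adr{v_\iota}_j=\adr{v_{\iota'}}_{j'}$. Freshness of the nulls guarantees the disjoint split of the domain used throughout, so no clash between a null and a frontier term arises, and with well-definedness settled the remaining verification is routine.
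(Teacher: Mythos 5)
Your proof is correct and is exactly the argument the paper leaves implicit: the paper states \cref{obs:trivial} without any proof, declaring it trivial, and the intended justification is precisely your unfolding of \cref{def:blocking} into position-wise constraints, using the split of the active domain into $\adom{\adr{u}}$ (where $h(\adr{u})=\adr{u}$ forces $h$ to be the identity) and the fresh nulls $z_u^\rho$ (where $h$ is free). You also correctly isolate the only non-routine point --- well-definedness of $h$ on a null occurring at several positions across $\psi_1$ and $\psi_2$, which is exactly what condition (iii) supplies --- so nothing is missing.
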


Intuitively, the observation says that atoms in the blocking team of $(\rho,u)$ must contain at least the same positive
 information as the atoms $\adr{u\rho_\iota}$ do. But they may
 contain more: they may have more equalities between terms, and they can have something
 non-anonymous (for example constants from $\inst$) on positions where $\adr{u\rho_\iota}$ have new anonymous terms. This makes them
 potentially more ``capable'' of producing new atoms, and also of blocking, than  $\adr{u\rho_\iota}$ are.

\newcommand{\III}{\mathbb I}

Now, we introduce the notion of derivation, which intuitively corresponds to a growing sequence of forests that results from a sequence of trigger applications.

\begin{definition}[Chase derivations]\label{def:chase-derivations}
A {\emph{chase derivation $\deriv$ (of the rule set $\rs$ over the instance $\inst$)}} is a (finite\footnote{Since we mainly consider infinite derivations, our notation is tailored for such derivations. In order to talk about finite derivations, $\mathbb N$ would need to be replaced with its initial segment.  } or infinite) sequence of triggers $\singleton{\derivtrig{n}}_\nN$ such that there exists a sequence of forests $\singleton{\derivstep{n}}_\nN$ called \emph{derivation steps} where:
\begin{itemize}
  \item $\derivstep{0} = \inst$;
  \item trigger $\derivtrig{n}$ is {active} in $\derivstep{n}$;
  \item $\derivstep{n + 1} = \apply(\derivtrig{n}) \cup \derivstep{n}$.
\end{itemize}
For a derivation $\deriv$, we denote the forest $\bigcup_{n\in\nats} \derivstep{n}$ by $\derivforest$. 

\end{definition}

Clearly, for each derivation  $\deriv$ there is  $\derivforest\subseteq \theforest$. 

\begin{definition}[Derivation Types]
Derivation $\deriv$ is:
\begin{itemize}
  \item \emph{oblivious} when $\derivforest=\theforest$;
  \item \emph{restricted} when no trigger $\derivtrig{n}$ is blocked in $\derivstep{n}$;
  \item \emph{fair} when each active trigger in $\derivforest$ is blocked in $\derivforest$.
\end{itemize}
\end{definition}
\noindent

Informally speaking, a fair restricted derivation is a sequence in which all active triggers eventually either get applied or become blocked. Notice also (still being informal) that all oblivious derivations are permutations of the same set of all triggers, and that such a permutation is an oblivious derivation whenever it satisfies the
obvious constraint that trigger $(\rho,u)$ may not precede the creation of $u$.

\section{Stating the Main Theorem}\label{sec:main-th}

$\allterm(\inst)$ is the class of rule sets $\rs$ such that all fair restricted derivations of $\rs$ over instance $\inst$ are finite.
$\allallterm$ is the class of rule sets that belong to $\allterm(\inst)$ for all finite instances $\inst$.

\begin{theorem}[Main Theorem]\label{th:main}
  Membership in $\allallterm$ for sets of multi-headed linear existential rules is decidable.
\end{theorem}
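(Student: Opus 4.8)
The plan is to decide \emph{non}-membership in $\allallterm$: that is, to decide whether there exists a finite instance $\inst$ admitting an infinite \emph{fair restricted} derivation, and to reduce this question to a decidable property of a single regular infinite tree built from $\rs$ alone. First I would exploit linearity to bound the instances that need to be inspected. Since every rule body is a single atom, the real oblivious chase $\adr{\theforest}$ decomposes into trees, one rooted at each database atom, and the tree rooted at an atom depends only on that atom up to renaming of terms. I would therefore aim to establish that it suffices to consider the finitely many \emph{critical} atomic instances $P(c_1,\dots,c_n)$ with $n=\arity{P}$ and pairwise distinct fresh constants: any unbounded growth of a derivation concentrates in a single tree, while the rest of the forest only enlarges the supply of potential blockers. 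The subtle point — which the core argument must control — is precisely the interaction between growth inside one tree and blocking using material from the rest of the forest.

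Second, I would make the arena finite-state. I would abstract each chase atom into its \emph{type}: its predicate, the equality pattern among its argument positions, and the marking recording which positions carry a database constant rather than a null (together with whatever relative-age information about nulls turns out to matter). There are boundedly many types, and by the inductive definition of the representation the children of a node and their types are determined by the type of that node; hence the abstracted chase tree over a critical atom is a \emph{regular} infinite tree. Crucially, by \cref{obs:trivial} the blocking-team relation is characterized purely equationally, so it descends to a relation on types. This is what brings the problem within reach of automata on infinite trees.

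Third — and this is the hard part — I would characterize the existence of an infinite fair restricted derivation by a condition on this regular tree. Because the Fairness Lemma fails in the multi-head setting (\cref{ex:drugi}), I cannot merely search for an infinite restricted derivation; I must exhibit an infinite path along which the forest grows, \emph{together with} a schedule that eventually blocks every trigger branching off the path, while keeping the chosen path-extending triggers unblocked throughout. The delicate obstacle is that atoms introduced to block side triggers can themselves spawn fresh active triggers, so one needs a construction — the ``mixed'' derivation — that interleaves path growth with blocking so that the process closes up fairly and never blocks the selected path. I would prove that such a schedule exists exactly when the regular tree satisfies a suitable positional pattern (an infinite path all of whose off-path subtrees are ``blockable'' by material reachable along the path), and that this equivalence faithfully captures fairness in both directions.

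Finally, I would express the characterizing condition as a Monadic Second-Order formula over the regular tree, quantifying over the infinite path and over the sets of nodes witnessing blocking teams and using the type-level blocking relation of \cref{obs:trivial}, and invoke Rabin's theorem on the decidability of MSO over regular (equivalently, infinite binary) trees to conclude that the property, and hence membership in $\allallterm$, is decidable. I expect the main obstacle to lie in the third step: showing that the failure of fairness-freeness for multi-head rules can nonetheless be tamed by the mixed derivation — that one can always reorganize an infinite run so as to block side triggers fairly without sacrificing the infinite path — and that the exact combinatorial residue of this argument is MSO-expressible over the finite-state abstraction.
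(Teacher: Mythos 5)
Your high-level architecture coincides with the paper's: reduce to trees over single atoms, compress node labels into finitely many abstract types, characterize the existence of an infinite fair restricted derivation by a positional condition on an infinite path, and decide that condition by MSO over infinite trees. Two of your concrete steps would nevertheless fail. First, the reduction to critical instances with \emph{pairwise distinct} constants is unsound for the restricted chase: linear rule bodies may require equalities between positions, so an atom with repeated constants can start an infinite fair derivation while every all-distinct atom is inert. For instance, with the single rule $R(x,x) \to \exists y\,(S(x,y) \wedge R(y,y))$, no rule applies to any atom with pairwise distinct constants, yet from $R(a,a)$ the restricted chase runs forever and fairly (no trigger is ever blocked). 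This is why the paper instead quantifies over \emph{all} pairwise non-isomorphic atoms $\omega$, i.e., over all equality patterns among positions, and builds one formula $\Omega_\omega$ per such atom.

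Second, your proposed characterization --- an infinite path all of whose off-path subtrees are ``blockable'' by material reachable along the path --- is not the correct combinatorial residue, because fairness never requires side triggers to be \emph{blocked}: a fair derivation may simply \emph{apply} them. With the rules $A(x) \to \exists y\,(A(y) \wedge B(x,y))$ and $B(x,y) \to \exists z\,(C(y,z) \wedge D(z))$, the side triggers on the $B$-atoms are never blockable, yet an infinite fair restricted derivation exists (apply every trigger in breadth-first order; nothing is ever blocked), so a procedure based on your condition would wrongly classify this rule set as belonging to $\allallterm$. The paper's condition (\cref{th:mixed-to-properties-on-thetree}) constrains the \emph{path} triggers instead: (i) no $\pi_u$ is blocked by a team from $\preSet(u)$, and (ii) each set $\fragileSet(u)$ is finite; fairness is then recovered not by blocking side triggers but by interleaving the entire oblivious derivation with the restricted one, and the real work your plan leaves open is exactly the paper's First Main Lemma (\cref{trm:restrictedChaseToMixedChase}) --- the ``prettier/better'' homomorphism machinery showing that the oblivious triggers can be slotted in without ever creating a blocking team for a trigger of the restricted derivation. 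A smaller but genuine further point: blocking does \emph{not} descend to a relation on types, since whether two atoms can form a blocking team depends on which terms they share, and this is determined by the paths joining them in the tree rather than by their types alone; the paper handles this with the path-based term-tracking relations $\simeq_{i,j}$ built on \cref{lem:local-terms-belong-to-a-border-trigger}, and your MSO formula would need the same device.
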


\noindent
We dedicate the rest of this paper to the proof of Theorem \ref{th:main}. Consider a rule set $\rs$, which will now remain fixed.

Let us now begin our proof with a high-level overview.

\subsection{Overview of the Proof of Theorem \ref{th:main} }

Our proof of Theorem \ref{th:main} relies on three main lemmas.

First, in Section \ref{sec:mixed} we define (for a fixed $\inst$) an object called mixed derivation and we show our First Main Lemma:
 for each  $\inst$, existence of a mixed derivation is equivalent to the existence of an infinite, fair, restricted derivation. 

Then, in Section \ref{sec:atomic} we define, for an atom $\omega$, another object, called $\omega$-path-sensitive-derivation. This definition will not depend on $\inst$. As our Second Main Lemma, in this section, we prove that existence of a mixed derivation for \emph{any} $\inst$ is equivalent to existence of an
$\omega$-path-sensitive derivation.

Finally, in Section \ref{sec:mso} we show how to write a Monadic Second Order Logic formula $\Omega_\omega $ (which of course depends on $\rs$), which is true on the tree $\omega{\mathbb A}^*$ (with the relations $\child{\rho_\iota}$ for $\rho_\iota\in \mathbb A$) if and only if there exists an $\omega$-path-sensitive derivation (this equivalence is stated as our Third Main Lemma).
Since there are only finitely many pairwise non-isomorphic atoms $\alpha$, this, together with the well-known fact that Monadic Second Order Logic is decidable over such infinite trees, gives  the decision procedure, and  completes the proof of Theorem \ref{th:main}. Notice that
our entire decision procedure is confined to Section \ref{sec:mso}. The only purpose of Sections \ref{sec:mixed} and \ref{sec:atomic} is to prove correctness of this procedure.

\section{Mixed Derivations}\label{sec:mixed}

%
%

A finite instance $\inst$ remains fixed throughout this section.

\begin{definition}\label{def:mixed-chase}
A \emph{mixed derivation} is a pair $\pair{\demriv, \derriv}$ where $\demriv = \{\pi^\demriv_n\}_{n\in \mathbb N}$  is an infinite oblivious derivation, and
$\derriv$ is an infinite fair and restricted derivation, such that:
\begin{itemize}
\item $ \derriv$ is a subsequence of $\demriv$;
\item if $\pi^\demriv_n\in \derriv$ for some $n$ then $\pi^\demriv_n$ is not blocked in  $\demrivstep{{n}}$.
\end{itemize}
\end{definition}


\begin{lemma}[First Main Lemma]\label{trm:restrictedChaseToMixedChase}
  The following statements are equivalent:
  \begin{enumerate}
    \item There exists an infinite restricted and fair derivation.
    \item There exists {a mixed derivation}.
  \end{enumerate}
\end{lemma}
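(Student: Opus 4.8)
The plan is to prove the two implications separately. The implication $2 \Rightarrow 1$ is immediate: a mixed derivation is a pair $\pair{\demriv, \derriv}$ whose second component $\derriv$ is, by definition, an infinite fair restricted derivation, so its existence directly witnesses statement~1. All the work is in $1 \Rightarrow 2$. So I would fix an infinite fair restricted derivation $\derriv = \singleton{\derrivtrig{n}}_\nN$ and extend it to an infinite oblivious derivation $\demriv$ that exhausts all of $\theforest$, keeping $\derriv$ as a subsequence and keeping every restricted trigger unblocked at the step of $\demriv$ where it is fired. Note that fairness and restrictedness are inherited for free, since the second component of the mixed derivation is this very $\derriv$; the only genuine task is to interleave the remaining, \emph{oblivious-only}, triggers of $\theforest$ around $\derriv$.

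I would build $\demriv$ greedily under two priorities. First, the restricted triggers are fired in their original order $\derrivtrig{0}, \derrivtrig{1}, \dots$, each as soon as it becomes available; this is always possible because the address of $\derrivtrig{n}$ lies in $\derrivstep{n}$ and is therefore produced by the earlier restricted triggers (or by $\inst$), and because the two children addresses of $\derrivtrig{n}$ are created by no trigger other than $\derrivtrig{n}$ itself, so $\derrivtrig{n}$ is still active when we reach it. Second, each oblivious-only trigger is \emph{deferred} until every restricted trigger it could \emph{threaten} has already fired, where $\pi'$ threatens $\derrivtrig{n}$ if some atom $\apply_\iota(\pi')$ can serve as a member of a blocking team for $\derrivtrig{n}$ in $\theforest$.

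This schedule makes the non-blocking condition automatic. Let $\demrivstep{m}$ be the step at which $\derrivtrig{n}$ is fired; its atoms are those of $\derrivstep{n}$ together with atoms of oblivious-only triggers fired before position $m$. Take any blocking team $\pair{v_1, v_2}$ for $\derrivtrig{n}$ in $\theforest$. Were every member produced by a restricted trigger of index $< n$ or by $\inst$, the whole team would already lie in $\derrivstep{n}$, contradicting the restrictedness of $\derriv$ (which keeps $\derrivtrig{n}$ unblocked in $\derrivstep{n}$). Hence some member is produced either by a later restricted trigger (index $\ge n$, hence fired no earlier than position $m$) or by an oblivious-only trigger (deferred past $\derrivtrig{n}$ by priority~(b), hence also fired after position $m$); in both cases that member is absent from $\demrivstep{m}$, so the team is incomplete there. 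As this holds for every team, $\derrivtrig{n}$ is unblocked in $\demrivstep{m}$.

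The crux, and the main obstacle, is to show that this deferral can be realized inside a single $\omega$-sequence that still fires \emph{every} trigger of $\theforest$; equivalently, that no oblivious-only trigger is deferred forever. This reduces to the claim that each oblivious-only trigger threatens only \emph{finitely many} restricted triggers, after which a routine dovetailing — firing, between consecutive restricted triggers, every pending oblivious-only trigger that is by now both available and no longer threatening — produces the desired $\demriv$. The key structural lever for finiteness is that $\derrivforest$ is generated solely by restricted triggers, so every atom $\adr{u_n}$ hosting a restricted trigger contains only constants and $\derriv$-nulls, and in particular no null created by a not-yet-applied oblivious-only trigger. By \cref{obs:trivial}, a blocking-team member must agree with $\adr{u_n(\rho_n)_\iota}$ on all frontier positions, whose values are inherited from $\adr{u_n}$; hence a fixed oblivious-only atom can witness a team for $\derrivtrig{n}$ only by reproducing the inherited frontier data of $\adr{u_n}$ at the matching positions. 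Converting this agreement — together with linearity (so that each $\adr{u_n}$ is a single bounded-arity atom) and the requirement that the blocking homomorphism fix $\adr{u_n}$ — into a genuine finiteness bound on the threatened indices $n$ is where the real work of the lemma lies; the remaining bookkeeping (that $\derriv$ is genuinely a subsequence and that $\demriv$ respects the prerequisite order) is routine.
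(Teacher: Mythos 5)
Your $2 \Rightarrow 1$ direction and the conditional part of $1 \Rightarrow 2$ are sound: \emph{if} every oblivious-only trigger threatened only finitely many restricted triggers, your two-priority schedule would indeed yield a mixed derivation, by exactly the argument in your third paragraph. The genuine gap is that this finiteness claim is the entire hard content of the lemma, you leave it unproven, and the ``structural lever'' you propose for it cannot possibly suffice, because it never uses fairness. Everything in your lever (atoms hosting restricted triggers contain only constants and $\derriv$-nulls; team members must reproduce the inherited frontier data; linearity and bounded arity) applies verbatim to \emph{unfair} infinite restricted derivations, and for those the finiteness claim is simply false. Take \cref{ex:drugi} with its infinite restricted derivation firing $(\gamma,u_n)$, where $u_0$ is the root and $u_{n+1}=u_n\gamma_2$, so $\adr{u_n}=R(t_n,b,b)$: the single oblivious-only trigger $(\delta,u_0)$ creates the address $u_0\delta_1$ labelled $R(b,b,b)$, and $\bt{u_n,u_0\delta_1}$ is a blocking team for $(\gamma,u_n)$ for \emph{every} $n$ --- frontier agreement costs nothing here, since the frontier data of the $\gamma_2$-child is just the constant $b$. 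So your deferral rule would postpone $(\delta,u_0)$ forever and $\demriv$ would fail to be oblivious. This is precisely the trap the paper flags right after \cref{def:mixed-chase}: a single address may participate in infinitely many blocking teams. Any completion of your argument must therefore invoke fairness in an essential way, and your sketch gives no hint how; indeed it is not even established (by you or by the paper) that your finiteness claim is \emph{true} under fairness --- the paper deliberately never proves it.

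The paper's proof sidesteps the finiteness question altogether, which is why the two routes genuinely diverge at this point. Fairness is used exactly once and exactly where your sketch is silent: every border trigger is blocked inside $\derrivforest$, which yields the map $\blocker$ of \cref{def:blocker} and hence $\btr$ on children of border triggers; \cref{lem:prettier2} and the rank-decreasing map $\presuperior$ then extend $\btr$ to all virtual addresses; and the crucial \cref{lem:crucial} and \cref{lem:so-useful} show that \emph{any} blocking team $\bt{w_1,w_2}$ in $\theforest$ for a trigger of $\derriv$ projects to a blocking team $\bt{\btr(w_1),\btr(w_2)}$ lying inside $\derrivforest$. Consequently each trigger outside $\derriv$ is slotted after only the (at most two) triggers of $\derriv$ creating $\btr$ of its children --- a wait that is finite by construction, with no finiteness claim about threats needed --- and if a slotted trigger ever helped complete a team blocking a later trigger of $\derriv$, the projected team inside $\derrivforest$ would contradict restrictedness of $\derriv$. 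In short: you attempt to prove that threats are finitely many and wait them out; the paper proves that any threat that would materialize can be pushed into $\derrivforest$, where restrictedness kills it.
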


We devote the rest of this section to the proof of the above theorem. The  ($2 \Rightarrow 1$) direction
is trivial, so it is just the
$(1 \Rightarrow 2)$ implication that we need to show. The proof, we believe, is not completely straightforward. {Thus,} we kindly ask the Reader to brace for complications,  and nuances.

Suppose  $\derriv$ is a fair infinite restricted derivation and let $\derrivforest$ be its associated forest of addresses
(as per \cref{def:chase-derivations}).

In what follows, we aim to construct an oblivious derivation $\demriv$ such that $\pair{\demriv, \derriv}$ is indeed a mixed derivation.

What does it actually mean to construct this $\demriv$?

$\demriv$, as we know, is a permutation of all triggers. For $\derriv $ to be a subsequence of $\demriv$ we need to somehow start from
$\derriv $ and then slot the triggers from $\demriv \setminus \derriv$ between\footnote{By $\demriv \setminus \derriv$ (which does not type, as
 $\demriv$ and $\derriv$ are sequences, not sets)
we mean the set of triggers that do not appear in  $\derriv$.} the triggers in the sequence $\derriv$. Clearly, we need to 
ensure that all triggers $(\rho,u)$ are slotted after $u$ is created, as stated at the end of \cref{sec:triggers}.
What else can possibly go wrong here?

Think about the last item in Definition \ref{def:mixed-chase}, and imagine we have a trigger $\pi$ somewhere in the sequence
$\derriv$. We may need to slot some triggers from $\demriv \setminus \derriv$ ahead of $\pi$. 
However, slotting triggers introduces new blocking teams from $\theforest$. We thus need to be very careful to
make sure that no blocking team for $\pi$ is 
{created by} the triggers slotted ahead of $\pi$. 
One may think
 --- and, thanks to the assumption that $\derriv$ is restricted, this is actually true, {albeit not obvious} --- that each blocking team can only block finitely many triggers from $\derriv$. Thus, maybe we could simply, for
each such blocking team, 
add its members  after all its blockees from $\derriv$? It is not so simple, because while each blocking team {does} only block finitely many triggers, there may potentially be addresses participating in infinitely many blocking teams. To visualize this, consider again Example 2, and addresses of the form $(R(a,b,b))(\gamma_2)^*\delta_1$. Each such address participates in infinitely many blocking teams of triggers of the form $\pair{\gamma,\; (R(a,b,b))(\gamma_2)^*}$.

\subsection{ The Border of \texorpdfstring{$\derrivforest$}{R} and Beyond}
Since $\derriv$ is restricted, it might happen for some address $u$ of $\theforest$ that while $u \in \derrivforest$ there is a trigger $\pi = \pair{\rho, u}$ such that $u\rho_1,u\rho_2 \not\in \derrivforest$ as $\pi$ is blocked by some blocking team of addresses of $\derrivforest$. {Intuitively, we can imagine, that the} border between $\derrivforest$ and $\theforest \setminus \derrivforest $ runs  between such $u$ and its $\rho$-children



\begin{definition} A trigger $\pair{\rho, u}\not\in \derriv $ is called
\emph{border} trigger if $u\in \derrivforest$, and is called
 \emph{problematic} trigger if $u\not\in \derrivforest$.
\end{definition}

While at this point it may not be totally obvious for the Reader how problematic triggers earned their name, we will clarify this after \cref{def:better}.
\smallskip

Next we introduce some terminology for addresses in $\theforest\setminus \derrivforest$. Notice that for each such address $v$ there is  a (unique)
border trigger $\pair{\rho, u}$ such that $v=u\rho_\iota w$ for some $\iota$ and $w$.

\begin{definition}[Virtual addresses and ranks]\label{def:virtual}
  Addresses in $\theforest\setminus \derrivforest$ are called \emph{virtual}. When we say that $u\rho_\iota w$ is a virtual address, we think that $\pair{\rho, u}$ is a border trigger; in such cases, we sometimes say $u\rho_\iota w$ \emph{belongs to} $u$.

  We define the \emph{rank} of an address, by $rk(u)=0$ if $u\in\derrivforest$ and $rk(u\rho_\iota w)=|\rho_\iota w|$ if $u\rho_\iota w$ is virtual.
\end{definition}

So, function $rk$ (rank) says how far an address is from $\derrivforest$.

\begin{definition}
  Any term that appears in some atom of $\adr{\derrivforest}$ is called \emph{global}. Any other term from $\adr{\theforest}$ is \emph{local}.
\end{definition}

We end this section with an obvious observation, which will be very useful soon. 
Claim (iii) says that each local term is localized to the pair of subtrees of $\theforest$ rooted in $\apply(\pi)$
for exactly one border trigger $\pi$. Notice, however, that global terms can appear in multiple such subtrees.

\begin{observation}\label{lem:local-terms-belong-to-a-border-trigger}
\begin{enumerate}[(i)]
\item Suppose $s,sw\in\theforest$ and a term  $t$ occurs in both $\adr{s}$ and $\adr{sw}$.
Then for each prefix $v$ of $w$ term $t$ occurs in $\adr{sv}$.
\item Suppose $s,w\in\theforest$ are s.t. {neither} of them is a prefix of the other, and some term $t$ occurs in both $\adr{s}$ and $\adr{w}$.
Then there are prefixes $w'$ and $s'$ of $w$ and $s$ that are siblings, and such that $t$ occurs in both $\adr{w'}$ and $\adr{s'}$.
\item For each local term $t$ there exists exactly one border trigger $(\delta,v)$ such that if $t$ appears in some atom $\adr{w}$ then
$w\in \theforest_{v\delta_1}\cup  \theforest_{v\delta_2}$.
\end{enumerate}
\end{observation}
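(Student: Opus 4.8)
The plan is to derive all three items from a single structural fact about how terms propagate through $\theforest$. By \cref{def:rule-app}, when a child label $\adr{u\rho_\iota}$ is produced from $\adr{u}$, every term of $\adr{u\rho_\iota}$ is either a \emph{frontier term} — the image under the body homomorphism of a variable of $\vy$, and hence already a term of $\adr{u}$ — or an \emph{existential null} of the form $z^\rho_u$. The crucial point is that $z^\rho_u$ carries its birth address $u$ in its subscript, so it is introduced exactly once, and it cannot coincide with any term occurring in an atom $\adr{u'}$ for a proper prefix $u'$ of $u$. I would first record this as a \emph{no-reappearance principle}: along any single path of $\theforest$, once a term is absent from a label it never occurs again further down, since the only way a term enters $\adr{u\rho_\iota}$ without being inherited from $\adr u$ is to be freshly created as $z^\rho_u$, which could not have appeared at any ancestor of $u$.

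Item (i) is then immediate. Writing the path from $s$ to $sw$ as $s=u_0,u_1,\dots,u_k=sw$ with each $u_{j+1}$ a child of $u_j$ (so $u_j=sv_j$ for the prefix $v_j$ of $w$), let $i$ be the least index with $t\notin\adr{u_i}$; the no-reappearance principle forces $t$ absent from every $\adr{u_j}$ with $j\ge i$, contradicting $t\in\adr{sw}$. Hence $t$ occurs in each $\adr{sv}$.

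For item (ii) I would split at the longest common prefix $p$ of $s$ and $w$. Since neither address is a prefix of the other, $s$ and $w$ extend $p$ by distinct first symbols $a\ne b$, so $pa,pb$ are siblings and prefixes of $s,w$ respectively. If $t\in\adr p$, then by item (i) $t$ occurs in both $\adr{pa}$ and $\adr{pb}$, giving $s'=pa$, $w'=pb$. If $t\notin\adr p$, then along the path $p\to s$ the term $t$ must be born after $p$, i.e. $t=z^\rho_q$ for a prefix $q$ of $s$ with $p\preceq q$ (a constant present at $s$ would, by inheritance, already be present at $p$). If $q$ strictly extends $p$, then $pa\preceq q$ and every occurrence of $t$ lies in $\theforest_{q\rho_1}\cup\theforest_{q\rho_2}\subseteq\theforest_{pa}$, which cannot contain $w$ since $pb\preceq w$ — contradicting $t\in\adr w$. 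Hence $q=p$, so $t=z^\rho_p$ appears down both branches, forcing $a,b\in\{\rho_1,\rho_2\}$ and again yielding the sibling prefixes $s'=pa$, $w'=pb$.

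Finally, item (iii) follows from locality of nulls. A local term occurs in no atom of $\adr{\derrivforest}$, hence is a null $z^\rho_q$, and every occurrence of it lies in $\theforest_{q\rho_1}\cup\theforest_{q\rho_2}$. Choosing a child $q\rho_\iota$ where $t$ lives, this address is virtual, so by \cref{def:virtual} it belongs to a unique border trigger $(\delta,v)$, whence $\theforest_{q\rho_1}\cup\theforest_{q\rho_2}\subseteq\theforest_{v\delta_1}\cup\theforest_{v\delta_2}$. Uniqueness holds because the regions of distinct border triggers are disjoint: a common descendant would place the base of one border trigger strictly inside the other's region, making it virtual, contrary to border-trigger bases lying in $\derrivforest$. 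The main obstacle, I expect, is not any individual item but isolating and cleanly justifying the no-reappearance/locality principle — in particular verifying that the address-tagged subscript of an existential null genuinely blocks it from coinciding with any term above its birth address; once this is established, (i)--(iii) are short.
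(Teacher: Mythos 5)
Your proposal is correct, and where the paper gives any argument at all it follows the same route as yours. The paper presents this statement as ``an obvious observation''; its own (very brief) argument treats claim (i) as immediate from the inductive definition of $\theforest$ and the labeling, and proves claim (iii) exactly as you do: a local term is a null $z^{\rho}_{q}$ whose occurrences are confined to $\theforest_{q\rho_1}\cup\theforest_{q\rho_2}$, the trigger $(\rho,q)$ cannot be in $\derriv$ (else the term would be global), and one then passes to the border trigger above $q$ when $(\rho,q)$ is not itself a border trigger --- which is precisely your ``virtual child belongs to a unique border trigger'' step. Your disjointness-of-regions argument for the \emph{exactly one} part is a detail the paper leaves implicit, and it is right. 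For claim (ii) the paper supplies no proof at all; your longest-common-prefix argument, with its three cases (term present at $p$; null born strictly below $p$; null born at $p$), fills that hole correctly.

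Two points deserve sharper phrasing. First, your ``no-reappearance principle'' is literally false as stated: a term absent from $\adr{u_i}$ can of course occur further down the path if it is freshly created there; the correct statement --- and the one you actually use --- is that a term occurring at some \emph{ancestor} of $u_i$ and absent from $\adr{u_i}$ never occurs below $u_i$, since a null's address subscript prevents it from occurring above its birth point. Second, your proof of (ii) silently assumes that $s$ and $w$ have a common prefix, i.e.\ lie in the same tree of the forest. If $\inst$ contains two atoms sharing a constant, the two corresponding roots share a term, neither is a prefix of the other, and yet they have no sibling prefixes, so (ii) as literally stated fails across trees. This is an imprecision of the paper's statement rather than of your proof (the observation is only ever applied in Section \ref{sec:mso}, where $\inst=\{\omega\}$ and $\theforest$ is a single tree), but your write-up should make the same-tree hypothesis explicit rather than rely on it tacitly.
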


\subsection{Finding Better Versions of Elements of \texorpdfstring{$\theforest$}{F}}

We assumed that  $\derriv$ is a restricted {and} {\bf fair} derivation. So the only reason for a trigger $\pi$ to be a border trigger
is that it was blocked by some blocking team $\bt{v,v'}$ of $\derrivforest$. Note that $\bt{v,v'}$ may not be the only blocking team for $\pi$ in $\derrivforest$, but for our reasoning we need to consider any such team:

\begin{definition}\label{def:blocker}
We define $\blocker$ as  a function assigning to each border trigger $\pi$ a blocking team $\bt{v_1,v_2}$ for  $\pi$, such that   $v_1,v_2\in \derrivforest$.
\end{definition}

Recall the remark we made after Definition \ref{def:blocking}, that members of blocking teams are ``more capable'' than the blockees.
We will now formalize this intuition. We begin with:

\begin{definition}\label{def:better}
Suppose $\pi=(\rho,u)$ is a border trigger and  $\blocker(\pi)=\bt{v_1,v_2}$. Then we define $\btr(u\rho_\iota)=v_\iota$.
\end{definition}

In the above definition $\btr$ is short of \emph{better}.
Now we might be ready to explain why the non-border triggers of $\demriv \setminus \derriv$ are more problematic than border triggers.
In both cases we need to slot them between the triggers of $\derriv$. However,
if we slot a border trigger $(\rho,u)$ after
{addresses} $\btr(u\rho_1)$ and $\btr(u\rho_2)$ were already created in $\derriv$, then  $u\rho_1$ and $u\rho_2$ will not
block any trigger which comes later in  $\derriv$. \linebreak
Why? Because $\adr{\btr(u\rho_1)}$ and $\adr{\btr(u\rho_2)}$ are better at blocking than
$\adr{u\rho_1}$ and $\adr{u\rho_2}$, and they were already in $\adr{\derrivforest}$ at this point without blocking anyone, since $\derriv$ is restricted.

\begin{figure}
    \centering
    \includegraphics[width=0.75\linewidth]{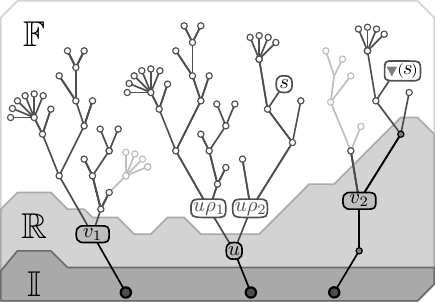}
    \caption{Section \ref{sec:mixed} in one picture. You can see a border trigger 
$\pi=(\rho, u)$: address $u$ is still in $\derrivforest$ while both $u\rho_\iota$ are right behind the border. The pair $\pair{v_1,v_2}$ of addresses in $\derrivforest$ is a blocking team for $\pi$, and $v_\iota=\btr(u\rho_\iota)$.
One can also see that the tree rooted in $v_\iota$ contains (possibly among other addresses) a copy of the tree rooted 
in $u\rho_\iota$ (as Lemma \ref{lem:prettier2}(B) postulates).
}
    
    \label{fig:enter-label}
\end{figure}

On the other hand, problematic triggers do not necessarily have blocking teams in $\derrivforest$.
Thus, for a problematic trigger $\pi=(\rho,u)$,  it is not obvious how to find
\emph{better} versions of $u\rho_\iota$ in $\derrivforest$. In consequence,
there is no natural moment in the
 derivation $\derriv$ when we could already be sure that {addresses} produced by $\pi$ can no longer block any future
 triggers from $\derriv$, so
 $\pi$ can  safely be slotted at that moment. 

Now, we are going to extend the function $better$ to addresses produced by problematic triggers.
As we said in \cref{sec:triggers}, members of a blocking team are also more capable of producing new {addresses} than their blockees.
 This will be formalized in \cref{lem:prettier2}.

{Recall that given an atom $\adr{u}$, with $\adr{u}_i$ we denote the term at position $i$ in $\adr{u}$.}
\begin{definition}\label{def:prettier}
For $u,w\in\theforest$ we say that $w$ is \emph{prettier} than $u$, denoted as $ u \lesspretty {w}$, if there exists
a homomorphism from $\adr{u}$ to  $\adr{w}$ which is an identity on global terms.
\end{definition}

In other words, $ u \lesspretty {w}$ if and only if:
\begin{enumerate}[(i)]
\item $\adr{u}$ and  $\adr{w}$ are atoms of the same relation;
\item if $\adr{u}_i$ is a global term then $\adr{u}_i = \adr{w}_i$;
\item if $\adr{u}_i=\adr{u}_j$  then $\adr{w}_i=\adr{w}_j$.
\end{enumerate}

Now recall Observation \ref{obs:trivial}. We are going to use it a lot.

\begin{lemma}\label{lem:prettier2}
Let $(\rho,u)$ be a border trigger. If $u\rho_\iota w\in\theforest$, then $(\btr(u\rho_\iota)) w \in\theforest$ and $ u\rho_\iota w\lesspretty (\btr(u\rho_\iota)) w$.
\end{lemma}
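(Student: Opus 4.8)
The plan is to prove both assertions simultaneously by induction on $|w|$, carrying along an explicit homomorphism that witnesses $u\rho_\iota w \lesspretty \btr(u\rho_\iota)\,w$ and is the identity on global terms.

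Write $\btr(u\rho_\iota)=v_\iota$, so that $\bt{v_1,v_2}=\blocker(\rho,u)$ is, by \cref{def:blocker}, a blocking team for $(\rho,u)$ whose members lie in $\derrivforest$. For the base case $w=\varepsilon$ I would read off $u\rho_\iota \lesspretty v_\iota$ directly from conditions (i)--(iii) of \cref{obs:trivial}. The one observation needed is that the global positions of $\adr{u\rho_\iota}$ are exactly its frontier positions: a frontier position carries a term inherited from $\adr{u}$, which is global since $u\in\derrivforest$, whereas every existential position carries the freshly created null $z_u^\rho$, which is local because $(\rho,u)$ is a border trigger and hence $u\rho_1,u\rho_2\notin\derrivforest$, so by \cref{lem:local-terms-belong-to-a-border-trigger} $z_u^\rho$ never occurs inside $\adr{\derrivforest}$. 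Granting this, condition (i) yields equality of relations, condition (ii) (equality on frontier positions) yields that the witnessing map fixes every global term, and condition (iii) with $\iota=\iota'$ yields that equalities internal to $\adr{u\rho_\iota}$ are preserved --- precisely the three requirements of \cref{def:prettier}.

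For the inductive step write $w=w'\delta_\kappa$ and assume $v_\iota w'\in\theforest$ together with a homomorphism $h$ from $\adr{u\rho_\iota w'}$ to $\adr{v_\iota w'}$ that is the identity on global terms. Since $u\rho_\iota w'\delta_\kappa\in\theforest$, there is a homomorphism $g$ from the single body atom $\body{\delta}$ to $\adr{u\rho_\iota w'}$, unique because $\delta$ is linear. Then $h\circ g$ is a homomorphism from $\body{\delta}$ to $\adr{v_\iota w'}$, so $(\delta,v_\iota w')$ is a trigger and $v_\iota w'\delta_\kappa\in\theforest$; moreover, by definition of the representation and uniqueness of the body homomorphism, the frontier positions of $\adr{v_\iota w'\delta_\kappa}$ are filled by the $(h\circ g)$-images and its existential positions by fresh nulls $z_{v_\iota w'}^\delta$. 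I would then extend $h$ to $h'$ on $\adom{\adr{u\rho_\iota w'\delta_\kappa}}$ by putting $h'=h$ on the frontier terms (which all lie in $\adom{\adr{u\rho_\iota w'}}$) and $h'(z_{u\rho_\iota w'}^\delta)=z_{v_\iota w'}^\delta$ on the freshly created nulls. These two parts have disjoint domains, so $h'$ is well defined, and comparing positions shows $h'(\adr{u\rho_\iota w'\delta_\kappa})=\adr{v_\iota w'\delta_\kappa}$, i.e.\ $h'$ is a homomorphism of the right shape.

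The step I expect to be the crux is verifying that $h'$ is the identity on global terms. The freshly created nulls $z_{u\rho_\iota w'}^\delta$ are local: $u\rho_\iota w'$ is virtual, being a descendant of the virtual address $u\rho_\iota$, so by \cref{lem:local-terms-belong-to-a-border-trigger} these nulls never appear in $\adr{\derrivforest}$ and thus carry no constraint. Consequently any global term of $\adr{u\rho_\iota w'\delta_\kappa}$ must occupy a frontier position, hence already belongs to $\adom{\adr{u\rho_\iota w'}}$, where $h'=h$ fixes it by the induction hypothesis. This closes the induction. The delicate thread running through the whole argument is keeping the witnessing homomorphism identity-on-global while the atoms grow, and it succeeds exactly because every null minted below the border is local, so the \emph{prettier} map only ever has to relocate local material and never the global terms shared with $\derrivforest$.
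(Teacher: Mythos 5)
Your proof is correct and follows essentially the same route as the paper's: the base case $u\rho_\iota \lesspretty \btr(u\rho_\iota)$ is established by combining \cref{obs:trivial} with the observation that the global terms of $\adr{u\rho_\iota}$ sit exactly at frontier positions (the fresh nulls being local since $(\rho,u)$ is a border trigger), and the rest is induction on $|w|$. The only difference is one of detail: where the paper compresses the inductive step into the slogan ``prettier parents produce prettier children,'' you carry out that step explicitly, constructing the extended homomorphism $h'$ and checking it remains the identity on global terms.
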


\begin{proof}

Let us start from a proof that $ u\rho_\iota \lesspretty \btr(u\rho_\iota)$. From \cref{def:better}, there is a homomorphism from $u\rho_\iota$ to $\btr(u\rho_\iota)$, so we just need to check whether it is an identity on global terms. First, notice that (since  $(\rho,u)$ is a border trigger), if $\adr{ u\rho_\iota}_i$ is a global term, then position $i$ is a frontier position in the body-atom of $\rho$. By \cref{obs:trivial} (ii) and definition of $\btr$, we get $\adr{ u\rho_\iota}_i = \adr{\btr( u\rho_\iota)}_i$ as required.

Since prettier parents produce prettier children, the lemma then follows by easy induction on the length of $w$.
\end{proof}

It follows from Lemma \ref{lem:prettier2} that $\presuperior$ is a function on $\theforest$:
\begin{definition}\label{lem:def-presuperior}
\phantom{a}\vspace{-6mm}
\begin{align*}
\phantom{a}\hspace{1.4cm} \presuperior(u) &= u, \tag{For $u\in \derrivforest$}\\
\phantom{a}\hspace{1.4cm} \presuperior(u\rho_\iota v) &= (\btr(u\rho_\iota))v \tag{For virtual $u\rho_\iota v$}
\end{align*}
\end{definition}




The Reader may now want to find the virtual address $s$ in Figure \ref{fig:enter-label},
and its image $\presuperior(s)$. Clearly, $rk(\presuperior(s))<rk(s)$, and it is no coincidence.
Indeed, function $\presuperior$, when applied to virtual addresses, is strictly monotonic with respect to rank: it moves addresses closer and closer to $\derrivforest$. In consequence, whatever address we start from, after a
finite number of applications,  $\presuperior$ will map this address to $\derrivforest$:

\begin{lemma}\label{lem:ranks-down}
  If $w\in\theforest$ is a virtual address, then $rk(\presuperior({w}))<rk({w})$.
  For each $w\in \theforest$ there exists a natural number $m_w$ such that $\presuperior^{m_w}(w)\in \derrivforest$.
\end{lemma}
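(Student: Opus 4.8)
The plan is to establish the two assertions in order: first the strict rank decrease for virtual addresses, and then the eventual arrival in $\derrivforest$ as a routine well-foundedness consequence of it over $\mathbb N$.

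For the first assertion I would fix a virtual address $w$ and, using uniqueness of the border trigger it belongs to, write $w = u\rho_\iota v$ with $(\rho,u)$ a border trigger, so that $rk(w) = |\rho_\iota v| = 1 + |v|$. By the definition of $\presuperior$ (\cref{lem:def-presuperior}) we have $\presuperior(w) = v_\iota v$, where $v_\iota = \btr(u\rho_\iota)$ is a member of the blocking team $\blocker((\rho,u))$ and hence lies in $\derrivforest$ (\cref{def:blocker,def:better}); moreover $v_\iota v \in \theforest$ by \cref{lem:prettier2}. If $v_\iota v \in \derrivforest$ then $rk(\presuperior(w)) = 0 < rk(w)$ and we are done.

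The substantive case is when $v_\iota v$ is virtual, and here I must compute its rank through its own border trigger. Let $(\delta,s)$ be the unique border trigger to which $v_\iota v$ belongs, so $v_\iota v = s\delta_j t$ and $rk(\presuperior(w)) = 1 + |t|$; note $s \in \derrivforest$ while $s\delta_j \notin \derrivforest$, the latter because a border trigger is absent from $\derriv$, so neither of its children $s\delta_1, s\delta_2$ is ever created in $\derriv$. Both $v_\iota$ and $s$ are prefixes of the single word $v_\iota v$ and are therefore prefix-comparable. If $s$ were a strict prefix of $v_\iota$, then the symbol of $v_\iota$ immediately following $s$ would be the corresponding symbol $\delta_j$ of $v_\iota v$, so $s\delta_j$ would be a prefix of $v_\iota \in \derrivforest$; prefix-closedness of $\derrivforest$ would then force $s\delta_j \in \derrivforest$, a contradiction. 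Hence $v_\iota$ is a prefix of $s$, say $s = v_\iota r$, and cancelling $v_\iota$ in $v_\iota v = v_\iota r \delta_j t$ gives $v = r\delta_j t$, so $|v| = |r| + 1 + |t| > |t|$. Therefore $rk(\presuperior(w)) = 1 + |t| < 1 + |v| = rk(w)$.

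For the second assertion I would iterate: put $w_0 = w$ and $w_{k+1} = \presuperior(w_k)$. As long as $w_k$ is virtual, the first assertion gives $rk(w_{k+1}) < rk(w_k)$, while on $\derrivforest$ the map $\presuperior$ is the identity. Since $\{rk(w_k)\}$ cannot be an infinite strictly decreasing sequence of natural numbers, some $w_k \in \derrivforest$; taking $m_w$ to be the least such $k$ (in fact $m_w \le rk(w)$) yields $\presuperior^{m_w}(w) \in \derrivforest$. The only delicate point I anticipate is the word-combinatorial step in the non-trivial case—decomposing $v_\iota v$ through its border trigger $(\delta,s)$ and using prefix-closedness of $\derrivforest$ to show $v_\iota$ is a prefix of $s$, so that the virtual suffix genuinely shrinks; everything else is bookkeeping with the definitions of $rk$ and $\presuperior$ and a descent on $\mathbb N$.
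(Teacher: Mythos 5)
Your proof is correct and follows essentially the same route as the paper: decompose $w = u\rho_\iota v$, observe $rk(w)=1+|v|$, use $\btr(u\rho_\iota)\in\derrivforest$ to bound $rk(\presuperior(w))$ by $|v|$, and obtain the second claim by descent on $\mathbb{N}$. The only difference is that you spell out (via the prefix-comparability of $\btr(u\rho_\iota)$ and the border-trigger decomposition point of $\presuperior(w)$) the inequality $rk((\btr(u\rho_\iota))v)\leq|v|$, which the paper asserts in one line as an immediate consequence of $\btr(u\rho_\iota)\in\derrivforest$.
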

\begin{proof} For the proof of the first claim {let $w = u\rho_\iota v$ where $\pair{\rho, u}$ is a border trigger}, and recall that,
by Definition \ref{def:virtual}, $ rk(u\rho_\iota v)= |\rho_\iota v| $. On the other hand, $\btr(u\rho_\iota)$  is an address in  $\derrivforest$, so
$rk((\btr(u\rho_\iota))v)\leq |v|$. Second claim follows immediately from the first.
\end{proof}

We now can extend our definition of $\btr$ to the entire $\theforest$. Your better {address} in $\derrivforest$  is the one
 where function $\presuperior$ ultimately maps you to:

\begin{definition}
For a problematic trigger $(\rho,u)$, define $\btr(u\rho_\iota)=\presuperior^{m_{u\rho_\iota}}(u\rho_\iota)$, and for any $v \in \derrivforest$ let $\btr(v) = v$.
\end{definition}

We are ready for our crucial Lemma and its Corollary.

\begin{lemma}\label{lem:crucial}
Suppose $\pi$ is a trigger from $\derriv$ and  $\bt{w_1,w_2}$ is a blocking team of $\pi$ in $\theforest$. Then  $ \bt{\presuperior(w_1),  \presuperior(w_2)}$ is also a blocking team for $\pi$.
\end{lemma}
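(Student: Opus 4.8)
The plan is to verify, through the positional characterisation of blocking teams in \cref{obs:trivial}, that the pair $\bt{\presuperior(w_1), \presuperior(w_2)}$ satisfies conditions (i)--(iii) for $\pi$. Write $\pi=\pair{\delta,s}$ with head atoms $\psi_1,\psi_2$ of $\delta$. Two facts drive the whole argument. First, since $\pi$ belongs to the restricted derivation $\derriv$, both $s\delta_1$ and $s\delta_2$ lie in $\derrivforest$, so \emph{every} term of $\adr{s\delta_\iota}$ is global. Second, $w_\iota \lesspretty \presuperior(w_\iota)$ for each $\iota$: this is \cref{lem:prettier2} when $w_\iota$ is virtual, and is trivial when $w_\iota\in\derrivforest$, where $\presuperior(w_\iota)=w_\iota$. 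I would fix, for each $\iota$, the witnessing homomorphism $h_\iota$ from $\adr{w_\iota}$ to $\adr{\presuperior(w_\iota)}$ that is the identity on global terms.

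Conditions (i) and (ii) are then immediate. Predicates agree because $h_\iota$ preserves relation symbols, so $\adr{\presuperior(w_\iota)}$ has the same predicate as $\adr{w_\iota}$, hence as $\adr{s\delta_\iota}$. For a frontier position $j$ of $\psi_\iota$, the hypothesis that $\bt{w_1,w_2}$ blocks $\pi$ gives $\adr{w_\iota}_j=\adr{s\delta_\iota}_j$, which is global; since $h_\iota$ fixes global terms, $\adr{\presuperior(w_\iota)}_j=\adr{w_\iota}_j=\adr{s\delta_\iota}_j$.

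The real work is condition (iii). Suppose $\adr{s\delta_\iota}_j=\adr{s\delta_{\iota'}}_{j'}$; the original blocking condition yields $\adr{w_\iota}_j=\adr{w_{\iota'}}_{j'}=:t$, and I must show $\adr{\presuperior(w_\iota)}_j=\adr{\presuperior(w_{\iota'})}_{j'}$. If $t$ is global both sides equal $t$ and we are done. If $t$ is local, I would invoke \cref{lem:local-terms-belong-to-a-border-trigger}(iii): there is a \emph{single} border trigger $\pair{\beta,q}$ to which $t$ belongs, and since $t$ occurs in both $\adr{w_\iota}$ and $\adr{w_{\iota'}}$, both addresses lie in $\theforest_{q\beta_1}\cup\theforest_{q\beta_2}$ and therefore belong to $\pair{\beta,q}$. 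Writing $w_\iota=q\beta_\kappa x$, $w_{\iota'}=q\beta_{\kappa'} x'$, and $\blocker(\pair{\beta,q})=\bt{v_1,v_2}$, we get $\presuperior(w_\iota)=v_\kappa x$ and $\presuperior(w_{\iota'})=v_{\kappa'} x'$.

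The crux is to show that both $\adr{v_\kappa x}_j$ and $\adr{v_{\kappa'} x'}_{j'}$ equal $g(t)$, where $g$ is the single blocking homomorphism of $\pair{\beta,q}$ from \cref{def:blocking}, satisfying $g(\adr{q\beta_1})=\adr{v_1}$ and $g(\adr{q\beta_2})=\adr{v_2}$. The term $t=z_q^\beta$ is the null created by $\pair{\beta,q}$; by \cref{lem:local-terms-belong-to-a-border-trigger}(i) its occurrence at position $j$ of $\adr{q\beta_\kappa x}$ is inherited, along the frontier positions dictated by the suffix $x$, from an occurrence at some position $p$ of $\adr{q\beta_\kappa}$. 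Because the suffix $x$ encodes the same sequence of rule applications in the subtree rooted at $v_\kappa$ (which exists by \cref{lem:prettier2}), the same frontier inheritance carries $\adr{v_\kappa}_p=g(\adr{q\beta_\kappa}_p)=g(t)$ down to position $j$ of $\adr{v_\kappa x}$, giving $\adr{\presuperior(w_\iota)}_j=g(t)$; the identical argument along $x'$ gives $\adr{\presuperior(w_{\iota'})}_{j'}=g(t)$. As $g$ is a function, $g(t)$ is unambiguous, so the two values coincide. I expect this step to be the main obstacle: it forces one to combine the fact that the two members of $\blocker(\pair{\beta,q})$ arise from one homomorphism $g$ with the observation that the shared local null is inherited through matching subtrees, so that the two separately-defined branches of $\presuperior$ act consistently on $t$. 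Once (i)--(iii) are established, \cref{obs:trivial} gives that $\bt{\presuperior(w_1),\presuperior(w_2)}$ is a blocking team for $\pi$.
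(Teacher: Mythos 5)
Your proposal follows the same overall route as the paper's proof: verify conditions (i)--(iii) of \cref{obs:trivial}; get (i) and (ii) from $w_\iota \lesspretty \presuperior(w_\iota)$ together with the fact that all terms of $\adr{s\delta_\iota}$ are global because $\pi\in\derriv$; split (iii) into a global and a local case; and, in the local case, use \cref{lem:local-terms-belong-to-a-border-trigger}(iii) to localize everything to one border trigger $\pair{\beta,q}$. The crux you single out --- that the two copies $\theforest_{\btr(q\beta_1)}$ and $\theforest_{\btr(q\beta_2)}$ must treat the shared null consistently because $\blocker(\pair{\beta,q})$ arises from a \emph{single} homomorphism $g$ --- is exactly the paper's \cref{lemma:easy4} (there phrased via condition (iii) of \cref{obs:trivial} applied to the team $\bt{\btr(q\beta_1),\btr(q\beta_2)}$), and your ``frontier inheritance'' down the suffixes $x,x'$ is the content of the paper's \cref{obs:easy1}--\cref{obs:easy3}.

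There is, however, a genuine gap: your local-case argument silently assumes that $w_\iota$ and $w_{\iota'}$ lie in \emph{different} subtrees of the border trigger. The claim that $t=z_q^\beta$ is the null created by $\pair{\beta,q}$ itself, occurring at some position $p$ of $\adr{q\beta_\kappa}$, is only forced when $\kappa\neq\kappa'$: then the two addresses diverge at $q$, \cref{lem:local-terms-belong-to-a-border-trigger}(ii) pushes $t$ up into the sibling pair $\adr{q\beta_\kappa},\adr{q\beta_{\kappa'}}$, and since every term of $\adr{q}$ is global ($q\in\derrivforest$), a local term of $\adr{q\beta_\kappa}$ must indeed be fresh at $\pair{\beta,q}$. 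But \cref{lem:local-terms-belong-to-a-border-trigger}(iii) also covers the case $\kappa=\kappa'$ (including $\iota=\iota'$, since condition (iii) of \cref{obs:trivial} ranges over all pairs $\iota,\iota'$): both addresses may sit in the \emph{same} subtree $\theforest_{q\beta_\kappa}$, with $t$ created by a \emph{problematic} trigger deep inside it, say $t=z_{q\beta_\kappa y}^{\delta'}$. Then $t\neq z_q^\beta$, $t$ never occurs in $\adr{q\beta_\kappa}$, and your inheritance-from-$\adr{q\beta_\kappa}$ step has nothing to start from; nothing in \cref{def:blocking} excludes such teams, since at non-frontier positions team members may share arbitrary terms. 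The repair is a case split: for $\kappa=\kappa'$ no appeal to $g$ is needed at all, because both occurrences of $t$ can be traced (via \cref{lem:local-terms-belong-to-a-border-trigger}(i)--(ii)) to a common ancestor or sibling pair inside $\theforest_{q\beta_\kappa}$, and the within-subtree propagation of term equalities under $q\beta_\kappa z\mapsto \btr(q\beta_\kappa)z$ (the paper's \cref{obs:easy1}--\cref{obs:easy3}, an induction on \cref{def:rule-app} using \cref{lem:prettier2}) yields $\adr{\btr(q\beta_\kappa)x}_j=\adr{\btr(q\beta_\kappa)x'}_{j'}$ directly. With that case added, your argument is complete and coincides with the paper's proof, whose ``easy part'' is precisely this two-case machinery.
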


\begin{corollary}\label{lem:so-useful}
Suppose  $\pi$ is a trigger from $\derriv$ and $\bt{w_1,w_2}$ is a blocking team of $\pi$ in $\theforest$. Then  $ \bt{\btr(w_1),  \btr(w_2)}$ is also a blocking team for $\pi$.
\end{corollary}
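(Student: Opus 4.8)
The plan is to deduce the Corollary from \cref{lem:crucial} simply by iterating $\presuperior$. The observation that drives everything is that $\btr$ is nothing but $\presuperior$ applied repeatedly until it lands in $\derrivforest$: unwinding \cref{def:better}, \cref{lem:def-presuperior}, and the final extension of $\btr$ to problematic-trigger children, one checks case by case (root, $\derrivforest$-address, border-trigger child, problematic-trigger child) that every $w \in \theforest$ satisfies $\btr(w) = \presuperior^{m_w}(w)$, where $m_w$ is the exponent furnished by \cref{lem:ranks-down} and $\presuperior^{m_w}(w) \in \derrivforest$. This is the only preparatory identification I would spell out in detail, since it is what links the statement of the Corollary (phrased with $\btr$) to \cref{lem:crucial} (phrased with $\presuperior$).

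Before iterating I would record two bookkeeping facts. First, $\presuperior$ maps $\theforest$ into $\theforest$: it fixes $\derrivforest$ by definition, and on a virtual address $u\rho_\iota v$ the image $(\btr(u\rho_\iota))v = \presuperior(u\rho_\iota v)$ lies in $\theforest$ by \cref{lem:prettier2}. Second, since $\presuperior$ is the identity on $\derrivforest$, once an address has reached $\derrivforest$ further applications of $\presuperior$ change nothing; hence $\presuperior^{M}(w) = \presuperior^{m_w}(w) = \btr(w)$ for every $M \geq m_w$.

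The main step is then a finite induction on the number of applications of \cref{lem:crucial}. Put $M = \max(m_{w_1}, m_{w_2})$. Starting from the given blocking team $\bt{w_1,w_2}$ of $\pi$ and applying \cref{lem:crucial} once yields that $\bt{\presuperior(w_1), \presuperior(w_2)}$ is again a blocking team for $\pi$. By the first bookkeeping fact its two members still lie in $\theforest$, and $\pi$ is still the same trigger of $\derriv$, so the hypotheses of \cref{lem:crucial} hold afresh and I may reapply it. After $M$ steps I obtain that $\bt{\presuperior^{M}(w_1), \presuperior^{M}(w_2)}$ is a blocking team for $\pi$, and by the second bookkeeping fact this pair is exactly $\bt{\btr(w_1), \btr(w_2)}$, which is the desired conclusion.

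I do not anticipate a genuine obstacle here: all the real work is already contained in \cref{lem:crucial}, and the Corollary is essentially its transitive closure. The only point requiring a little care is that the two coordinates $w_1$ and $w_2$ may reach $\derrivforest$ after different numbers of $\presuperior$-steps; taking the common bound $M$ and using that $\presuperior$ fixes $\derrivforest$ pointwise resolves this cleanly.
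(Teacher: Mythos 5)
Your proposal is correct and matches the paper's intent exactly: the paper states the Corollary as an immediate consequence of Lemma \ref{lem:crucial}, leaving implicit precisely the iteration you spell out (repeatedly applying the crucial lemma, using Lemma \ref{lem:ranks-down} and the fact that $\presuperior$ fixes $\derrivforest$ to conclude that the iterates stabilize at $\btr(w_1)$ and $\btr(w_2)$). Your careful identification $\btr(w) = \presuperior^{m_w}(w)$ across the three cases is the right bookkeeping and introduces no gap.
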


Now, in order to end the proof of our First Main Lemma, we need to do two things:
show how Corollary \ref{lem:so-useful} implies First Main Lemma (which is now easy, see Section \ref{sec:creating-mixed}) and
prove Lemma \ref{lem:crucial} (which we do not think
looks  obvious at this point)
which is done in the  next two subsections.


\subsection{Proof of  Lemma \ref{lem:crucial}. The Easy Part}

Throughout this subsection, we consider a border trigger $(\delta,v)$.   
Recall, from Lemma \ref{lem:prettier2}, that for each $w$, if the address $v\delta_\iota w$ is in $\theforest$ then the address $\btr(v\delta_\iota) w$ also is {in $\theforest$}, and is prettier than $v\delta_\iota w$. So, one may think of 
$\theforest_{\btr(v\delta_\iota)}$ as of a prettier copy of 
the set $\theforest_{v\delta_\iota}$.

We will begin this section with three very easy observations.
The general message  is that if two atoms in 
$\adr{\theforest_{v\delta_\iota}}$ share terms then the respective atoms in 
$\adr{\theforest_{\btr(v\delta_\iota)}}$ also share terms at the respective positions.

The observations will be followed by one lemma, which is not difficult either, but there is certain depth there.

First, notice that that the prettier you are the more terms you share with your children:

\begin{observation}\label{obs:easy1} Let $w\in {\mathbb A}^*$ and $a\in {\mathbb A}$ and suppose that
$\adr{v\delta_\iota w}_i= \adr{v\delta_\iota wa}_{i'}$ 
Then also $\adr{{\btr(v\delta_\iota)} w}_i= \adr{{\btr(v\delta_\iota)} wa}_{i'}$.
\end{observation}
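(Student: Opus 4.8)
The plan is to trace the shared term through the single rule application that produces the child, thereby reducing a cross-atom coincidence to an equality \emph{internal} to one atom, which is exactly the kind of information that $\lesspretty$ transports. Write $p = v\delta_\iota$ and $q = \btr(v\delta_\iota)$ for brevity, and let the address symbol be $a = \rho_\kappa$ for some rule $\rho\in\rs$ and index $\kappa$, so that $pwa = pw\rho_\kappa$ and $qwa = qw\rho_\kappa$ are obtained by applying $\rho$ to $\adr{pw}$ and to $\adr{qw}$ respectively. By hypothesis $pwa\in\theforest$, and by \cref{lem:prettier2} we have $pw\lesspretty qw$ together with $qwa\in\theforest$; hence the atoms $\adr{qw}$ and $\adr{qwa}$ are well defined, and we are free to invoke the three conditions (i)--(iii) characterising $\lesspretty$ from \cref{def:prettier}.

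First I would argue that $\adr{pwa}_{i'}$ must sit at a \emph{frontier} position of the head atom $\psi_\kappa$ of $\rho$. Indeed, by \cref{def:rule-app} every position of the child $\adr{pwa}$ is filled either by a fresh labelled null $z^\rho_{pw}$ (for an existential variable) or by $h(y)=\adr{pw}_j$ for a frontier variable $y$ occupying head position $i'$ and some body position $j$, where $h$ is the body-to-$\adr{pw}$ homomorphism. A fresh null does not occur in the parent $\adr{pw}$, whereas by hypothesis $\adr{pwa}_{i'}=\adr{pw}_i$ does occur there; hence $\adr{pwa}_{i'}$ cannot be a fresh null and must be the frontier term $\adr{pw}_j$. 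Combining this with the hypothesis yields the intra-atom equality $\adr{pw}_i=\adr{pw}_j$.

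It then remains to transport this equality to $q$ and re-expand it. Since $pw\lesspretty qw$, condition (iii) gives $\adr{qw}_i=\adr{qw}_j$. The syntactic shape of $\rho$ fixes which frontier variable sits at head position $i'$ and at body position $j$, independently of the atom the rule is applied to; applying $\rho$ to $\adr{qw}$ therefore yields $\adr{qwa}_{i'}=\adr{qw}_j$, exactly as it yielded $\adr{pwa}_{i'}=\adr{pw}_j$ for $p$. Chaining the equalities gives $\adr{qwa}_{i'}=\adr{qw}_j=\adr{qw}_i$, as required. The one subtle point -- and the crux of the whole argument -- is that $\lesspretty$ compares single atoms rather than parent--child pairs, so the real work is to factor the cross-atom coincidence $\adr{pw}_i=\adr{pwa}_{i'}$ through the frontier of $\rho$ into an equality living inside $\adr{pw}$, after which $\lesspretty$ carries it over for free.
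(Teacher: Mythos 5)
Your proposal is correct and takes essentially the same route as the paper: the paper's own proof is the one-liner ``Follows from \cref{lem:prettier2} and \cref{def:rule-app}'', and your write-up is precisely the expansion of those two ingredients --- freshness of the labeled null $z^\rho_{pw}$ forces the shared term to sit at a frontier position of the head, which converts the parent--child coincidence into the intra-atom equality $\adr{pw}_i=\adr{pw}_j$, which condition (iii) of $\lesspretty$ (supplied by \cref{lem:prettier2}) transports to $\adr{qw}$, and which the rule application at $qw$ then re-expands. The ``subtle point'' you flag at the end, that $\lesspretty$ only compares single atoms and so cannot handle the cross-atom equality directly, is exactly the content that the paper's terse citation leaves implicit.
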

\begin{proof}
    Follows from \cref{lem:prettier2} and \cref{def:rule-app}.
\end{proof}

And the prettier your parent is the more terms you share with your sibling:

\begin{observation}\label{obs:easy2} Let $w\in {\mathbb A}^*$ and $a,a'\in {\mathbb A}$. If   $\adr{v\delta_\iota wa}_i = \adr{v\delta_\iota wa'}_{i'}$ \;then also\; $\adr{\btr(v\delta_\iota) wa}_i = \adr{\btr(v\delta_\iota) wa'}_{i'}$.
\end{observation}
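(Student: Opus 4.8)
The plan is to reduce the cross-atom coincidence occurring at the two siblings to two kinds of information that are manifestly preserved by $\lesspretty$: coincidences already present in the common parent, and coincidences forced by the syntax of a single rule. First I would set $p := v\delta_\iota w$ and $q := \btr(v\delta_\iota) w$. Applying \cref{lem:prettier2} to the words $wa$ and $wa'$, all of $pa, pa', qa, qa'$ lie in $\theforest$, and $pa \lesspretty qa$, $pa' \lesspretty qa'$; in particular $p \lesspretty q$, so $\adr{p}$ and $\adr{q}$ carry the same relation. Writing $a = \rho_\kappa$ and $a' = \rho'_{\kappa'}$, the fact that $p$ and $q$ host the same relation means that $\rho$ (resp. $\rho'$) fires at $\adr{q}$ exactly as at $\adr{p}$, so that $\adr{qa} = \ruleapply_\kappa(\rho, \adr{q}, q)$ and $\adr{qa'} = \ruleapply_{\kappa'}(\rho', \adr{q}, q)$.

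Next I would classify, via \cref{def:rule-app}, each of the coinciding terms $\adr{pa}_i$ and $\adr{pa'}_{i'}$: that definition forces every head position to carry either a frontier term $h(y) = \adr{p}_j$ inherited from the parent (for a frontier variable $y$ occurring at body position $j$, using linearity so the body is a single atom), or a freshly minted null of the form $z_p^\rho$. The argument is then a three-way case split on the types of the two coinciding terms.

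If both are frontier terms, the hypothesis reads $\adr{p}_j = \adr{p}_{j'}$ for the corresponding body positions, and property (iii) of $\lesspretty$ (\cref{def:prettier}) lifts this to $\adr{q}_j = \adr{q}_{j'}$; re-descending through the same rule applications gives $\adr{qa}_i = \adr{q}_j = \adr{q}_{j'} = \adr{qa'}_{i'}$ (this is the sibling analogue of the parent-descent used in \cref{obs:easy1}). If exactly one term is a null, equality is impossible, since a null born at the trigger producing the children of $p$ is fresh and hence absent from $\adr{p}$, whereas the frontier term lives in $\adr{p}$. Finally, if both are nulls, equality forces literally the same labeled null, which by the $z_u^\rho$ naming of \cref{def:rule-app} means $\rho = \rho'$ together with a single existential variable $z$ occupying head position $i$ of atom $\kappa$ and head position $i'$ of atom $\kappa'$; this is a purely syntactic feature of $\rho$, so firing $\rho$ at $\adr{q}$ deposits its one fresh null $z_q^\rho$ at both positions, whence $\adr{qa}_i = z_q^\rho = \adr{qa'}_{i'}$.

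The step I expect to be the real (if modest) obstacle is the null case, and with it the dismissal of the mixed case. The key realization is that a coincidence spanning two \emph{distinct} sibling atoms that involves a null can only be a coincidence of two nulls born at one and the same trigger, hence from one and the same rule, and that such a coincidence is dictated solely by the rule's syntax, which is indifferent to the particular parent atom the rule is fired on. Once this is seen, both the null case and the impossibility of the mixed case are immediate, and the frontier case is merely descent to the parent followed by an application of property (iii) of $\lesspretty$.
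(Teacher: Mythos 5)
Your proof is correct and follows exactly the route the paper intends: its entire proof of this observation is the one-liner ``Follows from \cref{lem:prettier2} and \cref{def:rule-app}'', and your argument is precisely a careful unpacking of those two ingredients (prettiness of $\btr(v\delta_\iota)w$, plus the frontier-term/fresh-null dichotomy of rule application, handled by the three-way case split). Nothing to object to; you have merely written out the details the authors left implicit.
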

\begin{proof}
    Follows from \cref{lem:prettier2} and \cref{def:rule-app}.
\end{proof}
By inductive application of \cref{obs:easy1} we get:

\begin{observation}\label{obs:easy3}
Let $w,w'\in {\mathbb A}^*$. If $\adr{v\delta_\iota w}_i= \adr{v\delta_\iota ww'}_{i'}$ then
$\adr{\btr(v\delta_\iota) w}_i= \adr{\btr(v\delta_\iota) ww'}_{i'}$.
\end{observation}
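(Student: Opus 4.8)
The plan is a straightforward induction on the length of $w'$, pushing the shared-term equality down the path one address symbol at a time and invoking \cref{obs:easy1} at each step. In the base case $w'=\varepsilon$, the hypothesis reads $\adr{v\delta_\iota w}_i = \adr{v\delta_\iota w}_{i'}$, i.e.\ two positions of a single atom carry the same term. Since $v\delta_\iota w \lesspretty \btr(v\delta_\iota) w$ by \cref{lem:prettier2} (using prefix-closedness to know $v\delta_\iota w\in\theforest$), property (iii) in the characterization of $\lesspretty$ immediately yields $\adr{\btr(v\delta_\iota) w}_i = \adr{\btr(v\delta_\iota) w}_{i'}$.

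For the inductive step I would write $w' = w'' a$ with $a\in{\mathbb A}$, and let $t$ denote the common term $\adr{v\delta_\iota w}_i = \adr{v\delta_\iota w w'' a}_{i'}$. The key move is to locate an occurrence of $t$ in the intermediate atom $\adr{v\delta_\iota w w''}$: since $t$ occurs both in $\adr{v\delta_\iota w}$ and in its descendant $\adr{v\delta_\iota w w'' a}$, \cref{lem:local-terms-belong-to-a-border-trigger}(i) guarantees that $t$ also occurs in every atom along the path, in particular at some position $i''$ of $\adr{v\delta_\iota w w''}$. This splits the original equality into an ``upper'' equality $\adr{v\delta_\iota w}_i = \adr{v\delta_\iota w w''}_{i''}$ over the shorter word $w''$, and a single-step ``lower'' equality $\adr{v\delta_\iota w w''}_{i''} = \adr{v\delta_\iota w w'' a}_{i'}$. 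Applying the induction hypothesis to the former and \cref{obs:easy1} (with base word $ww''$) to the latter, and then chaining the two resulting equalities through position $i''$, gives $\adr{\btr(v\delta_\iota) w}_i = \adr{\btr(v\delta_\iota) w w'' a}_{i'}$, which is exactly the claim for $w'$.

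The only point that requires genuine care is the appeal to \cref{lem:local-terms-belong-to-a-border-trigger}(i): without it there is no guaranteed intermediate position at which to split the equality and hand off between the induction hypothesis and \cref{obs:easy1}. Note that the choice of $i''$ among possibly several occurrences of $t$ is immaterial, since \cref{obs:easy1} only needs two positions carrying a common term in a parent--child pair. Everything else is routine bookkeeping, which is why the statement can fairly be advertised as following ``by inductive application of \cref{obs:easy1}.''
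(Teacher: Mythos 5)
Your proof is correct and is exactly the argument the paper intends: the paper disposes of this observation with the single phrase ``by inductive application of \cref{obs:easy1},'' and your induction on $|w'|$ — using \cref{lem:local-terms-belong-to-a-border-trigger}(i) to locate an intermediate occurrence of the shared term, the induction hypothesis for the upper segment, \cref{obs:easy1} for the final step, and \cref{lem:prettier2}(iii) for the base case — is precisely the bookkeeping that phrase elides. No discrepancy with the paper's approach.
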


Observations \ref{obs:easy1}-\ref{obs:easy3}, with the  Lemma \ref{lem:prettier2},
imply that there is a homomorphism from $\adr{\theforest_{v\delta_\iota}}$  to $\adr{\theforest_{\btr(v\delta_\iota)}}$.
Now we want something more, we want to prove that there is a homomorphism from
$\adr{\theforest_{v\delta_1}\cup \theforest_{v\delta_2}}$ to 
$\adr{\theforest_{\btr(v\delta_1)}\cup \theforest_{\btr(v\delta_2)}}$. 

\begin{lemma}\label{lemma:easy4} 
Let $w,w'\in {\mathbb A}^*$. If 
$\adr{v\delta_1 w}_i= t= \adr{v\delta_2 w'}_{i'}$ for some term $t$. Then also
$\adr{\btr(v\delta_1) w}_i= \adr{\btr(v\delta_2) w'}_{i'}$.
\end{lemma}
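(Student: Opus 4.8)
The plan is to split on whether the shared term $t$ is global or local, dispatching the global case immediately and reducing the local case to the three preceding observations together with \cref{obs:trivial}. If $t$ is global, then by \cref{lem:prettier2} we have $v\delta_1 w \lesspretty \btr(v\delta_1)w$ and $v\delta_2 w' \lesspretty \btr(v\delta_2)w'$, and since $\lesspretty$ acts as the identity on global terms (clause (ii) of the characterisation after \cref{def:prettier}), both $\adr{\btr(v\delta_1)w}_i$ and $\adr{\btr(v\delta_2)w'}_{i'}$ are forced to equal $t$, hence are equal.

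The interesting case is $t$ local, and here the first and main task is to locate the origin of $t$. By \cref{def:rule-app} a local term is a labelled null $t = z_{v'}^{\delta'}$ created by a unique trigger $(\delta',v')$, and, as in the proof of \cref{lem:local-terms-belong-to-a-border-trigger}, every occurrence of $t$ lies in $\theforest_{v'\delta'_1}\cup\theforest_{v'\delta'_2}$. Since $t$ occurs in $\adr{v\delta_1 w}$ and in $\adr{v\delta_2 w'}$, which sit in the two distinct children-subtrees of $v$ (their longest common prefix is exactly $v$), I would argue that $(\delta',v') = (\delta,v)$, i.e. $t = z_v^\delta$ for an existential variable $z$ of $\delta$. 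Indeed, if both occurrences lay in the same $\theforest_{v'\delta'_\kappa}$, then $v'\delta'_\kappa$ would be a common prefix of the two addresses, hence a prefix of $v$, so $t$ would already appear in $\adr{v}$; but $v\in\derrivforest$, making every term of $\adr{v}$ global and contradicting the locality of $t$. If instead the occurrences lay in different children of $(\delta',v')$ with $v'\subsetneq v$, the address symbol immediately after $v'$ would have to equal the first symbol of $v$ on both branches, forcing $\delta'_1 = \delta'_2$, which is absurd. This leaves only $v'=v$ and $\delta'=\delta$.

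Once $t = z_v^\delta$ is established, the existential variable $z$ must occur at some position $p_1$ of $\psi_1$ and some position $p_2$ of $\psi_2$ (it must appear in both heads, since otherwise $t$ could not be carried down into both subtrees), so $\adr{v\delta_1}_{p_1} = t = \adr{v\delta_2}_{p_2}$, and $t$ is then propagated down to $\adr{v\delta_1 w}_i$ and $\adr{v\delta_2 w'}_{i'}$. I would conclude by chaining three equalities: \cref{obs:easy3} in the subtree of $v\delta_1$ gives $\adr{\btr(v\delta_1)}_{p_1} = \adr{\btr(v\delta_1)w}_i$; the same observation in the subtree of $v\delta_2$ gives $\adr{\btr(v\delta_2)}_{p_2} = \adr{\btr(v\delta_2)w'}_{i'}$; and \cref{obs:trivial}(iii), applied to the blocking team $\bt{\btr(v\delta_1),\btr(v\delta_2)}$ of $(\delta,v)$ with $j=p_1$, $j'=p_2$, yields $\adr{\btr(v\delta_1)}_{p_1} = \adr{\btr(v\delta_2)}_{p_2}$. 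Composing the three equalities gives $\adr{\btr(v\delta_1)w}_i = \adr{\btr(v\delta_2)w'}_{i'}$, as required. The crux of the whole argument, and where I expect the real work, is the locality bookkeeping that pins the shared local term to the origin $(\delta,v)$ exactly; the remainder is a short composition of already-proved facts.
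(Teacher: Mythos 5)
Your proof is correct and takes essentially the same route as the paper's: in both, the heart of the argument is to locate $t$ at positions of the sibling atoms $\adr{v\delta_1}$ and $\adr{v\delta_2}$, apply condition (iii) of \cref{obs:trivial} to the blocking team $\bt{\btr(v\delta_1),\btr(v\delta_2)}$, and propagate the resulting equality downward with \cref{obs:easy3}. The only difference is presentational: the paper obtains the localization step in one line from \cref{lem:local-terms-belong-to-a-border-trigger}, uniformly for all terms, whereas you split into global/local cases and re-derive the local case by pinning the origin of $t$ to $(\delta,v)$ --- correct, but this re-proves what that observation already packages.
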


\begin{proof}
From Observation \ref{lem:local-terms-belong-to-a-border-trigger} we know that 
$t=\adr{v\delta_1}_j=\adr{v\delta_2}_{j'}$ for some positions $j,j'$. Now recall that 
the pair $\bt{\btr(v\delta_1),\btr(v\delta_2)}$ is a blocking team for $(\delta,v)$. By condition (iii) of Observation 
\ref{obs:trivial} this gives us that $\adr{\btr(v\delta_1)}_j= \adr{\btr(v\delta_2)}_{j'}$. Now use Observation
\ref{obs:easy3}.
\end{proof}

\subsection{Proof of  Lemma \ref{lem:crucial}. The Harder Part}

For ease of reading, we restate \cref{obs:trivial} here:%
\begin{customobs}{\ref{obs:trivial}}[shortened]
Let $\rho$ be a rule of the form
\centerline{ $\Forall{\vx \vy} \phi(\vx, \vy) \to \Exists{\vz} \psi_1(\vy, \vz)\wedge \psi_2(\vy, \vz).$} \\Then $\bt{v_1,v_2}$ is a blocking team for trigger $\pair{\rho, u}$ \iffi
\begin{enumerate}[(i)]
    \item the predicate symbols of  $\adr{v_\iota}$ and of $\adr{u\rho_\iota}$ are equal;
    \item if $j$ is a frontier position in $\psi_\iota$ then $\adr{u\rho_\iota}_j= \adr{v_\iota}_j$;
    \item if $j$ is any  position in $\psi_\iota$ and $j'$ is any  position in $\psi_{\iota'}$ and $\adr{u\rho_\iota}_j=\adr{u\rho_{\iota'}}_{j'}$ then also $\adr{v_\iota}_j=\adr{v_{\iota'}}_{j'}$.
\end{enumerate}
\end{customobs}

Now we finish the proof of \cref{lem:crucial}.
%
%
Let $\rho$ be as in \cref{obs:trivial}.
Since $\pi$ is a trigger from $\derriv$, all terms which occur in atoms $\adr{u}$, $\adr{u\rho_1}$ and $\adr{u\rho_2}$ are global.
Since $\bt{w_1,w_2}$ is a blocking team for $\pi$, we know that it satisfies the conditions {(i)--(iii)} from Observation \ref{obs:trivial}.
Our goal is to prove that $ \bt{\presuperior(w_1),  \presuperior(w_2)}$ also satisfies these {three} conditions.

\noindent
\textbf{Condition (i).}
{We know, from} Lemma \ref{lem:prettier2}, that each of $\presuperior(w_\iota)$ is  prettier than $w_\iota$.
This means, in particular, that $\presuperior(w_\iota)$ and $w_\iota$ have the same predicate symbol, and thus
condition (i) from Observation \ref{obs:trivial} is satisfied for $\presuperior(w_\iota)$.

\noindent
\textbf{Condition (ii).}
Consider some frontier position $j$ in $\psi_\iota$. We know, from Observation \ref{obs:trivial} (ii) that
$\adr{u\rho_\iota}_j= \adr{w_\iota}_j$. Since $\adr{u\rho_\iota}_j$ is a global term, and since $w_\iota \lesspretty \presuperior(w_\iota)$,
we get that condition (ii) from this Observation  also holds for $\presuperior(w_\iota)$.

\noindent
\textbf{Condition (iii).}
Here things are more complicated. Directly from the fact that $w_\iota \lesspretty \presuperior(w_\iota)$ we get that condition (iii)
is satisfied for $ \bt{\presuperior(w_1),  \presuperior(w_2)}$, but only if $\iota=\iota'$.

Recall that 
$\bt{w_1,w_2}$ is a blocking team, so it satisfies condition (iii) for $\rho$. So, 
in order to  prove that (iii) is also true for 
 $ \bt{\presuperior(w_1),  \presuperior(w_2)}$ 
it suffices to show that 
if $\adr{w_1}_j=t=\adr{w_2}_{j'}$ for some term $t$ then also 
$\adr{\presuperior(w_1)}_j=\adr{\presuperior(w_2)}_{j'}$.




Now, if $t$  is global, then
 Lemma \ref{lem:prettier2} and Def. \ref{lem:def-presuperior} tell us that $t=\adr{w_1}_j =  {\adr{\presuperior(w_1)}_j}$ and
$t=\adr{w_2}_{j'}= {\adr{\presuperior(w_2)}_{j'}}$.

If $t$ is a local term, then
from Observation \ref{lem:local-terms-belong-to-a-border-trigger} (iii) we learn that 
there exists trigger $(\delta,v)$ such that 
$w_1,w_2\in \theforest_{v\delta_1}\cup  \theforest_{v\delta_2}$. Now 
use Observation \ref{obs:easy3} and Lemma \ref{lemma:easy4}.\qed

\subsection{From Corollary \ref{lem:so-useful} to the First Main Lemma}\label{sec:creating-mixed}

In this subsection we  construct $\demriv$ such that  $\pair{\demriv, \derriv}$ is a mixed derivation.
Define $\demriv$ as any oblivious derivation such that:
\begin{itemize}
  \item $\derriv$ is a sub-sequence of $\demriv$, that is $\demriv$ respects the order of triggers in $\derriv$.
  \item If $\pi=(\rho,u)$ is a trigger such that $\pi \not \in \derriv$, and if $\mu_1$ and $\mu_2$ are triggers in 
  $\derriv$, which produce 
  $\btr(u\rho_1)$ and  $\btr(u\rho_1)$, then $\pi$ appears in $\demriv$ after both $\mu_1$ and $\mu_2$.
   
  \end{itemize}

Note that each trigger that does not appear in $\derriv$ only needs to wait for at most two triggers from
$\demriv$ to get its slot in $\demriv$. Therefore such a derivation $\demriv$ exists.
It also follows easily from the construction, and from \cref{lem:so-useful}, that $\pair{\demriv, \derriv}$ is indeed a mixed derivation.


\newcommand{\theforestomega}{\theforest_{\singleton{\omega}}}
\section{\texorpdfstring{$\omega$}{omega}-Sensitive-Path-Derivations}\label{sec:atomic}

We now introduce yet another kind of derivation:

\begin{definition}

An \emph{$\omega$-sensitive-path-derivation}
is a pair $\pair{\demriv, \depriv}$ where $\demriv = \{\pi^\demriv_n\}_{n\in \mathbb N}$  is an infinite oblivious derivation over $\{\omega\}$, and
$\depriv$ is an infinite restricted derivation over $\{\omega\}$, such that:

\begin{enumerate}[(i)]
\item $ \depriv$ is a subsequence of $\demriv$;
\item if $\pi^\demriv_n\in \depriv$ for some $n$ then $\pi^\demriv_n$ is not blocked in  $\demrivstep{n}$;
\item the set $\{u:\; \exists \rho\in \rs \;(\rho,u)\in \depriv\} $ is an infinite path.
\end{enumerate}

Since we adopted the convention that the symbol $\deprivforest$ is kept for 
the set of all addresses created by $ \depriv$, we will use $\ppath$ to denote  the infinite path
from  (iii). It is easy to see that $\deprivforest^{\text{\tiny 0}}\subset \deprivforest$,
and that $w\rho_\iota\in \deprivforest$ if and only if $w\rho_{\iota'}\in \ppath$ for some $\iota'$.

\end{definition}

The above definition is similar to Definition \ref{def:mixed-chase} of Mixed Derivation, but with three differences: 
unlike the $\derriv $ in Definition \ref{def:mixed-chase}, our $\depriv$  {\bf does not} need to be fair.
On the other hand, now we require that the set of the addresses of triggers in $\derriv $ (which we see as ``sensitive'', as opposed to the oblivious triggers in $\demriv\setminus\derriv$ ) is not some 
arbitrary forest, like in Definition \ref{def:mixed-chase}, but something very specific, namely a path.
The third difference is that $\omega$-sensitive-path-derivation does not start from some arbitrary 
instance $\inst$ but from a single atom. 

The following is our Second Main Lemma.

\begin{lemma}[Second Main Lemma]\label{lem:second-main-lemma}
The following two statements are equivalent:

\begin{enumerate} 
  \item There exists a mixed derivation for some $\inst$.
  \item There exists an $\omega$-sensitive-path-derivation for some $\omega$.
\end{enumerate} 
\end{lemma}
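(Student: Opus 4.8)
The plan is to prove the two implications separately. The two notions differ in only three respects: a mixed derivation (\cref{def:mixed-chase}) requires its restricted part $\derriv$ to be fair, it allows the sensitive addresses to form an arbitrary sub-forest, and it starts from an arbitrary finite $\inst$; an $\omega$-sensitive-path-derivation drops fairness, but demands that the sensitive addresses form a single infinite path and starts from one atom. So the real content is (a) extracting a path out of an infinite forest, and (b) converting between ``fair'' and ``not necessarily fair'' while preserving the blocking bookkeeping.

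\textbf{Direction $(1)\Rightarrow(2)$.} I would start from a mixed derivation $\pair{\demriv,\derriv}$ over some $\inst$. Since $\derriv$ is infinite and every applied trigger is active, $\derrivforest$ is infinite; as each address has only finitely many children (at most $2|\rs|$) and $\inst$ is finite, König's Lemma yields an infinite path $w_0,w_1,\dots$ in $\derrivforest$ with $w_0=\omega\in\inst$. The key auxiliary fact I would isolate is that the subtree of $\theforest$ rooted at $\omega$ coincides, labels and all, with $\theforestomega$: local nulls are named $z^\rho_{v}$ by the address $v$ that creates them, so they never leak out of the $\omega$-subtree, and only $\omega$'s own constants propagate inward. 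Consequently a blocking team living inside the $\omega$-subtree is literally a blocking team in the $\inst$-chase, so ``unblocked in the $\inst$-chase'' transfers downward to ``unblocked in $\theforestomega$.'' I then take $\demriv'$ to be the restriction of $\demriv$ to the triggers whose address lies in the $\omega$-subtree (an oblivious derivation over $\{\omega\}$, since $\demriv$ enumerates all triggers and respects parent-before-child), and let $\depriv$ be the path triggers $(\rho_n,w_n)$ in their $\demriv$-order. Conditions (i) and (iii) are immediate; for condition (ii) I use exactly the mixed-derivation guarantee that each $(\rho_n,w_n)\in\derriv$ is unblocked in $\demrivstep{k_n}$, transferred down to the $\omega$-subtree via the label-preservation fact. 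This is the step where I genuinely need the mixed derivation (its oblivious-unblocking clause), not merely a fair restricted one.

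\textbf{Direction $(2)\Rightarrow(1)$.} I would set $\inst=\{\omega\}$ and, rather than building a mixed derivation by hand, first construct a plain infinite fair restricted derivation $\derriv'$ over $\{\omega\}$ and then invoke the First Main Lemma (\cref{trm:restrictedChaseToMixedChase}) to obtain a mixed derivation, which gives statement (1). To build $\derriv'$ I sweep through $\demriv$ in order, maintaining the current restricted forest $F$, and apply the trigger $\pi^\demriv_n=(\rho,u)$ exactly when $u\in F$ and $(\rho,u)$ is unblocked in $F$ (in $F$, not in the larger oblivious forest). The path triggers survive: when such a trigger is reached, its address is already present by induction along the path, and it is unblocked in $\demrivstep{n}\supseteq F$ by hypothesis (ii), hence unblocked in $F$; so $\derriv'$ is infinite. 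Fairness is where the care goes: since $\demriv$ is oblivious it enumerates every trigger and processes parents before children, so each $(\rho,u)$ with $u\in\derrivforest'$ is examined at a moment when $u$ is already in $F$; at that moment it is either applied (and becomes inactive) or blocked in $F$, and blocking is monotone, so it is blocked in the final $\derrivforest'$ as well. Choosing the applicability test to be ``unblocked in $F$'' is precisely what makes this monotonicity argument work.

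The main obstacle I anticipate is aligning the two unblocking conditions across the directions. In $(1)\Rightarrow(2)$ the definitional condition (ii) is stated against the oblivious forest, which is strictly larger than the restricted one, so fairness alone does not suffice and I must lean on the mixed derivation together with the $\omega$-subtree/label-preservation transfer. In $(2)\Rightarrow(1)$ the opposite tension appears: to get fairness I must test applicability against the small restricted forest $F$ (so that ``skipped'' means ``blocked in $F$'', hence blocked in the limit by monotonicity), while still deriving path-survival from the stronger oblivious-unblocking hypothesis. By comparison, the König extraction and the verification that $\demriv'$ is a genuine oblivious derivation over $\{\omega\}$ are routine.
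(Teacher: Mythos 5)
Your proposal is correct and takes essentially the same route as the paper: the same path extraction and restriction of $\demriv$ to the $\omega$-subtree for $(1)\Rightarrow(2)$ (with non-blocking transferred downward via the subset relation between derivation steps), and the same greedy sweep through $\demriv$ --- applying triggers that are present and unblocked in the current restricted forest --- for $(2)\Rightarrow(1)$, with matching infiniteness and fairness arguments. Your one deviation, concluding $(2)\Rightarrow(1)$ by invoking \cref{trm:restrictedChaseToMixedChase} rather than claiming, as the paper does, that the constructed pair $\pair{\demriv, \derriv}$ is itself a mixed derivation, is harmless and in fact slightly more careful, since the greedy construction only certifies non-blocking in $\derrivforest^{n}$ and not in the larger step $\demrivstep{n}$ demanded by \cref{def:mixed-chase}.
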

\begin{proof}
    The proof can be found in appendix \cref{app:second-main}.
\end{proof}

Notice that the expression \emph{``$\omega$-sensitive-path-derivation''} 
in the lemma could not 
be replaced with \emph{``an infinite restricted (but possibly not fair) chase derivation creating an infinite path.''} That is, we cannot skip the conditions relating to $\demriv$. This is because, as 
Example \ref{ex:drugi} shows, such an infinite unfair derivation can easily exist even if $\rs \in \allallterm$.

Lemma \ref{lem:second-main-lemma} looks inconspicuous, but it constitutes a major step in our reduction from the problem $\allallterm$ to Monadic Second Order Logic. And let us try to explain why.

$\theforest$ is  (almost) a tree and the steps of a restricted derivation are its subtrees. So, one could think, nothing is more natural than the idea of employing MSOL, which is decidable on such trees, as a decision procedure for termination. 

It is not that simple. Restricted chase is about the ordering of trigger applications, and it is not at all clear how this ordering could be expressed in MSOL. One could naively imagine the problem could be solved by a formula like \emph{``For each $u\in \theforest$ there exists a set $\agraph$ of all addresses that were created before $u$''}. But this would not work, as there is no way to enforce consistency between such sets $\agraph$, that is to make sure that the sets produced by such universal quantifiers  faithfully represent steps of the same derivation.

\cref{lem:second-main-lemma} (and \cref{trm:restrictedChaseToMixedChase}) says that one can reduce $\allallterm$ to the existence of a sort of infinite restricted chase, in which (due to its linear nature) the order of applications is undisputed and does not need to be guessed by our formula.  

Of course, $\omega$-sensitive-path-derivations feature oblivious triggers which somehow need to interact with the sensitive ones
but, as we will see in Lemma~\ref{th:mixed-to-properties-on-thetree}, they are manageable.

\newcommand{\existsSensitivePath}{\mathsf{\Omega}}
\newcommand{\isAbsRep}{\ensuremath{\mathsf{correct}}\xspace}

\section{Employing Monadic Second Order Logic}\label{sec:mso}
In this last section of this paper we prove:

\begin{lemma}[Third Main Lemma]\label{lem:third-main} One can construct, for a given atom $\omega$,
 a sentence $\existsSensitivePath_{\omega}$ 
of MSOL such that $\omega\addresses^* \models \existsSensitivePath_{\omega}$ \iffi there exists an $\omega$-sensitive-path-derivation.
\end{lemma}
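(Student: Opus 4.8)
The plan is to reduce the existence of an $\omega$-sensitive-path-derivation, which as stated quantifies over \emph{sequences} of triggers, to a purely static property of an infinite path in the tree $\omega\addresses^*$, and only then to encode that property in MSOL. MSOL cannot directly speak about the application order of triggers; but the point of \cref{lem:second-main-lemma} (and \cref{trm:restrictedChaseToMixedChase}) was precisely to trade fairness and arbitrary instances for a single infinite path whose top-down creation order is already forced by linearity. So the first step I would take is to prove an intermediate \emph{characterization lemma}: an $\omega$-sensitive-path-derivation exists iff there is an infinite path $\ppath \subseteq \omega\addresses^*$ such that (a) for every $u \in \ppath$ the path-trigger $(\rho,u)$ is \emph{not} blocked by any team $\bt{v_1,v_2}$ whose two members were necessarily created no later than $u$; and (b) each address spoils the applicability of only finitely many later path-triggers, i.e.\ for every $u \in \ppath$ there are only finitely many $w \in \ppath$ below $u$ whose trigger admits a blocking team drawn from the addresses that may legitimately be scheduled before it. Condition (a) guarantees each sensitive trigger can fire when its turn comes, while the finiteness in (b) is exactly what lets a limit/diagonal construction interleave the oblivious (insensitive) triggers without ever permanently blocking the path. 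The sentence $\existsSensitivePath_\omega$ then existentially quantifies a monadic variable for $\ppath$ and asserts (a) and (b).

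The second step is to show that the atomic ingredients of (a) and (b) are MSO-definable over $\omega\addresses^*$ equipped with the relations $\child{\rho_\iota}$. The predicate symbol of $\adr{u}$ depends only on the last address symbol of $u$, so it is trivially definable. The delicate ingredient --- and the technical heart of the section --- is term equality across addresses: for fixed positions $i,j$ I need a formula true exactly when $\adr{x}_i$ and $\adr{y}_j$ denote the same term. Here I would argue that every term has a well-defined \emph{origin}: tracing a position upward through the frontier positions of the rules applied along the branch, one either reaches the address where the corresponding existential null was minted, or reaches the root $\omega$ (whose internal equalities are hard-coded, since $\omega$ is fixed). Two position--address pairs carry the same term iff they share an origin, and both ``climbing to the origin'' and ``the origins coincide'' are expressible using the ancestor relation together with MSO reachability (set quantification for the transitive closure of the per-edge, rule-determined propagation of frontier terms). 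Given term equality, the blocking-team relation ``$\bt{v_1,v_2}$ blocks $(\rho,u)$'' is, by \cref{obs:trivial}, a fixed finite Boolean combination of predicate-equality and term-equality atoms (one instance per rule and per finite choice of positions), hence MSO-definable.

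The third step is the assembly, and here the crucial subtlety is to phrase every set occurring in (a) and (b) using only the \emph{ancestor} (prefix) order, scrupulously avoiding any comparison of depths across branches: the notorious equal-level relation is \emph{not} MSO-definable on infinite trees, and a naive reading of ``created no later than $u$'' as ``$|v|\le|u|$'' would invoke exactly it. This is harmless because the sensitive forest $\deprivforest$ is path-shaped: twins aside, it is the single path $\ppath$, so the addresses created no later than $u$ are precisely the ancestors of $u$ on $\ppath$ together with their (sibling) twins, all defined from the path variable via the prefix order; likewise the freely schedulable addresses are those off $\deprivforest$ that do not have $u$ as an ancestor. Finally, since $\ppath$ is linearly ordered by the prefix relation, ``finitely many $w \in \ppath$'' becomes ``bounded along $\ppath$'' --- there is a $w_0 \in \ppath$ beyond which no path node has the offending property --- which is MSO-definable. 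Combining these pieces yields $\existsSensitivePath_\omega$, with $\omega\addresses^* \models \existsSensitivePath_\omega$ iff conditions (a) and (b) hold for some path, iff an $\omega$-sensitive-path-derivation exists.

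I expect the main obstacle to be the MSO-definability of term equality and, hand in hand with it, the discipline of reducing every quantity to the prefix order so that equal-level never sneaks in. The origin-tracing argument must correctly handle terms that recur at many positions (global constants of $\omega$, frontier-propagated nulls) and the possibility that a single rule places one existential variable in several head positions; getting the base case at $\omega$ and the per-edge propagation relation exactly right is where the care is needed. The remaining ingredient, the characterization lemma itself --- in particular the scheduling/limit argument behind condition (b) --- is where the real combinatorial content lies, but it is of the same flavour as the construction already carried out for \cref{trm:restrictedChaseToMixedChase}.
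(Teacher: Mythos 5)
Your overall route coincides with the paper's: you reduce existence of an $\omega$-sensitive-path-derivation to a static property of an infinite path --- your conditions (a) and (b) are exactly conditions (i) and (ii) of the paper's \cref{th:mixed-to-properties-on-thetree}, phrased via the sets the paper calls $\preSet$, $\freeSet$ and $\fragileSet$ --- and you then encode that property in MSOL, correctly insisting that everything be expressed through the prefix order and twins so that the (non-MSO-definable) equal-level relation never appears; this is precisely how the paper's auxiliary formulas are written. The only real divergence is the mechanism for term equality: the paper existentially guesses a finite ``abstract atom'' labeling $\absadr{\_}$ (possible since the treewidth of $\adr{\theforest}$ is bounded by the arity) and verifies it locally, whereas you define equality directly by origin-tracing with MSO reachability along the rule-determined frontier propagation. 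Both rest on the same propagation facts (\cref{lem:local-terms-belong-to-a-border-trigger}) and both work; note, though, that your sketch silently also needs to define membership in $\theforest$ inside $\omega\addresses^*$ (the paper reads it off from $\absadr{u}\neq\bot$; in your setting it must be derived from internal equalities along each branch), and that the source set for the teams in condition (b) must be pinned down as $\freeSet(u)\cup\preSet(w)$ --- free \emph{relative to $u$}, not relative to $w$ --- which your prose leaves ambiguous.

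The genuine gap is that the characterization lemma itself is asserted, not proven, and the paper explicitly treats its proof as what remains to finish \cref{lem:third-main}. You give an intuition only for the sufficiency direction (conditions imply a derivation can be scheduled), which indeed parallels the slotting argument of \cref{trm:restrictedChaseToMixedChase} and is carried out in the appendix via the function $\safe$. But you never address necessity: why does the existence of an $\omega$-sensitive-path-derivation force each $\fragileSet(u)$ to be finite? This direction is \emph{not} ``of the same flavour'' as the First Main Lemma. The paper needs new machinery for it: the observation that all communication between $\freeSet(u)\cup\preSet(u)$ and the path below $u$ passes through the frontier terms of $\adr{u}$ (\cref{obs:komunikacja}), the finite-index equivalences $\sim$ and $\approx$ (\cref{def:sim} and the one following it), their invariance properties (\cref{obs:takie-same-2,obs:takie-same-1}), and the finiteness of the harm sets $\harma(v)$ and $\harmb(v,s)$ (\cref{lem:skonczone}), which together exhibit $\fragileSet(u)$ as a finite union of finite sets. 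Without this direction your sentence $\existsSensitivePath_\omega$ is sound only one way: its truth implies a derivation exists, but a derivation could exist while no path satisfies your conditions, so the resulting decision procedure could wrongly certify termination. Moreover, the precision issue above is not cosmetic --- it is exactly the $u$-relative freeness in $\freeSet(u)\cup\preSet(w)$ that lets $\adr{u}$ ``control'' the relevant blocking teams and makes the finite-index argument go through.
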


Notice that, in view of Lemmas \ref{lem:second-main-lemma} and 
\ref{trm:restrictedChaseToMixedChase},
 and due to decidability of Monadic Second Order Logic on infinite trees, 
this will end the proof of Theorem \ref{th:main}.\smallskip

For the rest of this paper we set  $\inst = \{\omega\}$, for a fixed $\omega$, so $\theforest\subseteq \omega\addresses^*$.
Let us now characterize existence of 
$\omega$-sensitive-path-derivations in a language better suited for MSOL.\smallskip


\noindent
{Let $\ppath$ be a path in $\theforest$, and (using notations from  Section \ref{sec:atomic})} let 
$\depriv
$ be the (unique!) infinite derivation comprising all the triggers needed to create $\ppath$. Recall that the set
$\deprivforest$ of addresses created by $\depriv$ contains, together
with each address $u\rho_\iota$ in $\ppath$ also the address $u\rho_{\iota'}$ for $\iota'\neq \iota$.
Notice also, that for each $n\in \mathbb N$ there is exactly one $u\in\ppath$ such that $|u|=n$.

\begin{definition} 

Given an address $u \in \ppath$  we define:
\begin{itemize}
  \item
 $\preSet(u)$ as the set of  addresses  $v\in\deprivforest$ such that 
 $|v|\leq |u|$;
  
  \item $\freeSet(u)$ as the set of addresses  $v\not\in\deprivforest$ without $u$ as prefix.
  \end{itemize}
\end{definition}

\noindent
The idea is that $\preSet(u)$ are the addresses {that must be created \emph{before} $u$ in every oblivious derivation},
 and $\freeSet(u)$ are the addresses that can be created at {\emph{any}} time, before or after $u$.

For $u\in \ppath$ by $\pi_u$ denote the (unique) trigger of the form $(\rho,u)$ in $\depriv$.
Also, $\mathsf{team}(v,v',\pi)$ is short for ``$\bt{v,v'}$ is a blocking team for $\pi$.''
For $u\!\in\! \ppath$ define $\fragileSet(u)$ as the set: $$\singleton{w\!\in\!\ppath \!: u\!\in\!\preSet(w)\,\wedge\, \thinexists{v, s \!\in\! \freeSet(u)\!\cup\! \preSet(w)} \; \mathsf{team}(v,s,\pi_w)} $$


\begin{lemma}\label{th:mixed-to-properties-on-thetree}
The following statements are equivalent:
\begin{enumerate}
  \item There exists an $\omega$-sensitive-path derivation
  \item 
There exists an infinite  path $\ppath\subseteq \theforest$ s.t. for each  $u \in \ppath$:
\begin{enumerate}[(i)]
\item 
\!$\pi_u$ is not blocked by any team $ \bt{v,\!v'}$ for  $v,\!v'\!\!\in\! \preSet(u)$.
\item 
\!The set $\fragileSet(u)$  is finite.
\end{enumerate}
\end{enumerate}
\end{lemma}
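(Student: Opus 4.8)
The plan is to prove the two implications separately, after fixing notation: let $\ppath = u_0, u_1, \dots$ be the path (with $u_k$ at depth $k$), let $\depriv$ be the induced restricted derivation whose triggers are exactly $\{\pi_{u_k}\}_k$, and recall that $\preSet(u_k)$ is precisely $\deprivforest$ restricted to depth $\le k$, i.e.\ the sensitive addresses created by $\pi_{u_0},\dots,\pi_{u_{k-1}}$. The observation underlying everything is that $\preSet(u_k)$ is exactly what $\depriv$ has produced right before it fires $\pi_{u_k}$, and that in any oblivious $\demriv$ having $\depriv$ as a subsequence, all of $\preSet(u_k)$ is already present when $\pi_{u_k}$ is applied (since the relevant triggers of $\depriv$ precede $\pi_{u_k}$).

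For $(1)\Rightarrow(2)$, given an $\omega$-sensitive-path-derivation $\pair{\demriv,\depriv}$, I take $\ppath$ as its sensitive path. Condition (i) is immediate: when $\pi_{u_k}$ is applied at step $n$ of $\demriv$ it is unblocked in $\demrivstep{n}$, and since $\preSet(u_k)\subseteq\demrivstep{n}$, any blocking team contained in $\preSet(u_k)$ would also be a blocking team in $\demrivstep{n}$, which is impossible. For condition (ii) I argue by contraposition: if some $\fragileSet(u)$ were infinite, I choose for each $w\in\fragileSet(u)$ a blocking team inside $\freeSet(u)\cup\preSet(w)$ for $\pi_w$; by (i) at least one member must be a genuinely free address of $\freeSet(u)$, and since $\pi_w$ is unblocked this free member is created only after $\pi_w$ in $\demriv$. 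The key step is to show that these blocking free members may be drawn from a \emph{finite} pool; once that pool is fully materialised (which happens at a finite time, by obliviousness, and whose atoms then persist), every $w\in\fragileSet(u)$ fired later is blocked, contradicting unblockedness, because $\fragileSet(u)$ infinite forces triggers $\pi_w$ to be fired arbitrarily late.

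For $(2)\Rightarrow(1)$, given a path $\ppath$ satisfying (i) and (ii), I build $\demriv$ by scheduling. First, (i) makes $\depriv$ itself restricted, since the step of $\depriv$ just before $\pi_{u_k}$ equals $\preSet(u_k)$. Then I interleave the oblivious (free) triggers by a dovetailing argument: to each free trigger I assign a deadline equal to the largest index of a sensitive trigger it could help block, and I apply it only after that deadline has passed, while dovetailing so that every trigger is eventually applied (hence $\demriv$ is oblivious). Finiteness of all $\fragileSet(u)$ is exactly what guarantees these deadlines are finite. With this schedule, when $\pi_{u_k}$ is fired the only addresses present are $\preSet(u_k)$ together with free addresses of deadline below $k$, none of which lies in a blocking team for $\pi_{u_k}$; combined with (i), $\pi_{u_k}$ is unblocked, so $\pair{\demriv,\depriv}$ is an $\omega$-sensitive-path-derivation.

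\textbf{Main obstacle.} Both directions hinge on the same delicate point: relating the intrinsically dynamic quantity ``the set of sensitive triggers a given free address can help block'' to the static set $\fragileSet(u)$. Concretely, I must justify that a blocking partner may, without loss of generality, be confined to $\freeSet(u)\cup\preSet(w)$, which requires using the localization of nulls (in the spirit of \cref{lem:local-terms-belong-to-a-border-trigger}) together with the blocking-team characterization \cref{obs:trivial} to pin down where the terms of a blocking team can live, followed by a pigeonhole argument over atom types that collapses infinitely many potential blockers into a finite pool. This analysis, together with the bookkeeping that keeps $\demriv$ oblivious while never prematurely blocking a sensitive trigger, is the technical heart of the lemma; the easy parts are condition (i) and the reduction of ``$\depriv$ restricted'' to (i).
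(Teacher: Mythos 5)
Your plan reproduces the paper's architecture --- condition (i) is immediate in one direction and yields restrictedness of $\depriv$ in the other, and everything reduces to relating blocking teams to the sets $\fragileSet(u)$ --- but in both directions the step you defer as the ``technical heart'' is the entire content of the lemma, and what you do write down is not enough to reconstruct it. For $(1)\Rightarrow(2)$(ii), the paper's proof runs as follows: by Observation~\ref{obs:komunikacja}, every term shared between an atom of $\adr{\freeSet(u)\cup\preSet(u)}$ and an atom $\adr{w}$ with $w\in\ppath$, $u\in\preSet(w)$, occurs in $\adr{u}$ at a frontier position; this makes the equivalence $\sim$ (Definition~\ref{def:sim}) and, crucially, the \emph{pair} equivalence $\approx$ finite-index and blocking-invariant (Observations~\ref{obs:takie-same-1} and~\ref{obs:takie-same-2}); then $\fragileSet(u)$ decomposes as $\bigcup\harmb(v,s)\cup\bigcup\harma(v)$ (Observation~\ref{obs:suma}), and each harm set is finite because in an $\omega$-sensitive-path-derivation a member of a blocking team must be created after the trigger it blocks (Lemma~\ref{lem:skonczone}). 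Your sketch pools the free team members \emph{individually}, and that is genuinely insufficient: when both members of a team lie in $\freeSet(u)\cup\preSet(u)$ and share a local term (one not occurring in $\adr{u}$), replacing one member by a $\sim$-equivalent representative need not preserve condition (iii) of Observation~\ref{obs:trivial}; this is exactly why the paper needs $\approx$ on pairs, as well as the case split between $\harmb$ (both members in $\freeSet(u)\cup\preSet(u)$) and $\harma$ (partner in $\preSet(w)\setminus\preSet(u)$). None of this case analysis, nor the invariance proofs, appears in your proposal.

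For $(2)\Rightarrow(1)$ there is a concrete flaw rather than just an omission. Your deadline --- ``the largest index of a sensitive trigger it could help block'' --- quantifies over partners anywhere in $\theforest$, and its finiteness is \emph{not} what finiteness of $\fragileSet(u)$ asserts: $\fragileSet(u)$ only counts teams confined to $\freeSet(u)\cup\preSet(w)$, whereas your deadline also counts teams whose partner is, say, a descendant of $w$ or an address below $u$ outside $\freeSet(u)$. Showing that such teams cannot make the deadline infinite would itself require the confinement argument you defer (and if you instead relativize ``could help block'' to the addresses actually present, the definition becomes circular, since the schedule depends on the deadlines). The paper sidesteps all of this: a free trigger whose created addresses lie in $\freeSet(u)$ is slotted after the creator of $\safe(u)$, an element of $\ppath$ deeper than every element of $\fragileSet(u)$; the verification then needs no pigeonhole at all, because at the moment $\pi_w$ fires every present address lies in $\freeSet(u^*)\cup\preSet(w)$ for the deepest relevant $u^*$ (using that $\freeSet(u')\subseteq\freeSet(u'')$ whenever $u'$ is an ancestor of $u''$ on $\ppath$), and $w\notin\fragileSet(u^*)$ rules out any blocking team among them. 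So the partner-confinement difficulty you flag is real, but it is resolved differently in the two directions --- by the finite-index equivalences in $(1)\Rightarrow(2)$ and by the schedule itself in $(2)\Rightarrow(1)$ --- and your proposal supplies neither resolution.
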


Before we prove Lemma \ref{th:mixed-to-properties-on-thetree}, 
let us explain how it implies Lemma \ref{lem:third-main}. 
And this is actually by fairly routine reasoning.


\noindent
{{\bf From \cref{th:mixed-to-properties-on-thetree} to \cref{lem:third-main}.}}
{Recall,} that all the MSOL formula $\existsSensitivePath_{\omega}$  will be able to see
is $\omega\addresses^*$, and the relations {$\child{\rho_\iota}$, for each rule $\rho$ and $\iota$}. Using this language we want to write a formula
equivalent to Statement 2 of 
\cref{th:mixed-to-properties-on-thetree}.

 So, first of all, we need the formula to be able to guess, and verify (using monadic second order resources), which addresses {$w$} are really in $\theforest$ and, if $w\in \theforest$, what is  $\adr{w}$. There is a little problem here, 
since there are infinitely many atoms in $\adr{\theforest}$ and we can only afford to have finitely many unary predicates in $\existsSensitivePath_{\omega}$. This is solved by guessing a function $\absadr{\_}$, which is a ``compressed'' version of
$\adr{\_}$. {This guess can be made - using standard techniques - as the treewidth of $\thechase$ is less than the arity of ${\mathfrak P}$}.

Then, having defined this $\absadr{\_}$, we are able to write a formula\footnote{Strictly speaking, we have one such formula for each pair $i,j$ for $i$ and $j$ not greater {than the arity of $\mathfrak P$.}} $\simeq_{i.j}$, with two free first order 
variables, such that $\omega\addresses^* \models u \simeq_{i.j} v$ if and only if $\adr{u}_i=\adr{v}_j$. Observation \ref{lem:local-terms-belong-to-a-border-trigger}(i) and (ii) is helpful here.

{With $\simeq_{i.j}$ at hand, writing $\existsSensitivePath_{\omega}$ becomes a classroom exercise in MSOL programming. Indeed, the definitions of the notions of trigger and blocking team, as presented in Section \ref{sec:triggers}, rely solely on equalities between terms - which we are now able to express. What then remains to be expressed in Statement 2 of Lemma \ref{th:mixed-to-properties-on-thetree}, such as infinite paths or infinite sets, are part of the standard MSOL toolkit.}
{The full argument can be found in appendix \cref{app:mso-extended}.}

\noindent

Now, in order to finish the proof of Lemma \ref{lem:third-main}, and hence also Theorem \ref{th:main}, we only need
to prove
Lemma  \ref{th:mixed-to-properties-on-thetree}.

\subsection{Proof of Lemma \ref{th:mixed-to-properties-on-thetree}}\label{sec:magic-proof}

\begin{figure}
    \centering
    \includegraphics[width=0.45\linewidth]{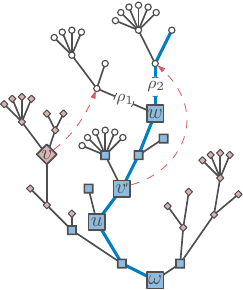}
    \caption{A visual guide to condition (ii) of Statement 2 from \cref{th:mixed-to-properties-on-thetree}. Node $w$ appears in $\fragileSet(u)$ and trigger $\pi_w = \pair{\rho, w}$ is blocked by a team of nodes $\bt{v,v'}$ represented with red dashed edges. Bold-blue path represents $\ppath$, blue square nodes are $\preSet(w)$, red diamond nodes are $\freeSet(u)$.}
    \label{fig:magic}
\end{figure}

Let $\ppath$ be an arbitrary infinite path in $\theforest$, 
and let
$\depriv $ relate to 
$\ppath$ in the usual way.
In order to show Lemma \ref{th:mixed-to-properties-on-thetree} it is enough to
prove that the following two conditions are equivalent:

\begin{enumerate}
\item[A.] There exists an oblivious derivation $\demriv$
 such that  $\pair{\demriv, \depriv}$ is an $\omega$-sensitive-path-derivation.
\item[B.] $\ppath$ satisfies  conditions (i) and (ii) from Lemma \ref{th:mixed-to-properties-on-thetree} (2).\smallskip
\end{enumerate}

\noindent
{\bf B $\Rightarrow$ A.}
Can be found in appendix \cref{app:lemma39}.

\noindent
{\bf A $\Rightarrow$ B.} Now suppose $\pair{\demriv, \depriv}$ is an $\omega$-sensitive-path-derivation, for some oblivious derivation $\demriv$.

First of all, notice that condition (i) is trivially satisfied for each $u$, since for a pair 
 $\pair{\demriv, \depriv}$ to be an $\omega$-sensitive-path-derivation, $\depriv$ itself must be 
 a restricted derivation. 

We now focus on condition (ii). Let us fix some 
$u\in\ppath$. We need to show that the set  $\fragileSet(u)$ is finite.
The  idea is  that $u$  ``controls" the communication between elements of $\freeSet(u)\cup \preSet(u)$ and elements $w$ in $\ppath$ for 
which
$u\in \preSet(w)$. 

Let us now try to formalize this idea. 
Suppose $\pi_u=(\rho,u)$, which means that  rule 
$\rho$ is the one that, applied to $\adr{u}$, produced the
the element following $u$ on $\ppath$. Suppose also that $\rho$ is:\hfill
$\Forall{\vx \vy} (\phi(\vx, \vy) \to \Exists{\vz} \psi_1(\vy, \vz)\wedge \psi_2(\vy, \vz))$.

Then the following is of course true:

\begin{observation}\label{obs:komunikacja}
If a term $t$ occurs in an atom of $\adr{\freeSet(u)\cup \preSet(u)}$ and in some 
$\adr{w}$ such that 
$w\in\ppath$ and
$u\in \preSet(w)$, then there is a frontier position $i$ in $\phi$ such that 
$t=\adr{u}_i$.

\end{observation}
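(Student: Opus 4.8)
The plan is to reduce the statement to a single, purely local fact about the application of $\rho$ at $u$: I will show that the term $t$ must occur simultaneously in $\adr{u}$ and in the child of $u$ that lies on $\ppath$, and then read off from \cref{def:rule-app} that such a shared term can only sit at a frontier position of $\phi$. Write $\ppath = p_0, p_1, p_2, \dots$ with $p_0 = \omega$ and $p_{j+1}$ a child of $p_j$; since $u, w \in \ppath$ and $u \in \preSet(w)$ gives $|u| \le |w|$, we have $u = p_k$ and $w = p_m$ with $u$ a prefix of $w$ (on a path the prefixes of $p_m$ are exactly $p_0, \dots, p_m$). I treat the substantive case $m > k$, i.e.\ $w$ a proper descendant of $u$. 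Let $\iota$ be such that $p_{k+1} = u\rho_\iota$.

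The key observation about $s$ is that it is never a proper descendant of $u$: if $s \in \preSet(u)$ then $|s| \le |u|$, and if $s \in \freeSet(u)$ then $u$ is not a prefix of $s$; either way $u$ is not a proper prefix of $s$. Since $w$ is a proper descendant of $u$ while $s$ is not, the term $t$, occurring in both $\adr{s}$ and $\adr{w}$, must ``cross'' $u$, and I would make this precise using \cref{lem:local-terms-belong-to-a-border-trigger}(i),(ii). Let $r$ be the longest common prefix of $s$ and $w$; as $r$ is a prefix of $w = p_m$ it equals some $p_j$, and $j \le k$ (otherwise $u = p_k$ would be a proper prefix of $r$, hence of $s$, contradicting the previous sentence). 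If $s = r$, then $s = p_j$ with $j \le k$ is a prefix of $w$, and \cref{lem:local-terms-belong-to-a-border-trigger}(i) applied along $p_j \earliereq p_m$ puts $t$ in every $\adr{p_i}$ with $j \le i \le m$, in particular in $\adr{u} = \adr{p_k}$ and $\adr{p_{k+1}}$. If instead $s$ and $w$ are incomparable, \cref{lem:local-terms-belong-to-a-border-trigger}(ii) yields sibling prefixes $s' \earliereq s$ and $w' \earliereq w$ sharing $t$; here $w' = p_l$ for some $l$, and the fact that $s'$ is a sibling of $p_l$ together with $u$ not being a proper prefix of $s$ forces $l \le k$, after which \cref{lem:local-terms-belong-to-a-border-trigger}(i) along $p_l \earliereq p_m$ again gives $t \in \adr{u} \cap \adr{p_{k+1}}$.

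With $t \in \adr{u} \cap \adr{p_{k+1}}$ in hand, the conclusion is immediate. By \cref{def:rule-app}, $\adr{p_{k+1}} = \ruleapply_\iota(\rho, \adr{u}, u)$ is obtained from the head atom $\psi_\iota(\vy, \vz)$ by sending each frontier variable $y$ to $h(y)$, where $h$ is the homomorphism from $\phi$ to $\adr{u}$, and each existential variable to a fresh null $z_u^\rho$ that does not occur in $\adr{u}$. Hence every term of $\adr{p_{k+1}}$ that also occurs in $\adr{u}$ must be the image $h(y)$ of some frontier variable $y$. Since $y$ occurs in the single body atom $\phi$ at a frontier position $i$, and $h$ is predicate- and position-preserving, we get $t = h(y) = \adr{u}_i$ with $i$ a frontier position of $\phi$, as required.

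I expect the only delicate step to be the incomparable case in the tracing paragraph: one must locate the sibling meeting point $p_l$ of $s$ and $w$ and argue that it lies at depth $l \le k$ purely from ``$u$ is not a proper prefix of $s$''. Everything else — the reduction, the use of \cref{lem:local-terms-belong-to-a-border-trigger}, and the final rule-application reading — is routine. (The literal hypothesis also permits $w = u$, for which the ``frontier'' conclusion can fail, since then $t$ may sit at a non-frontier position of $\adr{u}$ inherited from the rule that created $u$; this is harmless, as only finitely many $w \in \ppath$ lie at the depth of $u$, which is all the surrounding finiteness argument for $\fragileSet(u)$ needs.)
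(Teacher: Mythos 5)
Your proof is correct, and it supplies exactly the argument the paper leaves implicit: the paper offers no proof at all, introducing Observation~\ref{obs:komunikacja} with ``the following is of course true,'' and your route---tracing the shared term $t$ into both $\adr{u}$ and $\adr{p_{k+1}}$ via Observation~\ref{lem:local-terms-belong-to-a-border-trigger}(i),(ii), then reading off from Definition~\ref{def:rule-app} that a term common to $\adr{u}$ and its child must be the image of a frontier variable of $\phi$---is precisely the formalization of the paper's ``$u$ controls the communication'' intuition. Your closing caveat is also a legitimate catch: for $w=u$ the statement as literally written can fail (take $s=w=u$ and $t$ a term of $\adr{u}$ occurring only at non-frontier positions of $\phi$), and your disposal of it is sound, since it perturbs the sets in Observations~\ref{obs:takie-same-1}, \ref{obs:takie-same-2} and \ref{obs:suma} by at most the single path element at depth $|u|$, which cannot affect the finiteness of $\fragileSet(u)$.
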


Now, imagine you are the atom $\adr{u}$ and you want to somehow classify the atoms and the pairs in 
$\adr{\freeSet(u)\cup \preSet(u)}$.
 This is what you would do:

\begin{definition}\label{def:sim}
For  addresses $v,v'\in \freeSet(u)\cup \preSet(u) $ we define $v\sim v'$ if and only if
there is an isomorphism, from $\adr{v}$ to $\adr{v'}$, which is an identity on terms 
which occur in $\adr{u}$.
\end{definition}

\begin{definition}
For  addresses $v,v',s,s'\in \freeSet(u)\cup \preSet(u) $ we define ${\pair{v,s}\approx \pair{v',s'}}$ \iffi
there is an isomorphism, from {$\pair{\adr{v},\adr{s}}$ to $\pair{\adr{v'},\adr{s'}}$}, which is an identity on terms 
in $\adr{u}$.
\end{definition}

Both $\sim$ and $\approx$ are  equivalence relations with finite index.
Let now $w\in \ppath$ be such that $u\in \preSet(w)$.
The next  two observations follow easily from Observations \ref{obs:komunikacja} and  \ref{obs:trivial}:

\begin{observation}\label{obs:takie-same-2}
Suppose $v,v',s,s'\in \freeSet(u)\cup \preSet(u) $ and {$\pair{v,s}\approx \pair{v',s'}$}.
Suppose also that $\bt{v,s}$ is a blocking team for $\pi_w$.
Then $\bt{v',s'}$ is also a blocking team for $\pi_w$.
\end{observation}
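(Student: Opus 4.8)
The plan is to check the three conditions (i)--(iii) of Observation~\ref{obs:trivial} for the pair $\bt{v',s'}$, given that $\bt{v,s}$ satisfies them and that the isomorphism witnessing $\pair{v,s}\approx\pair{v',s'}$ fixes every term of $\adr{u}$. Let $\sigma$ be the rule of $\pi_w$, with head atoms $\psi^\sigma_1,\psi^\sigma_2$, and let $g$ denote the isomorphism from $\pair{\adr{v},\adr{s}}$ to $\pair{\adr{v'},\adr{s'}}$ that is the identity on the terms of $\adr{u}$; thus $g$ sends the term at position $k$ of $\adr{v}$ (resp.\ $\adr{s}$) to the term at position $k$ of $\adr{v'}$ (resp.\ $\adr{s'}$), and it preserves all equalities between terms, both inside one atom and across the two. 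Throughout I write $v_1=v$, $v_2=s$, $v'_1=v'$, $v'_2=s'$.

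Conditions (i) and (iii) are immediate and use nothing beyond $g$. For (i), $g$ preserves predicate symbols, so $\adr{v'_\iota}$ has the same predicate as $\adr{v_\iota}$, which by hypothesis agrees with that of $\adr{w\sigma_\iota}$. For (iii), assume $\adr{w\sigma_\iota}_j=\adr{w\sigma_{\iota'}}_{j'}$; blocking of $\bt{v,s}$ then forces $\adr{v_\iota}_j=\adr{v_{\iota'}}_{j'}$, and $g$ carries this equality to $\adr{v'_\iota}_j=\adr{v'_{\iota'}}_{j'}$, since $g$ preserves equalities across both atoms of the pair.

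The only step with content is (ii), and it is exactly here that Observation~\ref{obs:komunikacja} enters. Fix a frontier position $j$ of $\psi^\sigma_\iota$ and put $t=\adr{w\sigma_\iota}_j$. As $j$ is a frontier position, $t$ is the image of a frontier variable under the body-to-$\adr{w}$ homomorphism, so $t$ occurs already in $\adr{w}$; and condition (ii) for $\bt{v,s}$ gives $t=\adr{v_\iota}_j$, so $t$ also occurs in an atom of $\adr{\freeSet(u)\cup\preSet(u)}$. Since $w\in\ppath$ and $u\in\preSet(w)$, Observation~\ref{obs:komunikacja} applies to $t$ and yields $t=\adr{u}_i$ for some position $i$; in particular $t$ is a term of $\adr{u}$, whence $g(t)=t$. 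Now $\adr{v'_\iota}_j=g(\adr{v_\iota}_j)=g(t)=t=\adr{w\sigma_\iota}_j$, which is condition (ii) for $\bt{v',s'}$.

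I expect the only genuine difficulty to be the recognition underlying (ii): the frontier terms appearing there are precisely the terms shared between the external atom $\adr{w}$ and the local side $\adr{\freeSet(u)\cup\preSet(u)}$, so they fall under Observation~\ref{obs:komunikacja} and hence lie in $\adr{u}$. This is what lets the only non-structural feature of $g$---its being the identity on $\adr{u}$---do its work. Conditions (i) and (iii) survive under any isomorphism of the atom pair; only (ii) couples the team to the atoms $\adr{w\sigma_\iota}$ outside it, and it is precisely this coupling that $u$ controls.
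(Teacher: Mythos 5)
Your proof is correct and is exactly the argument the paper intends: the paper states that this observation ``follows easily from Observations \ref{obs:komunikacja} and \ref{obs:trivial}'', and you have spelled out precisely that derivation, checking conditions (i)--(iii) of Observation \ref{obs:trivial} and invoking Observation \ref{obs:komunikacja} at the one place it is needed, namely to show that the frontier term in condition (ii) lies in $\adr{u}$ and is therefore fixed by the isomorphism.
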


\begin{observation}\label{obs:takie-same-1}
Let $v,v'\in \freeSet(u)\cup \preSet(u) $ and ${v}\sim {v'}$.
Let also $s\hspace{-1mm}\in\hspace{-1mm}\deprivforest$ be such that 
$|u|\hspace{-0.5mm}\leq|s|\hspace{-0.5mm}\leq|w|$ and $\bt{v,s}$ is a blocking team for $\pi_w$.
\hspace{-2mm} Then $\bt{v',s}$ is  a blocking team for $\pi_w$.
\end{observation}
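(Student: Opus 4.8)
The plan is to read ``blocking team'' through the term-level characterisation of Observation \ref{obs:trivial} and to show that replacing $v$ by a $\sim$-equivalent $v'$ preserves each of its three conditions. First I would fix the rule of $\pi_w$ in the normal form of that observation, say with head $\psi_1(\vy,\vz)\wedge\psi_2(\vy,\vz)$, and assume (up to symmetry) that the team $\bt{v,s}$ maps the first head atom to $\adr{v}$ and the second to $\adr{s}$. The workhorse is the isomorphism $g\colon\adr{v}\to\adr{v'}$ witnessing $v\sim v'$, which by Definition \ref{def:sim} is the identity on $\adom{\adr{u}}$. The whole argument then reduces to one slogan: \emph{every term of $\adr{v}$ that is relevant to blocking already lives in $\adr{u}$, so $g$ fixes it.}

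Conditions (i) and (ii) I expect to dispatch quickly. For (i), $g$ is an atom isomorphism, so $\adr{v'}$ carries the same predicate as $\adr{v}$, which by hypothesis already matches the first head atom. For (ii), a frontier position $j$ of $\psi_1$ makes $\adr{v}_j$ equal to a frontier term of $\pi_w$, hence a term occurring in $\adr{w}$; since $w\in\ppath$, $u\in\preSet(w)$ and $v\in\freeSet(u)\cup\preSet(u)$, Observation \ref{obs:komunikacja} pins $\adr{v}_j$ to a frontier term of $\adr{u}$, so $g$ fixes it and the frontier equality survives the substitution.

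The crux will be condition (iii), the cross-equalities between the two head atoms. A position $j$ of $\psi_1$ and a position $j'$ of $\psi_2$ whose images coincide force, via $\bt{v,s}$, a common value $t=\adr{v}_j=\adr{s}_{j'}$, and I must show $g(t)=t$, i.e.\ that $t\in\adom{\adr{u}}$. Since $t$ occurs in $\adr{v}$ (an atom of the $\freeSet(u)\cup\preSet(u)$ region) and in $\adr{s}$, I would use the hypotheses $s\in\deprivforest$ and $|u|\le|s|\le|w|$ to locate $s$ weakly below $u$ and weakly above $w$ on $\ppath$, so that the occurrence of $t$ in $\adr{s}$ yields an occurrence in a genuine path atom $\adr{w'}$ with $u\in\preSet(w')$; Observation \ref{obs:komunikacja} then delivers $t\in\adom{\adr{u}}$. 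The within-atom equalities of (iii) transfer for free because $g$ is an isomorphism. Assembling (i)--(iii) shows $\bt{v',s}$ is a blocking team, and the companion Observation \ref{obs:takie-same-2} would be handled by the same template applied to both coordinates simultaneously.

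I expect the main obstacle to be exactly the location argument inside (iii): when $s$ is not on $\ppath$ itself but a twin leaf of a path node, one must verify that the shared term $t$ is not a null created privately at that twin (which could never reach $\adr{v}$), so that tracing $t$ to a true path atom --- and hence invoking Observation \ref{obs:komunikacja} --- is legitimate. The length bound $|u|\le|s|\le|w|$ is precisely the hypothesis that excludes the degenerate placements of $s$ and makes this tracing go through.
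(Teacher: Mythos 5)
Your plan is the same as the paper's: verify the three conditions of Observation~\ref{obs:trivial} for $\bt{v',s}$, using the fact that the isomorphism $g$ witnessing $v\sim v'$ is the identity on $\adom{\adr{u}}$, and using Observation~\ref{obs:komunikacja} to argue that every term of $\adr{v}$ relevant to blocking already lies in $\adom{\adr{u}}$. Your treatment of (i), (ii), and of the within-atom equalities in (iii) is correct and matches the paper. The gap is in the cross-equality case of (iii), exactly where you placed it, and your proposed cure fails: the bound $|u|\le|s|\le|w|$ does \emph{not} exclude the degenerate placement of $s$. Take $s$ to be the twin of $u$ itself: if $u=p\rho_\iota\in\ppath$, then $s=p\rho_{\iota'}\in\deprivforest$ has $|s|=|u|$, so the hypothesis admits it; yet the strict descendants of $s$ are not descendants of $u$ and are outside $\deprivforest$, hence they lie in $\freeSet(u)$. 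So $v$ may well sit below $s$, and a null created by the trigger $(\rho,p)$ whose existential variable occurs only in the $\iota'$-th head atom occurs in $\adr{s}$ and in $\adr{v}$, but in no path atom whatsoever and not in $\adr{u}$. Your tracing of $t$ to ``a genuine path atom'', and with it the appeal to Observation~\ref{obs:komunikacja}, then collapses.

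In this boundary case the statement itself fails, not merely your proof of it. Sketch: with $\adr{p}=P(a)$ and the rule $P(x)\to\Exists{z,y}Q(x,z)\wedge R(x,z,y)$ creating $\adr{u}=Q(a,z_0)$ on the path and $\adr{s}=R(a,z_0,y_0)$ as its twin, rules $R(x,z,y)\to\Exists{w'}S(z,y,w')\wedge\cdots$ and $R(x,z,y)\to\Exists{y'}R(x,z,y')\wedge\cdots$ produce descendants $v,v'$ of $s$ with $\adr{v}=S(z_0,y_0,c_0)$ and $\adr{v'}=S(z_0,y_1,c_1)$, so $v\sim v'$ and $v,v'\in\freeSet(u)$; if the path trigger $\pi_u$ comes from $Q(x,z)\to\Exists{n,m}S(z,n,m)\wedge R(x,z,n)$ (so $w=u$), then $\bt{v,s}$ blocks $\pi_u$ (map $n_0\mapsto y_0$, $m_0\mapsto c_0$), while $\bt{v',s}$ does not, since condition (iii) for the shared existential $n$ would force $y_1=y_0$. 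So you correctly smelled the danger but misdiagnosed the cure; in fairness, the paper's own one-line justification (``whenever $\adr{v}_i=\adr{s}_j$, this term is a term of $\adr{u}$'') has the same blind spot. Everything --- your tracing argument, the paper's claim, and the observation --- becomes true under the strict bound $|u|<|s|$, equivalently $s\in\preSet(w)\setminus\preSet(u)$: then $s$ and its entire subtree lie strictly below $u$, so a null private to that subtree can never reach $\adr{v}$, and every term shared by $\adr{v}$ and $\adr{s}$ does occur in $\adr{u}$ by the connectedness properties of Observation~\ref{lem:local-terms-belong-to-a-border-trigger}. That strict regime is also the only one in which the paper applies the observation, namely to $s\in\preSet(w)\setminus\preSet(u)$ inside $\harma(v)$ in Observation~\ref{obs:ostatnia}.
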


  The above holds because whenever $\adr{v}_i = \adr{s}_j$ for some positions $i$ and $j$, we have that $\adr{v}_i$ is a term of $\adr{u}$ and hence 
  (by Definition \ref{def:sim}) we get $\adr{v}_i=\adr{v'}_i$.

We are going to present the set $\fragileSet(u)$ as a certain union of sets.
But first we need:

\begin{definition}
Suppose $v,s\in \freeSet(u)\cup \preSet(u) $. Then:\smallskip\\
$ \harmb(v,s)=\{w\in \ppath : u\in \preSet(w) \;\wedge \;  \teamtriple(v,s,\pi_w)\}  $\smallskip\\
$ \harma(v)=$\smallskip\\
\mbox{$\{w\in \ppath : u\in \preSet(w) \; \wedge \; \exists_{s\in \preSet(w)\setminus \preSet(u)}  \;\teamtriple(v,s,\pi_w)\}  $} 
\end{definition}

Our next observation (still assuming that $\pair{\demriv, \depriv}$ is an $\omega$-sensitive-path-derivation) is that:

\begin{lemma}\label{lem:skonczone}
The sets $\harma(v)$ and $\harmb(v,s)$ are finite for all $v,s\in \freeSet(u)\cup \preSet(u)$.
\end{lemma}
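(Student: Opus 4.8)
The plan is to exploit the one hypothesis not yet used: that $\pair{\demriv,\depriv}$ is an $\omega$-sensitive-path-derivation, so by condition (ii) every trigger of $\depriv$ is \emph{unblocked at the moment it fires} inside the oblivious derivation $\demriv$. For each $w\in\ppath$ I write $n_w$ for the position of $\pi_w$ in $\demriv$. Since $\depriv$ creates $\ppath$ and a child address can only be produced after its parent exists, the path-triggers fire in order of increasing level, so $w\mapsto n_w$ is strictly increasing in $|w|$; as there is exactly one path address of each length, for every fixed bound $N$ the set $\singleton{w\in\ppath : n_w<N}$ is finite. The whole lemma then reduces to showing that membership of $w$ in $\harmb(v,s)$ or $\harma(v)$ forces $n_w$ below such a fixed bound.

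For $\harmb(v,s)$ this is immediate. The addresses $v,s\in\freeSet(u)\cup\preSet(u)$ are fixed, hence created in $\demriv$ at fixed steps $n_v,n_s$. Suppose $\bt{v,s}$ is a blocking team for $\pi_w$. If both $v$ and $s$ already belonged to $\demrivstep{n_w}$, then $\bt{v,s}$ would witness that $\pi_w$ is blocked in $\demrivstep{n_w}$, contradicting the sensitive-path condition. Hence at least one of $v,s$ is created strictly after $\pi_w$ fires, i.e.\ $n_w<\max(n_v,n_s)$. Only finitely many $w\in\ppath$ satisfy this, so $\harmb(v,s)$ is finite.

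For $\harma(v)$ the same idea works, but the partner $s$ now ranges over $\preSet(w)\setminus\preSet(u)$ and is not fixed, so I first need to locate when such an $s$ is created; this is the only real subtlety. Recall that $\deprivforest$ consists of $\ppath$ together with the twins of its nodes, and that every address of $\deprivforest$ of length $m\ge 1$ is produced by the path-trigger $\pi_{w_{m-1}}$, where $w_{m-1}$ is the unique path node of length $m-1$. Thus any $s\in\preSet(w)$, having $s\in\deprivforest$ and $|s|\le|w|$, is created by a path-trigger of level $|s|-1<|w|$, hence strictly before $\pi_w$; in particular $s\in\demrivstep{n_w}$. Now if $w\in\harma(v)$, choose a witnessing $s\in\preSet(w)\setminus\preSet(u)$ with $\bt{v,s}$ blocking $\pi_w$. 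Since $s\in\demrivstep{n_w}$ while $\pi_w$ is unblocked there, we must have $v\notin\demrivstep{n_w}$, i.e.\ $n_w<n_v$. As $n_v$ is a fixed step and $\singleton{w\in\ppath:n_w<n_v}$ is finite, $\harma(v)$ is finite as well.

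I expect the main obstacle to be precisely the bookkeeping in the last paragraph: establishing that every element of $\preSet(w)$ is created before $\pi_w$ fires. This rests on the exact shape of $\deprivforest$ (path plus twins, one creating trigger per level, as recorded just after the definition of the $\omega$-sensitive-path-derivation) together with the monotonicity of the firing order along $\ppath$. Once these two facts are in place, both finiteness claims collapse to the single statement that only finitely many path nodes are applied before a fixed step of $\demriv$.
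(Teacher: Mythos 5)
Your proof is correct and takes essentially the same route as the paper's (far terser) appendix proof: by condition (ii) of an $\omega$-sensitive-path-derivation, $\pi_w$ is unblocked in $\demrivstep{n_w}$, so not all members of a blocking team can already be present at that step, while every element of $\preSet(w)$ already is, which bounds $n_w$ by the creation step of $v$ (resp.\ of $v$ or $s$) and leaves only finitely many candidates $w\in\ppath$. Your explicit bookkeeping --- the monotonicity of $w\mapsto n_w$ along $\ppath$ and the fact that each $s\in\preSet(w)$ is created before $\pi_w$ fires --- is precisely what the paper's two-sentence argument leaves implicit.
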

\begin{proof} 
Proof can be found in appendix \cref{app:mso-extended}.
\end{proof}


It follows directly from the definition of $\fragileSet(u)$ that:

\begin{observation}\label{obs:suma}
The set $\fragileSet(u)$ is equal to:\medskip\\
$\bigcup_{v,s\in \freeSet(u)\cup \preSet(u)} \harmb(v,s)\cup
\bigcup_{v\in \freeSet(u)\cup \preSet(u)} \harma(v)$
\end{observation}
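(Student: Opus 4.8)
The plan is to prove the equality by the two inclusions, both of which are in essence a matter of unfolding the definitions of $\fragileSet(u)$, $\harma$ and $\harmb$; the only genuine content is some bookkeeping about \emph{where} the two members of a blocking team can sit. Throughout I would use the elementary fact that $u\in\preSet(w)$ forces $|u|\le|w|$ and hence $\preSet(u)\subseteq\preSet(w)$.

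For the inclusion $\supseteq$ I would simply check that each set in the union is contained in $\fragileSet(u)$. If $w\in\harmb(v,s)$ with $v,s\in\freeSet(u)\cup\preSet(u)$, then $w\in\ppath$, $u\in\preSet(w)$ and $\teamtriple(v,s,\pi_w)$ holds; since $\preSet(u)\subseteq\preSet(w)$, both $v$ and $s$ lie in $\freeSet(u)\cup\preSet(w)$, so $w\in\fragileSet(u)$. The argument for $\harma(v)$ is identical, the only difference being that the witnessing partner lies in $\preSet(w)\setminus\preSet(u)\subseteq\preSet(w)$.

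For the inclusion $\subseteq$ I would take $w\in\fragileSet(u)$ together with a witnessing team $\bt{v,s}$, where $v,s\in\freeSet(u)\cup\preSet(w)$ and $\teamtriple(v,s,\pi_w)$, and split the universe $\freeSet(u)\cup\preSet(w)$ into the disjoint pieces $X:=\freeSet(u)\cup\preSet(u)$ and $Y:=\preSet(w)\setminus\preSet(u)$ (disjoint since $\freeSet(u)$ avoids $\deprivforest$ while $Y\subseteq\deprivforest$). If both $v,s\in X$ then $w\in\harmb(v,s)$; if at least one of them lies in $Y$ then, as discussed below, $w$ is captured by an appropriate $\harma$ set, with the $X$-member as the fixed argument and the $Y$-member as the quantified partner.

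The step that needs care — and the main obstacle — is the configuration in which both team members would lie in $Y\subseteq\preSet(w)$, together with the orientation of $\harma$: as literally written $\harma(v)$ fixes its \emph{first} coordinate in $X$ and quantifies the \emph{second} over $Y$, so a team whose $Y$-member appears as the first coordinate is not syntactically an instance of $\harma$. I would resolve both points using restrictedness of $\depriv$: at the moment $\pi_w$ is applied the whole of $\preSet(w)$ has already been generated, so a blocking team for $\pi_w$ contained entirely in $\preSet(w)$ would contradict restrictedness. Hence every witnessing team has at least one coordinate in $\freeSet(u)$, which eliminates the ``both in $Y$'' case, and the remaining mixed case is covered by reading $\harma$ as ``\emph{some} team member lies in $\preSet(w)\setminus\preSet(u)$'' (equivalently, by also unioning over the orientation with the two coordinates swapped). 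With this understood the partition is exhaustive, and the two inclusions give the claimed equality.
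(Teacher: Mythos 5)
Your proof is correct, and it is in fact the only real proof on the table: the paper gives none, asserting that \cref{obs:suma} ``follows directly from the definition of $\fragileSet(u)$''. The inclusion $\supseteq$ and the cases of your partition with both team members in $X=\freeSet(u)\cup\preSet(u)$, or with the first member in $X$ and the second in $Y=\preSet(w)\setminus\preSet(u)$, are indeed pure unfolding, which is presumably all the authors had in mind. The two points you isolate are genuine, and unfolding alone does not settle them. First, a witnessing team lying entirely inside $Y$ would place $w$ in $\fragileSet(u)$ while placing it in no $\harmb(v,s)$ and no $\harma(v)$; your exclusion of this case via restrictedness of $\depriv$ is correct and is legitimate here, because \cref{obs:suma} is invoked inside the A~$\Rightarrow$~B direction of \cref{th:mixed-to-properties-on-thetree}, where $\pair{\demriv,\depriv}$ is assumed to be an $\omega$-sensitive-path-derivation; concretely, the derivation step of $\depriv$ immediately preceding the application of $\pi_w$ is exactly $\preSet(w)$, so your argument amounts to condition~(i) of that lemma applied at $w$, which the paper has already established at this point of the proof. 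Second, the orientation mismatch is a real (if harmless) slip in the paper: by \cref{def:blocking} a blocking team is an ordered pair, so $\harma(v)$ as literally written misses witnesses whose $Y$-member occupies the first coordinate, and the stated equality would fail whenever such a team is the only witness. Your symmetrized reading of $\harma$ is clearly the intended one, and it is innocuous downstream: the arguments for \cref{obs:takie-same-1}, \cref{obs:ostatnia} and \cref{lem:skonczone} do not depend on the order of the two team members, so the conclusion that $\fragileSet(u)$ is a finite union of finite sets is unaffected.
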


Using Observations \ref{obs:takie-same-2} and \ref{obs:takie-same-1}
it is also easy  to see that:

\begin{observation}\label{obs:ostatnia}
If $v \sim v'$ then    $ \harma(v)=\harma(v').$\\
If $\pair{v,s}\approx\pair{v',s'}$ then    $ \harmb(v,s)=\harmb(v',s')$.
\end{observation}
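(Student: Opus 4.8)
The plan is to derive both equalities directly from Observations \ref{obs:takie-same-1} and \ref{obs:takie-same-2}, exploiting only that $\sim$ and $\approx$ are symmetric relations; no new combinatorial work is needed, since the heavy lifting (that $\approx$- and $\sim$-equivalent addresses interact identically with the triggers $\pi_w$) has already been done.

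First I would dispatch the claim for $\harmb$, which is the cleaner of the two. Given $\pair{v,s}\approx\pair{v',s'}$, I take an arbitrary $w\in\harmb(v,s)$ and unfold the definition: $u\in\preSet(w)$ and $\bt{v,s}$ is a blocking team for $\pi_w$. Observation \ref{obs:takie-same-2} applies verbatim and yields that $\bt{v',s'}$ is a blocking team for $\pi_w$ as well; since the condition $u\in\preSet(w)$ is untouched, this gives $w\in\harmb(v',s')$, hence $\harmb(v,s)\subseteq\harmb(v',s')$. Because $\approx$ is symmetric, swapping the roles of the two pairs gives the reverse inclusion, and therefore equality.

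For the $\harma$ claim the only point requiring attention is the length bookkeeping demanded by Observation \ref{obs:takie-same-1}. Given $v\sim v'$ and $w\in\harma(v)$, I fix a witness $s\in\preSet(w)\setminus\preSet(u)$ with $\bt{v,s}$ blocking $\pi_w$. To invoke Observation \ref{obs:takie-same-1} I need $s\in\deprivforest$ together with $|u|\leq|s|\leq|w|$: membership $s\in\preSet(w)$ supplies both $s\in\deprivforest$ and $|s|\leq|w|$, while $s\notin\preSet(u)$ forces $|s|>|u|$. With these bounds in hand, Observation \ref{obs:takie-same-1} shows $\bt{v',s}$ blocks $\pi_w$, and since the same $s$ still lies in $\preSet(w)\setminus\preSet(u)$, it witnesses $w\in\harma(v')$ with no modification. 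Thus $\harma(v)\subseteq\harma(v')$, and symmetry of $\sim$ closes the argument.

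The main obstacle is essentially nonexistent: the one subtlety is noticing that the existential quantifier in the definition of $\harma$ ranges precisely over $\preSet(w)\setminus\preSet(u)$, which is exactly what supplies the constraint $|u|\leq|s|\leq|w|$ required by Observation \ref{obs:takie-same-1}. This is no accident---it is the reason that definition was stated with this restricted range rather than over all of $\freeSet(u)\cup\preSet(u)$, and it is what makes the reuse of the same witness $s$ (rather than a freshly chosen $s'$) possible in the $\harma$ case.
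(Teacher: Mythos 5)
Your proof is correct and follows exactly the route the paper intends: the paper dispatches this observation as an immediate consequence of Observations \ref{obs:takie-same-2} and \ref{obs:takie-same-1}, and your argument is precisely the careful unfolding of that claim, including the key bookkeeping that $s\in\preSet(w)\setminus\preSet(u)$ supplies $s\in\deprivforest$ and $|u|\leq|s|\leq|w|$ as required by Observation \ref{obs:takie-same-1}. Nothing is missing.
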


Now, since $\sim$ and $\approx$ are equivalence relations of finite index, it follows from 
Observation \ref{obs:ostatnia} that
the union of sets in Observation \ref{obs:suma} is in fact a finite union. 
Together with Lemma \ref{lem:skonczone}, this means, that $\fragileSet(u)$ is a finite union of finite sets and hence a finite set. 

\section{Conclusions and Future Work}

The next natural step is the decidability of the same problem for multi-headed guarded rules.

\begin{definition}
	An existential rule is \emph{guarded} if its body contains an atom, the \emph{guard}, featuring all variables of the body.
\end{definition}

\begin{conjecture}
	Membership in $\allallterm$ for sets of multi-headed guarded existential rules is decidable.
\end{conjecture}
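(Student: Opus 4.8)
The plan is to lift the three-step architecture of our linear proof to the guarded setting, replacing the atom-addressing forest by a \emph{bag-tree}: a canonical tree decomposition of the oblivious chase. For guarded rules the oblivious chase has bounded treewidth and organises naturally into bags, one per rule application, where the body match of a trigger lives in a single parent bag (its guard pins the whole body into one bag) and the head atoms, together with the frontier, form the child bag. Thus the full regular tree $\omega\addresses^*$ gets replaced by a fixed, finitely-branching regular tree whose nodes can carry bags of bounded size, hence drawn from finitely many isomorphism types; inside it the MSO formula will guess which nodes are realised and their abstract bag-labels, exactly as the linear proof guesses $\theforest$ and $\absadr{\_}$. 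The locality statement \cref{lem:local-terms-belong-to-a-border-trigger} becomes the guarded \emph{interface property}: a null created inside a subtree of the bag-tree occurs only in that subtree, and all term sharing between a bag and its descendants passes through the frontier terms of that bag. I would first set up this scaffold precisely, including the bag-level analogue of \cref{obs:trivial} characterising blocking teams, which are now tuples of bags whose size equals the number of head atoms.

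Next I would re-prove the three Main Lemmas over the bag-tree. The fairness-decoupling \cref{trm:restrictedChaseToMixedChase} should survive: I would define border and problematic \emph{bags}, a rank function measuring distance to the restricted sub-decomposition, and a \emph{prettier}/\emph{better} relation between bags induced by the blocking team of a border bag. The interface property is exactly what is needed to push the analogues of \cref{lem:prettier2} and \cref{lem:crucial} through, since replacing a border bag by its blocking-team bags should coherently improve the entire hanging subtree: the only terms shared across the interface are frontier terms, which the blocking team preserves. Granting this, the reduction to a single starting bag and a single infinite \emph{branch} of sensitive bags (the analogue of \cref{lem:second-main-lemma}) follows by the same route, with an infinite branch of the bag-tree playing the role of the infinite path and a guarded body-type playing the role of the single starting atom $\omega$. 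Finally the MSO step (\cref{lem:third-main}) transfers almost verbatim: bounded treewidth again lets us guess a finite ``abstract'' labelling of bags, term-equality across the tree is expressible as before, existence of a sensitive-branch-derivation becomes an MSO sentence over the infinite bag-tree, and decidability follows by Rabin's theorem; since there are finitely many starting-bag types, this yields the decision procedure.

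The main obstacle I expect is the guarded analogue of \cref{lem:crucial}, the heart of the fairness decoupling. In the linear case a node has a unique parent atom and local nulls are confined to a pair of sibling subtrees, so replacing a virtual address by its ``better'' version is controlled by the single frontier of a single body atom. With guarded multi-atom bodies a trigger consults several atoms assembled in its parent bag, and a blocking team is a tuple of whole bags rather than a pair of atoms, so one must show that replacing a border bag preserves \emph{all} cross-bag term identifications that any future blocking team could exploit, not merely those visible from one atom. Establishing the invariant that ``prettier bags produce prettier subtrees'' in the presence of several simultaneous body atoms is where genuinely new combinatorics are required, and where I would expect the guardedness hypothesis, rather than mere bounded treewidth, to be indispensable.
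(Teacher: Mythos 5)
First, be aware that this statement has no proof in the paper at all: it is stated as a \emph{conjecture}, and the concluding section is devoted to explaining why the authors believe the linear techniques do not extend to it. So your proposal is an attempt at an open problem, and it runs into exactly the obstruction the paper identifies. Your load-bearing premise --- ``the body match of a trigger lives in a single parent bag (its guard pins the whole body into one bag)'' --- conflates terms with atoms. Guardedness guarantees that all \emph{terms} of a body match occur in the guard atom; it does not guarantee that the side \emph{atoms} of the body are created anywhere near the guard in the forest. The paper's own example makes this concrete: with the rule $E(x,y),A(y)\to\exists z\; E(y,z)$, a node $w$ labelled $E(a,b)$ can acquire a child labelled $E(b,c)$ only after the atom $A(b)$ has been created at some node $v$, which may lie in a completely different branch. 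Hence, in the guarded multi-head setting, whether a node of your bag-tree exists at all, whether a trigger is active, and whether a blocking team is available are \emph{non-local} questions involving distant nodes, and every chase derivation must respect an order between such distant elements of the tree.

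This non-locality is fatal to the architecture you are trying to lift, well before the guarded analogue of \cref{lem:crucial} (which you flag as the main obstacle). The entire point of \cref{lem:second-main-lemma} and \cref{lem:third-main} is that, for linear rules, $\theforest$ is a statically defined arena in which the existence of a node depends only on its parent, so the MSO formula never has to express the \emph{order} of trigger applications; the paper stresses that naive attempts to encode derivation order in MSO (guessing, for each node, the set of nodes created before it) fail because consistency of these guesses cannot be enforced. In the guarded case, order constraints between distant nodes reappear, and the paper states explicitly that enforcing a total order on distant elements of the forest ``cannot be done by MSOL means in any obvious way''; it also notes that passing to tree decompositions does not remove this issue. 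Your postulated interface property controls the flow of \emph{terms} across bags, not the placement of side atoms, so it does not address this. Bounded treewidth and Rabin's theorem are not the missing ingredients; what is missing is a replacement for the locality of trigger activation, and that is precisely why the statement remains a conjecture rather than a theorem.
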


However, it is still unclear how the techniques used for the linear case extend to the guarded case, since they rely heavily on the forest structure of the chase. 

One could still define a forest structure in the guarded case, by setting $\adr{w\rho_\iota}$ as the $\iota$-th atom resulting from the application of rule $\rho$ with guard $\adr{w}$, but this induces other problems. Indeed, the creation of $\adr{w\rho_\iota}$ requires some side-atoms beside the guard, which may have addresses far from $w$. Consider for instance the guarded rule
\[E(x,y),A(y)\to\Exists{z} E(y,z)\]
and suppose that some vertices $v$ and $w$ are labelled with $A(b)$ and $E(a,b)$, respectively. Then, $w$ has a child $wa$ labelled $E(b,c)$, but any chase sequence must create $v$ before $wa$. We thus need to enforce a total order on distant elements of the forest, which cannot be done by MSOL means in any obvious way.

MSOL techniques have already been successful at dealing with the single-headed guarded case \cite{GMP23}, but in this case the tree considered is a join tree, which always exists in the single-headed case but not in the multi-headed case.
A natural generalization of join trees would be tree decompositions, which could be used since the chase with guarded rules admit finite treewidth. However, it is even less clear how the techniques developed for the linear case could apply on a tree decomposition, especially taking into account the fact that nodes are now labelled with sets of atoms and not just single atoms, and that this does not solve the issue of having to enforce an order on distant elements of the forest.

\newpage

\section*{Acknowledgments}
Jerzy Marcinkowski and Piotr Ostropolski-Nalewaja were supported by
Polish National Science Center (NCN) grant  2022/45/B/ST6/00457

On TU Dresden side, this work was partly supported
by DFG (German Research Foundation) in project 389792660 (TRR 248, \href{https://www.perspicuous-computing.science/}{Center for Perspicuous Systems}) and in the CeTI Cluster of Excellence;
by the Bundesministerium für Bildung und Forschung (BMBF) 
under European ITEA project 01IS21084 (\href{https://www.innosale.eu/}{InnoSale, Innovating Sales and Planning of Complex Industrial Products Exploiting Artifical Intelligence},
in the \href{https://www.scads.de}{Center for Scalable Data Analytics and Artificial Intelligence} (ScaDS.AI), 
and in project 13GW0552B (\href{https://digitalhealth.tu-dresden.de/projects/kimeds/}{KIMEDS});
by BMBF and DAAD (German Academic Exchange Service) in project 57616814 (\href{https://secai.org/}{SECAI}, \href{https://secai.org/}{School of Embedded and Composite AI});
and by the \href{https://cfaed.tu-dresden.de}{Center for Advancing Electronics Dresden} (cfaed).

\appendix

\bibliographystyle{kr}
\bibliography{references}

\newpage
\section{Proof of Lemma \ref{lem:second-main-lemma}}\label{app:second-main}

\begin{customlem}{\ref{lem:second-main-lemma}}
The following two statements are equivalent:
\begin{enumerate} 
  \item There exists a Mixed Derivation for some $\inst$.
  \item There exists an $\omega$-sensitive-path-derivation for some $\omega$.
\end{enumerate} 
\end{customlem}

 
\noindent
{\em Proof (1 $\Rightarrow$ 2).}
Assume $\pair{\demriv, \derriv}$ is a Mixed Derivation over some instance $\inst$. Take an infinite path $\flower$ of $\derrivforest$ starting in some atom $\omega$ of $\inst$, where $\derrivforest$ is, as usual, the set of addresses created by $\derriv$. Let $\depriv$ be a restricted derivation,
which is a subsequence of $\demriv$, and
such that for each $\pair{\rho, u} \in \depriv$ we have that $u\rho_\iota \in \flower$, for some $\iota$. 

Let now $\derivp$ be the derivation which is a subsequence of  $\demriv$, consisting of all triggers of $\demriv$ of the form $\pair{\delta, \omega u}$ for some $u$. Then $\derivp$ is an oblivious derivation over $\{\omega\}$.

It is now easy to notice  that $\pair{\derivp, \depriv}$ is indeed an $\omega$-sensitive-path-derivation. Consider a trigger $\pi$ of $\depriv$. Let $n$ be its position in $\derivp$ and $m$ be its position in $\demriv$. We know that $\pi$ is not blocked in the $m$-th step $\demrivstep{m}$ of $\demriv$. And, since the $n$-th step $\derivpstep{n}$of $\derivp$ is a subset of $\demrivstep{m}$, we have that $\pi$ is not blocked in $\derivpstep{n}$.\smallskip

\noindent
{\em Proof (2 $\Rightarrow$ 1).}
Assume $\pair{\demriv, \depriv}$ is an $\omega$-sensitive-path-derivation,
for some oblivious derivation $\demriv=\singleton{\derivtrig{n}}_\nN$ over $\singleton{\omega}$. 
%
%
Our intention is to build an infinite restricted and fair derivation $\derriv$ over $\{\omega\}$, being a
subsequence of $\demriv$, and such that $\pair{\demriv, \derriv}$ is a Mixed Derivation. 

In the following, we construct a derivation $\derriv$ by iterating over elements of $\demriv$. The construction is straightforward: 
Suppose $\derriv_n$ is the already constructed prefix of $\derriv$, and $\derrivforest^{n}$ is the set\footnote{Notice that $\derrivforest^{n}$ is not the $\derrivstep{n}$ from Definition \ref{def:chase-derivations}.}  containing $\omega$ and all addresses created by triggers in $\derriv_n$. If $\derivtrig{n}$ is in 
$\derrivforest^{n}$ and is
not blocked in $\derrivforest^{n}$, 
then define $\derriv_{n + 1}$ as the sequence $ \derriv_n; \derivtrig{n}$. Otherwise $\derriv_{n + 1} = \derriv_n$. 
Finally, define $\derriv$  as the union\footnote{Or, to be more precise, as the minimal sequence having each $\derriv_n$ as a prefix. } of all  $\derriv_n$. 

\smallskip
\noindent
The implication (2 $\Rightarrow$ 1) follows from the next two lemmas.%
\vspace{-0.5mm}%
\begin{lemma}\label{obs:mixed-restriction-is-restricted-and-fair}
Derivation $\derriv$ is restricted and fair.
\end{lemma}

\begin{proof}
Derivation $\derriv$ is restricted by definition. Now, we show it is fair.
To this end we need to show, for each trigger $\derivtrig{n} \in \demriv$, that if it 
appears in  $\derrivforest$ then it is either in $\derriv$ or is blocked in $\derrivforest$.
Consider some trigger $\derivtrig{n}=(\rho,u)$.
Clearly, the trigger that creates $u$ is somewhere before $\derivtrig{n}$ in $\demriv$.
Hence, if $\derivtrig{n}$ appears in $\derrivforest$, then it must appear already in 
$\derrivforest^{n}$. But if $\derivtrig{n}$ appears in $\derrivforest^n$ and is not blocked 
in $\derrivforest^n$ (and, in consequence, also in  $\derrivforest$) then it is in $\derriv_{n + 1}$. 
\end{proof}

\begin{lemma}\label{lem:derivp-contains-S}
Derivation $\derriv$ is infinite. 
\end{lemma}
\begin{proof} It will be enough to show that  $\derriv$ contains all triggers from $\defriv$.
Suppose, towards contradiction, that $\derivtrig{n}=(\rho,u)$ is the first trigger of $\defriv$ which is 
not in $\derriv$. Since all the previous triggers of $\defriv$ are in $\derriv$,
we know that $u\in \derrivforest^n$. So, the only reason for $\derivtrig{n}$ to be not in $\derriv$
is that it is blocked in $\derrivforest^n$.
But this contradicts the assumption that
$\pair{\demriv, \depriv}$ is an $\omega$-sensitive-path-derivation
\end{proof}

\newpage
\section{Proof of Lemma \ref{th:mixed-to-properties-on-thetree}; {\bf B $\Rightarrow$ A.}}\label{app:lemma39}

We need to somehow construct $\demriv$ such that $\pair{\demriv, \depriv}$ is an $\omega$-sensitive-path-derivation. This means that, like in Section \ref{sec:mixed}, we need to slot the triggers not in $\depriv$ between triggers from  $\depriv$ in a way that does not block triggers in  $\depriv$.

By assumption, for each $u\in\ppath$, the set $\fragileSet(u)$
is finite. So we can define a function $\safe: \ppath \rightarrow \ppath$ which provides, for each 
 $u\in\ppath$, an address $\safe(u)\in\ppath$ which is above all elements of $\fragileSet(u)$.
Formally speaking, $\safe(u)$ is any element of $\ppath$ such that $|u|<|\safe(u)|$ and for each $w\in \fragileSet(u)$ it holds that $|w|<|\safe(u)|$.

Notice that, for each trigger $\pi=(\rho, v)\notin\depriv$,
there exists $u\in \ppath$ such that $v\in \freeSet(u)\cup \preSet(u)$.
We can then safely slot $\pi$ anywhere 
after the trigger creating $\safe(u)$.\smallskip

\section{Extended arguments for Section \ref{sec:mso}}\label{app:mso-extended}

\subsection{From \texorpdfstring{$\omega\addresses^*$}{addresses} to \texorpdfstring{$\theforest$}{F}}\label{ssec:abstract-representations}

As all $\existsSensitivePath_{\omega}$  can see is $\omega\addresses^*$, we need a way for this formula to guess, and verify (using monadic second order resources), which addresses belong to $\theforest$
and, for each address $w\in\theforest$, what is the value of $\adr{w}$. This is a bit tricky, 
since there are infinitely many atoms in $\adr{\theforest}$ and we only can afford to have finitely many unary predicates in $\existsSensitivePath_{\omega}$. But we can guess a ``compressed'' version of
$\adr{\_}$ instead:

\begin{definition}
We shall call constants of $\omega$ and natural numbers each not greater than {thrice} the maximal arity of predicates from ${\mathfrak P}$ (denoted as $\aamaxint$) \emph{abstract terms}.
  An \emph{abstract atom} $\apred(\vk)$ is an atom such that $\apred\in {\mathfrak P}$, and $\vk$ is a tuple of abstract terms. 
  We denote by $\abstractatoms$ the set containing all abstract atoms and an additional symbol $\bot$.
\end{definition}

The set  $\abstractatoms$ is, of course, finite, so  we will be able to think of its elements as names of 
monadic predicates. 

The way the next definition is formulated may seem weird, but this is how 
Monadic Second Order Logic wants it to be written:

\begin{definition}\label{def:absadr}
Consider a relation ${\mathfrak R}\subseteq \omega{\mathbb A}^* \times \abstractatoms $, {denoted as $\absadr{\_}$ when ${\mathfrak R}$ is a function.
${\mathfrak R}$  is called \emph{correct} if:}
\begin{itemize}
\item it is {indeed} a function,
  \item  {$\adr{\omega}$ is\footnote{{Note that $\adr{\omega}$ contains only constant terms and so $\adr{\omega}$ is an abstract atom itself.}}  $\absadr{\omega}$,}
  \item for each $u\in \theforest$ and a trigger $\pair{\rho, u}$ we have that the tuple $\big(\adr{u},\adr{u\rho_1},\adr{u\rho_2}\big)$ is isomorphic to the tuple $\big(\absadr{u}, \absadr{u\rho_1}, \absadr{u\rho_2}\big)${, and}
  \item for each $u\in \omega{\mathbb A}^*\setminus \theforest$ we have  $\absadr{u}=\bot$.
\end{itemize}

We also use the term \emph{correct} for {$\absadr{\_}$ when $\mathfrak{R}$ is a function.}
\end{definition}

Intuitively, for a correct ${\mathfrak R}$, the set  $\theforest$ with function $\absadr{\_}$
looks locally exactly as $\theforest$ with function $\adr{\hspace{-0.15mm}\_\hspace{-0.15mm}}$\hspace{-0.3mm}. It is {straightforward to} create\footnote{This is because the tree width of the infinite instance $\adr{\theforest}$ is smaller than $\aamaxint$.}  a correct $\absadr{\_}$
 from $\adr{\_}$ carefully replacing nulls in atoms of $\adr{\theforest}$ with 
 {naturals} not bigger than $\aamaxint$.

Now, let us think about a relation ${\mathfrak R}\subseteq \omega{\mathbb A}^* \times \abstractatoms $.
Such relation assigns, to each element of $\omega{\mathbb A}^*$, a subset of the finite set  $\abstractatoms $. Recall however that we see elements of  $\abstractatoms$ as monadic predicates.
So ${\mathfrak R}$ 
can be seen as a tuple of  monadic predicate symbols. 

Now it is quite easy (albeit certainly not pleasant) to write a first order logic formula $\isAbsRep$,
whose signature consists of relations $\child{a}$ for $a\in \mathbb A$ (recall Definition \ref{def:address}) and of 
the monadic relation symbols from $\abstractatoms$, such that:
$$\omega{\mathbb A}^*\models \isAbsRep({\mathfrak R}) \;\; \;\; \text{if and only if}\;\;\;\; {\mathfrak R\;}\text{is correct}  $$

\noindent
Clearly, formula $\isAbsRep$ {depends on}  $\rs$ and the atom $\omega$.

\subsection{Defining \texorpdfstring{$\existsSensitivePath_{\omega}$}{the formula}}

It is hardly surprising  that formula  $\existsSensitivePath_{\omega}$ will have the form:
$$\exists \mathfrak{R}, \ppath \;\;\;\isAbsRep(\mathfrak{R}) \wedge \isFlower(\ppath) \wedge 
\existsSensitivePath^0_{\omega}$$
Where  $\existsSensitivePath^0_{\omega}$ will be a formula expressing that $\ppath$ satisfies 
statement 2 from Lemma \ref{th:mixed-to-properties-on-thetree}. Both 
$\existsSensitivePath^0_{\omega}$ and $\isFlower$
will 
use  function $\absadr{\_}$ as part of their vocabulary, assuming that it is correct. {Notice that also $\theforest$ is now definable, by a formula $\absadr{u} \neq \bot$, so we can have it in the vocabulary.
} Also, $\existsSensitivePath^0_{\omega}$ will use the unary predicate $\ppath$ as part of its vocabulary, assuming that it indeed represents an infinite path in $\theforest$.

In the following we will make use of the well known: 

\begin{observation}\label{form:infrastructure}
There exist  MSOL formulae:
\begin{itemize}
  \item $\isFlower(P)$ which holds \iffi  $P\subseteq\theforest$ is an infinite path;
  \item $\infinite(R)$ which holds \iffi $R$ is an infinite subset of $\theforest$;
  \item $u \earliereq w$ which holds \iffi $u$ is a prefix of $w$;
  \item $\ancestralPath(P, u, w)$ which holds \iffi  $u \earliereq w$ and $P$ is equal to $\singleton{v \in \theforest \mid u \!\earliereq\! v \!\earliereq\! w }$.
 \end{itemize}
\end{observation}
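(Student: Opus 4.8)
The plan is to treat Observation \ref{form:infrastructure} as a bundle of classical MSOL-definability facts over the finitely branching infinite tree $\omega\addresses^*$ (equipped with the child relations $\child{\rho_\iota}$, and with $\theforest$ available as the definable predicate $\absadr{u}\neq\bot$), so that the whole proof amounts to writing the four formulae down in an order where each reuses the previous ones. First I would introduce the auxiliary child relation
\[
\edge(a,b)\;:=\;\bigvee_{\rho\in\rs,\,\iota}\child{\rho_\iota}(a,b),
\]
a finite disjunction and hence first-order over the given vocabulary. The key structural fact I will rely on is that, since $\addresses$ is finite, every node of the tree has at most $|\addresses|$ children.

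Next I would define the prefix relation as the reflexive–transitive closure of $\edge$ by the standard monadic encoding
\[
u \earliereq w \;:=\; \Forall{X}\Big[\big(u\in X \wedge \Forall{a,b}\big((a\in X \wedge \edge(a,b))\to b\in X\big)\big)\to w\in X\Big],
\]
which is correct because $w$ lies in every $\edge$-closed set containing $u$ exactly when $w$ is reachable from $u$ by child steps, i.e. when $u$ is a prefix of $w$ (reflexivity being immediate). The relation $\ancestralPath$ then reads straight off $\earliereq$:
\[
\ancestralPath(P,u,w) \;:=\; u\earliereq w \,\wedge\, \Forall{v}\big(v\in P \leftrightarrow (u\earliereq v \wedge v\earliereq w)\big),
\]
and no relativisation to $\theforest$ is needed here, since $\theforest$ is prefix-closed, so any $v$ with $u\earliereq v\earliereq w$ for $u,w\in\theforest$ is itself in $\theforest$.

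I would then write $\isFlower(P)$ \emph{without} any appeal to infiniteness, to keep the order of definitions acyclic: $P\subseteq\theforest$ is nonempty, contains the root $\omega$, is a chain (any two elements $\earliereq$-comparable), and satisfies $\Forall{u\in P}\thinexists{w}(w\in P \wedge \edge(u,w))$. Because addresses have finite depth, a nonempty chain containing $\omega$ has $\omega$ as its minimum; the ``child in $P$'' clause simultaneously forbids gaps (the element just below a gap would have no child in $P$) and forbids a maximal element, so $P$ is forced to be exactly an infinite branch from $\omega$. Finally I would define $\infinite(R)$ by reducing to the existence of such a branch through the downward closure $v\in{\downarrow}R \iff \thinexists{w}(w\in R \wedge v\earliereq w)$, setting $\infinite(R):=\thinexists{P}\big(\isFlower(P)\wedge \Forall{v\in P}\thinexists{w}(w\in R \wedge v\earliereq w)\big)$, the bracketed conjunct asserting $P\subseteq{\downarrow}R$.

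The only step carrying real content—and thus the main obstacle—is the correctness of $\infinite(R)$. Its soundness direction is immediate (an infinite branch inside ${\downarrow}R$ forces ${\downarrow}R$, and hence $R$, to be infinite), but the completeness direction uses K\"onig's Lemma: since each node has at most $|\addresses|$ children, an infinite downward-closed subtree must contain an infinite branch. I will also need the two finiteness reductions linking $R$ and ${\downarrow}R$: a finite $R$ has finite downward closure because each address has finitely many ancestors, while $R\subseteq{\downarrow}R$ gives the converse, so $R$ is infinite iff ${\downarrow}R$ is. Everything else is a routine transcription of standard tree-MSOL folklore, and the definitions above are arranged so that each formula invokes only formulae defined before it.
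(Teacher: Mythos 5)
Your construction is correct, but there is nothing in the paper to compare it against step by step: the paper never proves Observation~\ref{form:infrastructure}, it introduces the four formulae with the phrase ``we will make use of the well known'' and treats them as part of the standard MSOL toolkit over infinite trees. Your proposal therefore does not diverge from the paper's argument so much as supply the folklore the paper cites: the prefix relation via the usual second-order transitive-closure encoding, $\ancestralPath$ as a direct rewriting of $\earliereq$, infinite paths as chains through the root in which every element has a child in the set, and $\infinite$ via downward closure plus K\"onig's Lemma --- legitimately applicable because $\addresses$ is finite, so $\omega\addresses^*$ is finitely branching; you are right that this last point is the only step with real mathematical content. One caveat: your formulae match the stated semantics only when their arguments are drawn from $\theforest$. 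In particular, your $\infinite(R)$ can be true for an infinite $R$ disjoint from $\theforest$ (take $R$ a set of non-forest addresses, one hanging below each node of an infinite branch of $\theforest$), whereas the observation demands falsity for every $R\not\subseteq\theforest$; similarly your $\ancestralPath(P,u,w)$ puts non-forest vertices into $P$ when $w\notin\theforest$. This is harmless where the paper actually uses these formulae (the hypothesis of $B_2$ already confines the relevant elements of $W$ to $\theforest$, and $\simeq^*_{i,j}$ is only invoked on forest addresses), and it is repaired by conjoining guards such as $\Forall{w}\big(w\in R \to \absadr{w}\neq\bot\big)$, which are expressible because $\theforest$ is definable from the guessed labelling. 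With those guards made explicit, your proof is a complete and correct discharge of the observation.
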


In order to express trigger blocking using an MSOL formula, we need to be able to say, for 
$u,v\in \theforest$, that $\adr{u}_i=\adr{v}_j$. Recall however, that we no longer have access to 
$\adr{\_}$, and we have $\absadr{\_}$ instead.\smallskip

\noindent
For each $i,j$ not greater than the maximal arity of predicates in $\mathfrak P$
   we define a binary relation $\simeq^0_{i,j}$  over elements of $\theforest$:
   
\begin{definition} 
    $u\rho_\iota \simeq^0_{i,j} u\rho_{\iota'}$ \;\iffi\; $\absadr{u\rho_\iota}_i = \absadr{u\rho_{\iota'}}_j$.
\end{definition}

Obviously, each $\simeq^0_{i,j}$ is first order definable, so our MSOL formulae can use it as part of their vocabulary. And (recall the third bullet of \cref{def:absadr}), 
$\absadr{u\rho_\iota}_i = \absadr{u\rho_{\iota'}}_j$ 
\iffi
$\adr{u\rho_\iota}_i = \adr{u\rho_{\iota'}}_j$, so formula $\simeq^0_{i,j}$ indeed expresses 
equality of respective terms in $\adr{u\rho_\iota}$ and $ \adr{u\rho_{\iota'}}$
\smallskip

Now, as our next step, we will define, again for 
each $i,j$ not greater than the maximal arity of predicates in $\mathfrak P$,
a  relation  $\simeq^*_{i,j}$ which is similar to
$\simeq_{i,j}^0$, but for ancestor/descendant pairs instead of pairs of siblings.
We prefer to first define it in terms of $\adr{\_}$:

\begin{definition}
   $u \simeq^*_{i,j} w$ \;\iffi\; $u \earliereq w$ and $\adr{u}_i = \adr{w}_j$
   \end{definition}

\begin{observation}\label{obs:term-equality-decoding-works-ancestral}
For $u,w\in \theforest$, such that $u$ is a prefix of $w$, the following two conditions are equivalent:
    \begin{itemize}
\item  $u \simeq^*_{i,j} w$
\item  $\absadr{u}_i = \absadr{w}_j$  and the abstract term $\absadr{u}_i$ 
 occurs in atom $\absadr{v}$, for each address $v$ on the path from $u$ to $w$ - meaning $v$ is such that $u \earliereq v \earliereq w$.
\end{itemize}
\end{observation}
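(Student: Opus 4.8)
The plan is to prove both implications by threading a single term, position by position, along the path $u \earliereq v \earliereq w$, exploiting the correctness of $\absadr{\_}$ from \cref{def:absadr}: at every trigger $\pair{\rho, v}$ the abstract labeling is isomorphic to $\adr{\_}$ on the local triple $\bigl(\adr{v}, \adr{v\rho_1}, \adr{v\rho_2}\bigr)$. So for each node $v$ on the path there is a position-respecting bijection $g_v$ between the terms occurring in $\adr{v}\cup\adr{v\rho_1}\cup\adr{v\rho_2}$ and those occurring in $\absadr{v}\cup\absadr{v\rho_1}\cup\absadr{v\rho_2}$. The key observation driving the induction is that, for a parent $v$ and a child $v\rho_\iota$ on the path, the atom $\absadr{v\rho_\iota}$ (and its real counterpart $\adr{v\rho_\iota}$) is shared by the triples at $v$ and at $v\rho_\iota$; since a position-respecting bijection between two single atoms is unique, $g_v$ and $g_{v\rho_\iota}$ restrict to the same map on the terms of $\adr{v\rho_\iota}$. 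These maps therefore glue along shared child atoms, allowing a real term and its abstract image to be carried consistently down the path.

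For the forward direction, assume $u \simeq^*_{i,j} w$, i.e. $\adr{u}_i = \adr{w}_j =: t$. By \cref{lem:local-terms-belong-to-a-border-trigger}(i), $t$ occurs in $\adr{v}$ for every $v$ with $u \earliereq v \earliereq w$. I would then show by induction along the path that the abstract image of $t$ is constant, equal to $n := \absadr{u}_i$: at the step from $v$ to a child $v\rho_\iota$ on the path, if $t$ sits at position $p$ in $\adr{v}$ and at position $p'$ in $\adr{v\rho_\iota}$, then $g_v$ maps $t$ to both $\absadr{v}_p$ and $\absadr{v\rho_\iota}_{p'}$, forcing $\absadr{v\rho_\iota}_{p'} = \absadr{v}_p = n$. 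Hence $n$ occurs in every $\absadr{v}$ on the path, and in particular $\absadr{w}_j = n = \absadr{u}_i$, which is exactly the abstract condition.

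For the backward direction, I am given $\absadr{u}_i = \absadr{w}_j =: n$ together with the hypothesis that $n$ occurs in $\absadr{v}$ for every intermediate $v$. Writing the path as $u = v_0, v_1, \ldots, v_k = w$, I would prove by induction that all occurrences of $n$ along the path decode to one and the same real term. At node $v_m$ the bijection $g_{v_m}$ has a unique preimage $r_m$ of $n$, so every position carrying $n$ in $\absadr{v_m}$ or in the child atom $\absadr{v_{m+1}}$ carries $r_m$ in the real labeling; because $n$ is assumed to occur in $\absadr{v_{m+1}}$, the preimages read off from the triggers at $v_m$ and at $v_{m+1}$ must agree on that shared atom, giving $r_m = r_{m+1}$. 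Setting $t := r_0 = \adr{u}_i$, we conclude $\adr{w}_j = t = \adr{u}_i$, so $u \simeq^*_{i,j} w$.

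The main obstacle is this backward direction. Since $\absadr{\_}$ is only a compressed, locally-isomorphic copy of $\adr{\_}$ — the same small natural may name genuinely distinct nulls in disjoint subtrees — knowing merely $\absadr{u}_i = \absadr{w}_j$ is far too weak to force equality of real terms. The hypothesis that $n$ occurs at \emph{every} intermediate node is precisely what stops the threading argument from jumping to a different null: it guarantees that the per-trigger preimages overlap on each shared child atom and therefore glue into a single global term. This is why both the statement and its proof rest on the intermediate-occurrence condition, and why \cref{lem:local-terms-belong-to-a-border-trigger}(i) is exactly the tool that supplies this condition in the converse direction.
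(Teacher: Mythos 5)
Your proposal is correct and follows the same route as the paper's (deliberately terse) proof, which reads in full: ``By induction, using Obs.~\ref{lem:local-terms-belong-to-a-border-trigger}(i) and \cref{def:absadr}.'' Your write-up is exactly the intended expansion of that line: induction along the path, with the per-trigger isomorphisms of \cref{def:absadr} glued on shared child atoms (via uniqueness of position-respecting bijections) to thread a single term, and Observation~\ref{lem:local-terms-belong-to-a-border-trigger}(i) supplying the intermediate-occurrence condition in the forward direction.
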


\begin{pr}
\hspace{-1mm} By induction, using Obs. \ref{lem:local-terms-belong-to-a-border-trigger}(i) and \cref{def:absadr}.\hspace{-4mm}\phantom{a}\qedhere
\end{pr}

\begin{observation}
    The relation $\simeq^*_{i,j}$ is MSO-expressible.
\end{observation}
\begin{prf}
    The formula expressing $u \simeq^*_{i,j} w$ is a disjunction over all abstract terms $n$:
    
    \smallskip
    \noindent
    $ \Exists{P} \absadr{u}_i \!=\! n \!=\! \absadr{w}_j \,\wedge\, \ancestralPath(P, u, w) \,\wedge\,  \Forall{v\!\in\!P} n \!\in\! \absadr{v}$ \qedhere
    
\end{prf}

Now recall \cref{lem:local-terms-belong-to-a-border-trigger}(ii):
\begin{definition}
   We define MSOL formula $ s \simeq_{i,j} u$  as a disjunction, 
   over all $i',j'$ 
      not greater than the maximal arity of predicates in $\mathfrak P$ of formulae:
$$\Exists{s',u'}\;\;  u' \simeq^*_{i',i} u \;\; \wedge \;\;  s' \simeq^*_{j',j} s \;\; 
\wedge \;\;
u' \simeq^0_{i',j'} s'
$$
\end{definition}

Combining all the above and \cref{lem:local-terms-belong-to-a-border-trigger}:
\begin{observation}\label{obs:simeq-works}
$ u \simeq_{i,j} s$ \;\iffi\; $\adr{u}_i = \adr{s}_j$.
\end{observation}
Finally, we are ready to express blocking of triggers:

\begin{lemma}\label{form:mso-blocking}
There exists an MSOL formula $\blocks_\rho(v_1, v_2, u)$ that holds \iffi
 $\pair{\rho, u}$ is a trigger blocked by the team $\bt{v_1, v_2}$.
\end{lemma}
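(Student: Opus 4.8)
The plan is to translate the characterization of blocking teams from Observation \ref{obs:trivial} directly into MSOL, assembling the pieces already built. Fix $\rho$ in the shape of Observation \ref{obs:trivial}, with head $\psi_1(\vy,\vz)\wedge\psi_2(\vy,\vz)$; since $\rho$ is one of the finitely many rules of \rs, its head predicates, its positions, and its frontier positions are all fixed finite data. The only genuinely semantic ingredients in the three conditions of Observation \ref{obs:trivial} are (a) the predicate symbols of the atoms involved and (b) equalities $\adr{\cdot}_i=\adr{\cdot}_j$ between terms at specified positions. Ingredient (a) is recoverable from the guessed labels, since by correctness $\absadr{w}$ carries the predicate symbol of $\adr{w}$; ingredient (b) is exactly what the relation $\simeq_{i,j}$ expresses, by Observation \ref{obs:simeq-works}. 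So the lemma is in essence a finite Boolean assembly of these.

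Concretely, I would first locate the twin nodes $u\rho_1,u\rho_2$ by existentially quantifying $w_1,w_2$ and requiring $\child{\rho_1}(u,w_1)$, $\child{\rho_2}(u,w_2)$, together with $w_1\in\theforest$ (definable as $\absadr{w_1}\neq\bot$). This last conjunct simultaneously certifies that $\pair{\rho,u}$ is a trigger: in the representation $u\rho_1\in\theforest$ holds exactly when $u\in\theforest$ and there is a homomorphism from $\body{\rho}$ to $\adr{u}$, and it then forces $w_2=u\rho_2\in\theforest$ as well. I would then conjoin the three conditions. Condition (i) becomes, for each $\iota$, the demand that $\absadr{v_\iota}$ carries the predicate of $\psi_\iota$, i.e. the finite disjunction $\bigvee_{\apred}\apred(v_\iota)$ over abstract atoms $\apred$ with that predicate; note this already forces $v_\iota\in\theforest$. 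Condition (ii) becomes, for each $\iota$ and each frontier position $j$ of $\psi_\iota$, the atom $w_\iota\simeq_{j,j}v_\iota$. Condition (iii) becomes, for each $\iota,\iota'$ and each position $j$ in $\psi_\iota$ and $j'$ in $\psi_{\iota'}$, the implication $(w_\iota\simeq_{j,j'}w_{\iota'})\Rightarrow(v_\iota\simeq_{j,j'}v_{\iota'})$. Every index ranges over a fixed finite set, so the whole formula $\blocks_\rho(v_1,v_2,u)$ is a finite Boolean combination of the $\simeq_{i,j}$ relations and the monadic abstract-atom predicates, hence MSOL.

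Correctness is then immediate: under a correct $\absadr{\_}$, each conjunct holds in $\omega\addresses^*$ precisely when the corresponding clause of Observation \ref{obs:trivial} holds (using Observation \ref{obs:simeq-works} to read $\simeq_{i,j}$ as equality of the actual terms of $\adr{\_}$), while the $w_\iota$-clause guarantees we are talking about a genuine trigger with twin atoms $\adr{u\rho_\iota}$. I do not expect a serious obstacle here: the hard work — expressing an equality $\adr{u}_i=\adr{s}_j$ between arbitrarily distant nodes through the compressed labeling — was already done in constructing $\simeq_{i,j}$ from $\simeq^0$, $\simeq^*$, and Observation \ref{lem:local-terms-belong-to-a-border-trigger}. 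The only point demanding care is bookkeeping of indices: ensuring that condition (ii) uses the same position $j$ on both twin and team member while condition (iii) allows distinct $j,j'$ across the two twins, and that the predicate clause for each $v_\iota$ matches the head predicate of $\psi_\iota$ rather than of the body atom of $\rho$.
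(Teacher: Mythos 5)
Your proposal is correct and takes essentially the same approach as the paper's proof: reduce blocking to the three conditions of Observation~\ref{obs:trivial}, recover predicate symbols from the correct abstract labeling $\absadr{\_}$ for condition (i), and express the term equalities of conditions (ii)--(iii) as a finite Boolean combination of the relations $\simeq_{i,j}$, justified by Observation~\ref{obs:simeq-works}. Your explicit existential quantification of the twin nodes $u\rho_\iota$ and the accompanying trigger check merely spell out details the paper leaves implicit.
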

\begin{prf}
Let $\rho$ be as in \cref{obs:trivial}. From that observation, it is enough to  show that one can express in MSOL the following conditions:
\begin{enumerate}[(i)]
    \item the predicate symbols of  $\adr{v_\iota}$ and of $\adr{u\rho_\iota}$ are equal;
    \item if $j$ is a frontier position in $\psi_\iota$ then $\adr{u\rho_\iota}_j= \adr{v_\iota}_j$;
    \item if $j$ is any  position in $\psi_\iota$ and $j'$ is any  position in $\psi_{\iota'}$ and $\adr{u\rho_\iota}_j=\adr{u\rho_{\iota'}}_{j'}$ then also $\adr{v_\iota}_j=\adr{v_{\iota'}}_{j'}$.
\end{enumerate}
(i) can be expressed as $\adr{w}$ and $\absadr{w}$ are isomorphic for any address, due to $\absadr{\_}$ correctness.
By \cref{obs:simeq-works} the statements (ii)--(iii) are trivially expressible using $\simeq_{i,j}$.
\end{prf}

This ends the difficult part of the proof; what is left should be a mere formality.
Consider some auxiliary formulae:
\begin{itemize}
    \item$\pair{\rho, u} \in \depriv \;=\; \Exists{u'\in \ppath} \rho_1(u,u') \lor \rho_2(u,u')$

    \item $\pblocks_\rho(v_1, v_2, u) = \pair{\rho, u} \in \depriv \;\land\; \blocks_\rho(v_1, v_2, u).$
    \item $\areTwins(u,v) \;=\;  
    \exists{w\iota} \bigvee_{\rho \in \rs} \rho_\iota(w, u) \land \rho_{3 - \iota}(w, v)$
    \item$u \in \deprivforest \; = \; u \in \ppath \;\lor\; \Exists{w} w \in \ppath \land \areTwins(u, w)$
    \item$v \in \preSet(u)= \Exists{v'} \areTwins(v, v') \land (v \earliereq u \lor v' \earliereq u) \lor \phantom{a}\quad\quad\,\quad\quad\quad \neg\Exists{w} w \earlier v.$
    \item $v \in \freeSet(u) = v \not\in \deprivforest \;\land\; u \not\earliereq v$
\end{itemize}

\noindent
Finally \quad
$\existsSensitivePath^0_{\omega} \quad=\quad \forall{u \in \ppath}\; B_1(u) \land B_2(u)$, \quad where

\medskip
\noindent
$B_1(u)= \neg \Exists{vv'\!\in\! \preSet(u)} \bigvee_{\rho\in\rs}\pblocks_{\rho}(v,v',u).$
\vspace{-2mm}%
\begin{align*}
&\phantom{a}\hspace{-4.5mm}B_2(u) =&&\; \Forall{W} \Big( \Forall{w \in W}\;\; u \in \preSet(w) \;\land\; \\&& &\Exists{v,s\in \freeSet(u) \cup \preSet(w)}  \bigvee_{\rho\in\rs}\pblocks_{\rho}(v,s,w) \Big)\\&& &\Rightarrow \neg\infinite(W).
\end{align*}

\subsection{Proof of \cref{lem:skonczone}}

\begin{customlem}{\ref{lem:skonczone}}
The sets $\harma(v)$ and $\harmb(v,s)$ are finite for all $v,s\in \freeSet(u)\cup \preSet(u)$.
\end{customlem}
\begin{prf} 
Follows from the fact that addresses involved in a blocking team must be created after the trigger they block in an $\omega$-sensitive-path-derivation. Thus, $\harma(v)$ and $\harmb(v,s)$ can only contain addresses $w$ such that $\pi_w$ occurs in $\demriv$ before the triggers creating $v$ and $s$.
\end{prf}

\end{document}